\documentclass[acmsmall,nonacm]{acmart}
\usepackage[utf8]{inputenc}
\raggedbottom\sloppy
\pdfoutput=1

\usepackage{amsthm}
\usepackage{mathrsfs}
\usepackage{framed}
\usepackage{hyperref}
\usepackage{subcaption}
\usepackage{algorithm}
\usepackage[noend]{algpseudocode}
\usepackage{xcolor}
\usepackage{mathtools}
\usepackage{thm-restate}
\usepackage{wrapfig}
\usepackage{subcaption}
\usepackage{multicol}
\usepackage{wasysym}
\usepackage{pifont}

\setlength{\intextsep}{5pt}%

\usepackage{tikz}
\usetikzlibrary{positioning, decorations.pathmorphing, matrix}

\newtheorem{theorem}{Theorem}

\newtheorem{lemma}[theorem]{Lemma}

\newtheorem{definition}[theorem]{Definition}

\newtheorem{proposition}[theorem]{Proposition}
\newtheorem{corollary}[theorem]{Corollary}

\definecolor{pinegreen}{HTML}{00A64F}
\definecolor{col1}{HTML}{00A64F}
\definecolor{col2}{HTML}{0071BC}
\definecolor{col3}{HTML}{613F99}
\definecolor{col4}{HTML}{F26035}
\DeclarePairedDelimiter{\parens}{(}{)}

\DeclarePairedDelimiter{\braces}{\{}{\}}
\DeclarePairedDelimiter{\tup}{\langle}{\rangle}
\newcommand{\seqstate}[1]{{\text{state}\parens{#1}}}
\newcommand{\mem}[1]{{\text{mem}\parens{#1}}}

\newcommand{\wildcard}{\_}
\newcommand{\refl}[1]{~\hyperref[#1]{\ref*{#1}L}}
\newcommand{\localind}[1]{\overset{#1}{\sim}}

\newcommand{\ha}[1]{#1}
\newcommand{\ns}[1]{#1}

\newcommand{\calC}{\mathcal{C}}
\newcommand{\calO}{\mathcal{O}}
\newcommand{\obj}{\mathit{obj}}

\newcommand{\Readop}{o_{read}}
\newcommand{\Changeop}{o_{change}}

\newcommand{\set}[1]{\left\{ #1 \right\} }
\newcommand{\CAS}{\textsc{CAS}}
\newcommand{\LL}{\textsc{LL}}
\newcommand{\SC}{\textsc{SC}}
\newcommand{\RL}{\textsc{RL}}
\newcommand{\VL}{\textsc{VL}}
\newcommand{\Ld}{\textsc{Load}}
\newcommand{\St}{\textsc{Store}}

\newcommand{\enq}{\textsc{Enqueue}}
\newcommand{\deq}{\textsc{Dequeue}}
\newcommand{\peek}{\textsc{Peek}}
\newcommand{\Read}{\textsc{Read}}
\newcommand{\Write}{\textsc{Write}}
\newcommand{\can}[1]{\mathit{can}\parens{#1}}

\newcommand{\flag}{\mathit{flag}}
\newcommand{\lastval}{\mathit{last}\mhyphen{}\mathit{val}}

\mathchardef\mhyphen="2D

\makeatletter
\newcommand{\labeltext}[3][]{%
    \@bsphack%
    \csname phantomsection\endcsname
    \def\tst{#1}%
    \def\labelmarkup{}
    \def\refmarkup{}%
    \ifx\tst\empty\def\@currentlabel{\refmarkup{#2}}{\label{#3}}%
    \else\def\@currentlabel{\refmarkup{#1}}{\label{#3}}\fi%
    \@esphack%
    \labelmarkup{#2}
}
\makeatother

\setlength{\textfloatsep}{0.1cm}
\setlength{\floatsep}{0.1cm}

\newcommand{\remove}[1]{}

\title{History-Independent Concurrent Objects}

\begin{document}

\author{Hagit Attiya}
\affiliation{%
  \institution{Technion-Israel Institute of Technology}
  \country{Israel}
  \city{Haifa}}
\email{hagit@cs.technion.ac.il}

\author{Michael A. Bender}
\affiliation{%
  \institution{Stony Brook University}
  \country{USA}}
\email{bender@cs.stonybrook.edu}

\author{Martin Farach-Colton}
\affiliation{%
  \institution{New York University}
  \country{USA}}
\email{martin@farach-colton.com}

\author{Rotem Oshman}
\affiliation{%
  \institution{Tel-Aviv University}
  \country{Israel}}
\email{roshman@tau.ac.il}

\author{Noa Schiller}
\affiliation{%
  \institution{Tel-Aviv University}
  \country{Israel}}
\email{noaschiller@mail.tau.ac.il}

\begin{abstract}
    A data structure is called \emph{history independent} if its internal memory representation does not reveal the history of operations applied to it, 
    only its current state.
    In this paper we study history independence for concurrent data structures, 
    and establish foundational possibility and impossibility results.
    We show that a large class of concurrent objects cannot be implemented from
    smaller base objects in
    a manner that is both wait-free
    and
    history independent;
    but if we settle
    for either lock-freedom instead of wait-freedom
    or for a weak notion of history independence,
    then at least one object in the class,
    multi-valued single-reader single-writer registers, \emph{can}
    be implemented from smaller base objects,
    binary 
    registers.

    On the other hand, using large base objects,
    we give a strong possibility result in the form of a universal construction:
    an object with $s$ possible states can be implemented
    in a wait-free, history-independent manner
    from compare-and-swap base objects
    that each have $O(s + 2^n)$ possible memory states,
    where $n$ is the number of processes in the system.
\end{abstract}

\maketitle

\section{Introduction}
\label{sec:intro}

A data structure is said to be \emph{history independent} (HI) if its internal
representation reveals nothing 
about the history of operations that have been applied to it, 
beyond the current state of the data structure.
For example, if a set is history independent, its internal representation may 
(and must, for correctness) reveal the elements that are currently in the set, 
but it must not reveal elements that were previously inserted and then removed.

The notion of history independence was introduced by 
Micciancio~\cite{Micciancio97}, 
who showed how to build a search tree with a history-independent structure.
Naor and Teague~\cite{NaorTe01} formalized two now-classical
notions of history independence:
a data structure is \emph{weakly history independent} (WHI) 
if it leaks no information to an observer
who sees the memory representation once, 
and it is \emph{strongly history independent} (SHI) 
if it leaks no information even to an observer who 
sees the memory representation at multiple points in the execution.
These notions differ significantly: for example, 
a set where each item inserted is stored at a freshly-chosen
random location in memory may be weakly HI
but not strongly HI, because if an item is inserted, removed,
and then inserted again, it may be placed in different locations each time it is inserted;
an observer who sees the memory after each of the two insertions would know that
the item was removed and re-inserted.

History independence has been extensively studied in sequential 
data structures (see below),
and the foundational algorithmic work on history independence has 
found its way into systems like voting machines and storage.
However, history independence was studied only peripherally 
in \emph{concurrent} data structures. 
This paper initiates a thorough study of history-independent 
concurrent data structures,
and establishes fundamental possibility and impossibility results.

We focus on a concurrent notion of strong history independence
for deterministic data structures,
and characterize the boundaries of what can be achieved.
Defining history independence for concurrent objects is non-trivial, 
because the sequential definition of history independence allows 
the observer to examine the memory only \emph{in-between} 
operations---that is, in a quiescent state---while 
a concurrent implementation might \emph{never be} in a quiescent state.
One of our main questions is whether it should be permissible for 
the observer to open the ``black box'' of a single operation 
and inspect the memory when the system is not quiescent,
and what are the implications of this choice in terms of what can be 
implemented concurrently in a history independent manner.


We begin by asking whether a concurrent object $A$ can be implemented 
out of ``smaller'' base objects of type $B$ 
in a manner that is history independent;
here, ``smaller'' means that $B$ has fewer states than $A$.
Our motivating example is the famous implementation of a 
multi-valued single-writer single-reader wait-free register 
from Boolean registers~\cite{VIDYASANKAR1988287}.
We observe that this implementation is not history independent even 
in the weakest sense (see Section~\ref{section: swsr example}),
and in fact, there is a good reason for this:
we prove that for a fairly general class of objects,
which includes read/write registers,
there is no
wait-free history independent implementation out of smaller base objects, regardless of the type of the smaller objects.
This result holds even if the observer can only inspect the memory 
when there are \emph{no state-changing operations pending} (but read operations may be ongoing).
If the observer can inspect the memory at any point (including while state-changing operations are ongoing),
then even a lock-free implementation is impossible.
On the other hand, for multi-valued single-writer single-reader registers,
there is a wait-free history independent implementation from binary registers,
if we restrict the observer's inspections to points where the system is completely quiescent.
While our results are stated for deterministic algorithms, our impossibility result also applies to randomized implementations of \emph{reversible objects}, which are objects where every state can be reached from every other state (see Section~\ref{sec:prelim}).

Since it is impossible to implement objects out of smaller objects in a wait-free, history independent manner (except possibly in the weakest, quiescent sense), we turn to concurrent implementations where the base object $B$ is large enough to store the full state of the abstract object $A$ that we want to implement.
For this regime we give a strong possibility result, in the form of a \emph{universal implementation}
from compare-and-swap (CAS) objects:
we show that any object $A$ can be implemented in a wait-free, history independent manner from sufficiently-large CAS objects. 
Our implementation 
reveals nothing about past states of the object or operations that completed
prior to the invocation of any currently-pending operations;
and when no state-changing operation is pending,
the state of the memory reflects only the current abstract state of the object.
Our implementation uses an extended version of an LL/SC object, 
inspired by~\cite{jayanti_et_al:LIPIcs.DISC.2023.25,IsraeliRappoportPODC94}, which we implement from atomic $\CAS$.

\subsubsection*{Additional Related Work.}
Hartline et al.~\cite{HartlineHoMo05,HartlineHoMo02} 
showed that a data structure with a strongly-connected state graph 
is strong HI if and only if 
each state of the state graph has a unique \emph{canonical} 
representation.
We rely on a similar characterization 
for both possibility and impossibility results.

There is a large literature on figuring out which data structures 
can be made history independent without an asymptotic slow down.
These results include 
fast HI constructions for cuckoo hash tables~\cite{NaorSeWi08}, 
linear-probing hash tables~\cite{BlellochGo07,GoodrichKoMi17}, 
other hash tables~\cite{BlellochGo07,NaorTe01}, 
trees~\cite{Micciancio97,AcarBlHa04},
memory allocators~\cite{NaorTe01,GoodrichKoMi17}, 
write-once memories~\cite{MoranNaSe07},
priority queues~\cite{BuchbinderPe03}, 
union-find data structures~\cite{NaorTe01}, 
external-memory dictionaries~\cite{Golovin08,Golovin09,Golovin10,BenderBeJo16}, 
file systems~\cite{BajajSi13b, BajajSi13a,BajajChSi15, RocheAvCh15},
cache-oblivious dictionaries~\cite{BenderBeJo16}, 
order-maintenance data structures~\cite{BlellochGo07},
packed-memory arrays/list-labeling data structures~\cite{BenderBeJo16,BenderCoFa22},
and geometric data structures~\cite{Tzouramanis12}.
Given the strong connection between history independence and unique representability~\cite{HartlineHoMo02,HartlineHoMo05}, 
some earlier data structures 
can be made history independent, 
including 
hashing variants~\cite{AmbleKn74,CelisLaMu85}, 
skip lists~\cite{Pugh90}, treaps~\cite{AragonSe89}, 
and other less well-known deterministic data structures 
with canonical representations~\cite{SundarTa90,AnderssonOt91,AnderssonOt95,PughTe89,Snyder77}. 

The foundational algorithmic work on history independence has 
found its way into systems.
There are now voting machines~\cite{BethencourtBoWa07}, 
file systems~\cite{BajajSi13b,BajajChSi16,BajajChSi15}, 
databases~\cite{BajajSi13a,PoddarBoPo16,RocheAvCh16},
and other storage systems~\cite{ChenSi15} that support 
history independence as an essential feature.

To the best of our knowledge, 
the only prior work to consider history independence in concurrent implementations
is by Shun and Blelloch~\cite{ShunBlellochSPAA14}.
They implement a concurrent hash table, 
based on a sequential SHI hash table~\cite{NaorTe01, BlellochGo07}, 
in which only operations of the same type can be executed concurrently. 
The implementation of~\cite{ShunBlellochSPAA14} guarantees 
that if there are no ongoing insert or delete operations,
each state of the hash table has a unique canonical representation in memory.
This work does not provide a formal definition of history independence for concurrent implementations, 
and does not support concurrent operations of different types.

Our universal implementation draws inspiration from prior universal implementations, 
using CAS~\cite{HerlihySIG90}, hardware LL/SC~\cite{HerlihyLLSC93}, 
and consensus objects~\cite{HerlihyWaitFreeSynch}.
These implementations are not history independent:
the implementation in~\cite{HerlihyWaitFreeSynch} explicitly keeps tracks 
of all the operations that have ever been invoked,
while the implementations in~\cite{HerlihySIG90,HerlihyLLSC93} store information 
that depends on the sequence of applied operations.
Moreover, they use dynamic memory, and allocate new memory every time the state of the object is modified, which risks revealing information about the history of operations.
While there is work on sequential history-independent memory allocation (e.g.,~\cite{NaorTe01}), 
to our knowledge, no \emph{concurrent} history-independent memory allocator is known.

Fatourou and Kallimanis~\cite{FatourouKallimanisSPAA11}
give a universal implementation from hardware LL/SC,
where the full state of the object, 
along with additional information, is stored in a single memory cell.
Our history-independent universal implementation bears some similarity to~\cite{FatourouKallimanisSPAA11},
but their implementation keeps information about completed operations, such as their responses,
and is therefore not history independent.
Clearing this type of information from memory so as not to reveal completed operations is non-trivial,
and we address this in our implementation.

\section{Preliminaries}
\label{sec:prelim}

\subsubsection*{Abstract Objects.}
An \emph{abstract object} is defined by a set of states and a set of operations, 
each of which may change the state of the object and return a response.
Formally, an abstract object $\calO$ is a tuple $(Q, q_0, O, R, \Delta)$, 
where $Q$ is the object's set of states, $q_0\in Q$ is a designated initial state, $O$ is the object's set of operations,
$R$ is the object's set of responses and $\Delta : Q \times O \to Q \times R$ is the function specifying the object behavior, known as the \emph{sequential specification} of the object.
We assume that the abstract object is deterministic, i.e., the function $\Delta$ is deterministic.
We also assume that all states in $Q$ are reachable from the initial state $q_0$.

A \emph{sequential implementation} specifies how each operation should be concretely implemented in memory. The object's abstract state is represented in memory using some \emph{memory representation}, and the implementation specifies how each operation should modify this memory representation when it is applied.
We note that in general, a single state $q \in Q$ may have multiple possible memory representations associated with it;
for example, if we implement a set (i.e., a dictionary) using a balanced search tree,
then the abstract state of the object
consists of the contents of the set,
but
the layout in memory may also depend on the order of insertions, etc.


\subsubsection*{Sequential History Independence.}
A \emph{history-independent} implementation 
is one where the memory representation of the object
reveals only its current abstract state,
and not the sequence of operations that have led to that state.

\begin{definition}[Weak History Independence (WHI)~\cite{NaorTe01}]
     A sequential implementation of an object is \emph{weakly history independent} if any pair of operation sequences that take the object from the initial state to the same state, induce the same distribution on the memory representation.
\end{definition}
Intuitively, WHI assumes the adversary examines the memory representation only once. If an adversary can gain access to the memory representation multiple times along the sequence of operations, assuming WHI is not enough to conceal the sequence of operations. 
Hence, a stronger definition is required, which considers an adversary that may access the memory representation at multiple points during the execution.
\begin{definition}[Strong History Independence (SHI)~\cite{NaorTe01}]
     A sequential implementation of an object is \emph{strongly history independent} if for any pair of operation sequences $o_1^1, \dots, o_{l_1}^1$ and $o_1^2, \dots, o_{l_2}^2$, and two lists of points $P_1 = \{ i_1^1, \dots, i_{\ell}^1\}$ and $P_2 = \{ i_1^2, \dots, i_{\ell}^2\}$, such that for all $b\in \{1,2\}$ and $1\leq j \leq \ell$ we have that $1\leq i_j^b \leq l_b$ and the two operations' prefixes $o_1^1, \dots, o_{i^1_j}^1$ and $o_1^2, \dots, o_{i^2_j}^2$ take the object from the initial state to the same state, then the distributions of the memory representations at the points of $P_1$ and the corresponding points of $P_2$ are identical.
\end{definition}

When deterministic implementations are concerned, both versions of history independence---weak and strong history independence---converge,
since in the absence of randomness it does not matter whether the adversary examines the memory only at one point, or whether it may examine the memory at multiple points in the execution.  

One way to achieve history independence is to ensure that whenever the object is the same abstract state, the memory representation is the same.
Implementations that have this property are called \emph{canonical}:
every abstract state $q \in Q$ corresponds to exactly one memory representation, $\can{q}$,
which we call its \emph{canonical memory representation}, and
every sequence of operations that leads the object to state $q$
must leave the memory in state $\can{q}$.
A canonical sequential implementation is, by definition, strongly history independent,
but when randomization is allowed, not every strongly history-independent implementation must be canonical.
Nevertheless, it was shown in~\cite{HartlineHoMo02,HartlineHoMo05}
that for \emph{reversible} objects,
which are objects where every state is reachable from every other state,%
\footnote{For example, a register is reversible, while an increment-only counter is not.}
the use of randomization is somewhat limited:~\cite{HartlineHoMo02,HartlineHoMo05} show that strong history independence requires that the implementation
must fix a canonical representation when the object is initialized,
meaning the randomness cannot affect the memory representation
except to fix the mapping from states to their canonical representations at the very beginning of the execution.





As we said above, for deterministic implementations,
weak and strong history independence coincide;
both require the implementation to be canonical:

\begin{proposition}
    For deterministic sequential implementations, WHI and SHI are equivalent to  
    requiring that a unique canonical memory representation is determined for each state at initialization.
\end{proposition}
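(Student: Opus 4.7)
The plan is to prove the equivalence via the cycle of implications \emph{canonical} $\Rightarrow$ SHI $\Rightarrow$ WHI $\Rightarrow$ \emph{canonical}. The middle implication is immediate from the definitions, since WHI is just the $\ell=1$ case of SHI, so the real content lies in the two outer directions.

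For \emph{canonical} $\Rightarrow$ SHI: if every state $q$ has a unique memory representation $\can{q}$, then whenever two operation prefixes, possibly drawn from different sequences, both take the object from $q_0$ to the same state $q$, the memory at those points must be exactly $\can{q}$. Hence the induced distributions over memory representations at any pair of corresponding inspection points of $P_1$ and $P_2$ are both point masses on $\can{q}$, and so they trivially coincide. This yields SHI, and hence WHI.

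For WHI $\Rightarrow$ \emph{canonical}: here we crucially use determinism. Fix any state $q \in Q$. Since every state is reachable from $q_0$, there is some sequence $\sigma_q$ of operations leading $q_0$ to $q$, and because the implementation is deterministic its execution yields a single concrete memory representation $m_q$ (with no randomness, the induced ``distribution'' is a point mass at $m_q$). By WHI, every other operation sequence $\sigma'_q$ leading $q_0$ to $q$ induces the same distribution on memory representations, and again by determinism the outcome of $\sigma'_q$ must equal $m_q$. Setting $\can{q} := m_q$ therefore yields a well-defined canonical representation for each state, determined entirely by the (deterministic) implementation at initialization.

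The main, and essentially only, subtlety is to observe that determinism collapses the ``distributions'' appearing in the WHI and SHI definitions to point masses, which is precisely what makes even WHI as strong as being canonical. Without determinism the excerpt has already noted that the picture is more delicate: for reversible objects, SHI still forces a canonical mapping but only after a single randomized choice made at initialization, while WHI alone need not imply the canonical property at all.
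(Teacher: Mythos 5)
Your proof is correct and follows exactly the route the paper intends: the paper states this proposition without a formal proof, relying on the surrounding observation that determinism collapses the distributions in the WHI/SHI definitions to point masses, which is precisely the cycle canonical $\Rightarrow$ SHI $\Rightarrow$ WHI $\Rightarrow$ canonical that you spell out. Your use of the reachability assumption to define $\can{q}$ for every state, and your remark about why the equivalence fails under randomization, both match the paper's framing.
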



Since we consider only deterministic implementations in this paper, we do not distinguish between SHI and WHI, and refer to them simply as \emph{HI}.


\subsubsection*{The Asynchronous Shared-Memory Model.}
We use the standard 
model in which $n$ processes, $p_1, \dots, p_n$, communicate through shared \emph{base objects}.
An implementation of an abstract object $\calO$ specifies 
for each process a program for every operation in $O$.
When receiving an \emph{invocation} of an operation $o\in O$, 
process $p_i$ takes \emph{steps} according to this program. 
Each step by process $p_i$ consists of some local computation 
and a single primitive operation on a \emph{base object}. 
The process possibly changes its local state after a step and 
possibly returns a \emph{response} to a higher-level operation.

A \emph{configuration} $C$ specifies the state of every process 
and the state of every base object.
The \emph{initial configuration} is denoted by $C_0$,
and we assume that it is unique.
An \emph{execution} $\alpha$ is an alternating sequence of configurations and steps.
An execution can be finite or infinite.
Given two executions $\alpha_1, \alpha_2$,
where $\alpha_1$ ends at configuration $C$
and $\alpha_2$ begins at configuration $C$,
we denote by $\alpha_1 \alpha_2$ the concatenation of the two executions, and we say that $\alpha_1 \alpha_2$
\emph{extends} execution $\alpha_1$.


The \emph{memory representation} of a configuration $C$, denoted $\mem{C}$,
is a vector specifying the state of each base object; 
note that this does not include local private variables held by each process, only the shared memory.
Formally, if there are $m$ base objects with state spaces $Q_1,\ldots,Q_m$ respectively,
then $\mem{C}$ is a vector in $Q_1 \times \ldots \times Q_m$,
and we denote by $\mem{C}[i]$ the state of the $i$-th base object.
For a finite execution $\alpha$, we let $\mem{\alpha}$ denote the memory representation in the last configuration in the execution $\alpha$.

In this paper we frequently make use of
two types of base objects: the first is a simple \emph{read/write register},
and the second  is a  \emph{compare-and-swap (CAS)} object.
A $\textsc{CAS}$ object $X$ supports the operation $\textsc{CAS}(X, \mathit{old}, \mathit{new})$,
which checks if the current value of the object is $\mathit{old}$, and if so, replaces it by $\mathit{new}$ and returns \textit{true};
otherwise, the operation leaves the value unchanged,
and returns \textit{false}.
We assume that the $\textsc{CAS}$ object supports 
standard read and write operations.
For both read/write registers and $\textsc{CAS}$ objects,
the \emph{state} of the object is simply the value stored in it. 

An execution $\alpha$ induces a \emph{history} $H(\alpha)$, 
consisting only of the invocations and responses of higher-level operations.
An invocation \emph{matches} a response if they both belong to the same operation.
An operation \emph{completes} in $H$
if $H$ includes both the invocation and response of the operation;
if $H$ includes the invocation of an operation, but no matching response, 
then the operation is \emph{pending}.
If $\alpha$ ends with a configuration $C$, 
and there is no pending operation in $\alpha$, 
then $C$ is \emph{quiescent}.

A history $H$ is \emph{sequential} if every invocation is immediately 
followed by a matching response.
For a sequential history $H$, 
let $\seqstate{H}\in Q$ be the state reached by applying 
the sequence of operations implied by $H$ from the initial 
state according to $\Delta$. 
This state is well-defined, since the sequence of operations 
is well-defined in a sequential history.

\subsubsection*{Linearizability.}
A \emph{completion} of history $H$ is a history $H'$ whose prefix 
is identical to $H$, and whose suffix includes zero or more responses 
of pending operations in $H$. 
Let $\mathrm{comp}(H)$ be the set of all of $H$'s completions. 
A sequential history $H'$ is a \emph{linearization} of an execution $\alpha$ that arises from an implementation of an abstract object $\calO$ if:
\begin{itemize}
    \item  $H'$ is a permutation of a history in $\mathrm{comp}(H(\alpha))$,
    \item $H'$ matches the sequential specification of $\calO$, and 
    \item $H'$ respects the real-time order of non-overlapping operations in $H(\alpha)$.
\end{itemize}
An execution $\alpha$ is \emph{linearizable}~\cite{HerlihyWing90} if it has a linearization, and an implementation of an abstract object is linearizable 
if all of its executions are linearizable.
A \emph{linearization function} $h$ maps an execution $\alpha$
to a sequential history $h(\alpha)$ that is a linearization of $\alpha$.

\subsubsection*{Progress Conditions.}
An implementation is \emph{lock-free} if there is a pending operation, 
then some operation returns in a finite number of steps.
An implementation is \emph{wait-free} if there is a pending operation by process $p_i$, 
then this operation returns in a finite number of steps by process $p_i$.

\section{History Independence for Concurrent Objects}

As we noted in Section~\ref{sec:intro}, when defining history independence for concurrent objects, we must grapple with the fact that a concurrent system might never be in a quiescent state. In Section~\ref{sec:gen-lb} we prove that if we allow the internal memory to be observed at \emph{any} point in the execution,
then there is a strong impossibility result ruling out even lock-free implementations of a wide class of objects.
This motivates us to consider weaker but more feasible definitions, where the observer may only examine the internal memory at certain points in the execution.

The following definition provides a general framework for defining a notion of history independence that is parameterized by the points where the observer is allowed to access the internal memory;
these points are specified through a set of finite executions, and the observer may access the memory representation only at the end of each such finite execution.
Informally, the definition requires that at any two points where the observer is allowed to examine the internal memory, if the object is in the same state, then the memory representation must be the same;
we use a linearization function to determine what the ``state'' of the object is at a given point.

\begin{definition}
\label{def:genperfectSHI}
    Consider an implementation of an abstract object with a linearization function $h$, and let
    $E$ be a set of finite executions that arise from the implementation.
    The implementation is \emph{HI with respect to $E$} if for any pair of executions $\alpha, \alpha'\in E$ such that $\seqstate{h(\alpha)} = \seqstate{h(\alpha')}$, then $\mem{\alpha} = \mem{\alpha'}$.
    An implementation of an abstract object is \emph{HI with respect} to $E$, if it is HI with respect to $E$ for some linearization function $h$.
\end{definition}


To prove that an implementation satisfies Definition~\ref{def:genperfectSHI},
it suffices to prove that 
for every finite execution $\alpha$ in the 
set $E$,
if $E$ ends with the object in state $q$ according to the linearization function $h$,
then the memory is in the canonical memory representation $\can{q}$ for state $q$.




The strongest form of history independence that one might ask for is one that allows the observer to examine the memory at any point in the execution:
\begin{definition}
\label{def:perfectSHI}
    An implementation of an abstract object is \emph{perfect HI}, if the implementation is HI with respect to the set containing all finite executions of the implementation.
\end{definition}

Perfect HI imposes a very strong requirement on the implementation: intuitively, any
two adjacent high-level states
must have adjacent memory representations.
Formally, 
we say that the \emph{distance} between
two memory representations $\mathit{mem}_1, \mathit{mem}_2$
is $d$ if 
there are exactly $d$ indices $i \in [m]$ where $\mathit{mem}_1[i] \neq \mathit{mem}_2[i]$ (recall that $m$ is the number of base objects used in the implementation).
The following proposition
holds for \emph{obstruction-freedom},\footnote{
    An implementation is \emph{obstruction-free} if an operation by process $p_i$ returns 
    in a finite number of steps by $p_i$, if $p_i$ runs solo.} 
a progress guarantee even weaker than lock-freedom and wait-freedom.

\begin{restatable}{proposition}{pHI}
\label{prop:p-hi}
In any obstruction-free
perfect HI implementation of an abstract object with state space $Q$,
for any $q_1 \neq q_2 \in Q$
such that $q_2$ is reachable from $q_1$ in a single operation,
the distance between $\can{q_1}$
and $\can{q_2}$ is at most 1.
\end{restatable}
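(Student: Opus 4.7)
The plan is to construct a solo execution starting from a quiescent configuration whose abstract state is $q_1$, and then to locate the single step at which the linearized state switches from $q_1$ to $q_2$; perfect HI will then pin down the memory representations immediately before and after that step to be $\can{q_1}$ and $\can{q_2}$ respectively, and since a single step touches at most one base object, these representations must differ in at most one coordinate.

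First, I would fix a linearization function $h$ witnessing perfect HI and produce a quiescent finite execution $\beta$ whose abstract state is $q_1$: since every state in $Q$ is reachable from $q_0$, and the implementation is linearizable and obstruction-free, such an execution is obtained by having a single process sequentially invoke (and, by solo obstruction-freedom, complete) the operations along some path from $q_0$ to $q_1$. Perfect HI then gives $\mem{\beta}=\can{q_1}$, since $\seqstate{h(\beta)}=q_1$.

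Next, since $q_2$ is reachable from $q_1$ in one operation, I would choose $o\in O$ with $\Delta(q_1,o)=(q_2,r)$ for some response $r$, and have some process $p$ invoke $o$ from the quiescent configuration ending $\beta$ and run solo. By obstruction-freedom $o$ returns after finitely many steps; let $\alpha_j$ denote $\beta$ extended by the first $j$ steps of $p$, so that $\alpha_0=\beta$ and $\alpha_k$ ends with the response of $o$. Linearizability forces $\seqstate{h(\alpha_k)}=q_2$ while $\seqstate{h(\alpha_0)}=q_1$. The key observation is that $\seqstate{h(\alpha_j)}\in\{q_1,q_2\}$ for every $j$: because $\beta$ is quiescent, every previously completed operation must precede $o$ in $h(\alpha_j)$ (by the real-time-order requirement), and the only pending operation in $\alpha_j$ is $o$, which is either included in $h(\alpha_j)$ (yielding abstract state $q_2$) or omitted (yielding $q_1$). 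Hence along the sequence $j=0,1,\ldots,k$ the value of $\seqstate{h(\alpha_j)}$ must flip from $q_1$ to $q_2$ at some step $j^{*}\ge 1$. Applying perfect HI to $\alpha_{j^{*}-1}$ and $\alpha_{j^{*}}$ yields $\mem{\alpha_{j^{*}-1}}=\can{q_1}$ and $\mem{\alpha_{j^{*}}}=\can{q_2}$; since $\alpha_{j^{*}}$ extends $\alpha_{j^{*}-1}$ by a single step of $p$, which modifies at most one base object, the two memory representations differ in at most one coordinate.

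The main obstacle I anticipate is justifying $\seqstate{h(\alpha_j)}\in\{q_1,q_2\}$ at every prefix: this rests on the quiescent cut $\beta$ forcing all earlier operations to be linearized before $o$ in any $h(\alpha_j)$, so that the only freedom left to the linearization at each $\alpha_j$ is whether the still-pending $o$ has already been placed, which yields exactly the two abstract states $q_1$ and $q_2$. Once this is established, the rest of the argument is a direct application of perfect HI to two adjacent executions in the solo extension.
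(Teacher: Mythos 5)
Your proposal is correct and follows essentially the same route as the paper: both arguments run a solo operation from a quiescent configuration ending at $q_1$, use the fact that every intermediate prefix must linearize to either $q_1$ or $q_2$ so that perfect HI forces the memory at every intermediate point to be $\can{q_1}$ or $\can{q_2}$, and conclude from the fact that a single step modifies at most one base object. The only cosmetic difference is that the paper phrases it as a contradiction (assuming distance $\geq 2$ and exhibiting a prefix whose memory is neither canonical representation), whereas you argue directly by locating the step at which the linearized state flips.
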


\begin{proof}
    Assume that $\can{q_1}$ and $\can{q_2}$ are at distance at least 2.
    Consider a sequential execution $\alpha$ 
    that ends in a quiescent configuration such that only one process $p_i$ takes steps in $\alpha$ and 
    $\seqstate{H(\alpha)} = q_1$. Since the implementation is perfect HI, $\mem{\alpha} = \can{q_1}$.
    Let $\beta = \alpha \alpha'$ be an extension of $\alpha$, 
    such that $\alpha'$ consists of a single operation by process $p_i$ and $\seqstate{H(\alpha)} = q_2$. 
    This extension exists since $q_2$ is reachable from $q_1$ by a single operation.
    Since the implementation is perfect HI, $\mem{\beta} = \can{q_2}$. 
    This means that $p_i$ changes the state of two base objects in two separate steps in $\alpha'$, 
    since $\can{q_1}$ and $\can{q_2}$ are at distance at least 2. 
    Let $\beta'$ be the shortest prefix of $\beta$ that includes the first step 
    that changes the state of a base object during the operation executed in $\alpha'$. 
    Then, $\mem{\beta'} \neq \can{q_1}$ and also  $\mem{\beta'} \neq \can{q_2}$.
    However, 
    for every linearization function $h$,
    $\seqstate{h(\beta')} = q_1$ or $\seqstate{h(\beta')} = q_2$, 
    in contradiction.
\end{proof}

Proposition~\ref{prop:p-hi-lb} (Section~\ref{sec:gen-lb}) shows that 
a large class of objects cannot meet the requirement above,
and therefore, no obstruction-free implementation can be perfect HI. 
This motivates us to consider weaker definitions, 
where the observer may only observe the memory at points that are ``somewhat quiescent'',
or fully quiescent.

We say that an operation $o\in O$ is \emph{state-changing} if there 
exist states $q \neq q'$ such that $o$ causes the object to transition from state $q$ to $q'$.
An operation is \emph{read-only} if it is not state-changing.
A configuration $C$ is \emph{state-quiescent} if there are no pending state-changing operations in $C$. 
Note that a quiescent configuration is also state-quiescent.

\begin{definition}
\label{def:op-quiescentSHI}
    An implementation of an abstract object is \emph{state-quiescent HI} if the implementation is HI with respect to all finite executions ending with a state-quiescent configuration.
\end{definition}

The final definition is the weakest one that we consider, and it allows the observer to examine the memory only when the configuration is fully quiescent:
\begin{definition}
\label{def:quiescentSHI}
    An implementation of an abstract object is \emph{quiescent HI}, if the implementation is HI with respect to all finite executions ending with a quiescent configuration.
\end{definition}


Figure~\ref{fig:HI-defs-example} depicts the definitions, 
on an execution of a register implementation,
highlighting several different points where an observer is or is not 
allowed to examine the memory according to each definition.
Clearly, if an implementation of an abstract object with linearization function $h$ 
is HI with respect to an execution set $E$, 
then it is also HI with respect to any execution set $E'\subseteq E$.
Since the respective sets of executions are contained in each other, 
this means that 
perfect HI implies state-quiescent HI, which in turn implies quiescent HI.

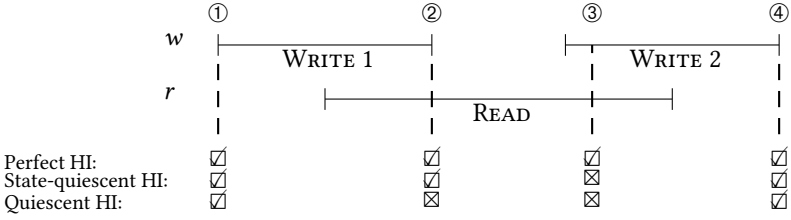
\begin{figure}[tb!] 
 \setlength{\unitlength}{0.14in}
 \centering
 \begin{picture}(30,7)
 \put(2,5.9) {$w$}
 \put(4,5.9){\line(1,0){8}}
 \put(4,5.9){\line(0,0){0.4}}
 \put(4,5.9){\line(0,-1){0.4}}
 \put(12,5.9){\line(0,0){0.4}}
 \put(12,5.9){\line(0,-1){0.4}}
 \put(8,5.4) {\makebox(0,0){\ \textsc{Write} 1}}
 {\thicklines
 \multiput(4,4.6)(0,-1){3}{\line(0,1){0.6}}
 \multiput(12,4.6)(0,-1){3}{\line(0,1){0.6}}
 }
 \put(4,7.1) {\makebox(0,0){\ding{192}}}
 \put(12,7.1) {\makebox(0,0){\ding{193}}}
 \put(4,1.8) {\makebox(0,0){\CheckedBox}}
 \put(4,1) {\makebox(0,0){\CheckedBox}}
 \put(4,0.2) {\makebox(0,0){\CheckedBox}}
 \put(12,1.8) {\makebox(0,0){\CheckedBox}}
 \put(12,1) {\makebox(0,0){\CheckedBox}}
\put(12,0.2) {\makebox(0,0){\XBox}}
 \put(17,5.9){\line(1,0){8}}
 \put(17,5.9){\line(0,0){0.4}}
 \put(17,5.9){\line(0,-1){0.4}}
 \put(25,5.9){\line(0,0){0.4}}
 \put(25,5.9){\line(0,-1){0.4}}
 \put(21,5.4) {\makebox(0,0){\ \textsc{Write} 2}}
 {\thicklines
 \multiput(25,4.6)(0,-1){3}{\line(0,1){0.6}}
 \multiput(18,4.6)(0,-1){3}{\line(0,1){0.6}}
 \put(18,5.9){\line(0,-1){0.3}}
 }
 \put(18,7.1) {\makebox(0,0){\ding{194}}}
 \put(18,1.8) {\makebox(0,0){\CheckedBox}}
 \put(18,1) {\makebox(0,0){\XBox}}
 \put(18,0.2) {\makebox(0,0){\XBox}}
 \put(25,7.1) {\makebox(0,0){\ding{195}}}
 \put(25,1.8) {\makebox(0,0){\CheckedBox}}
 \put(25,1) {\makebox(0,0){\CheckedBox}}
 \put(25,0.2) {\makebox(0,0){\CheckedBox}}
 \put(2,3.9) {$r$}
 \put(8,3.9){\line(3,0){13}}
 \put(8,3.9){\line(0,0){0.4}}
 \put(8,3.9){\line(0,-1){0.4}}
 \put(21,3.9){\line(0,0){0.4}}
 \put(21,3.9){\line(0,-1){0.4}}
 \put(14.5,3.4) {\makebox(0,0){\ \textsc{Read}}}

 \put(-4,1.6) {\makebox(0,0)[l]{\footnotesize Perfect HI:}}
 \put(-4,0.8) {\makebox(0,0)[l]{\footnotesize State-quiescent HI:}}
 \put(-4,0) {\makebox(0,0)[l]{\footnotesize Quiescent HI:}}
 \end{picture}
 \caption{\small 
Illustration of the three HI definitions.
Perfect HI allows the observer to examine the memory at any point; 
state-quiescent HI allows examination only when there is no state-changing operation pending (points 1, 2 and 4);
while quiescent HI allows examination only when the configuration is quiescent (points 1 and 4).}
 \label{fig:HI-defs-example}
 \end{figure}

\section{Motivating Example: Multi-Valued Register from Binary Registers}
\label{section: swsr example}
\label{sec:swsrexample}

To better understand the notion of history independence, 
and the challenges in achieving it, 
consider the problem of implementing a \emph{single-writer single-reader 
(SWSR) $K$-valued register},
for $K \geq 3$, from \emph{binary registers}, i.e., registers whose value is either 0 or 1. 
%
%
%


Algorithm~\ref{alg:multi-valued-baseline} presents 
Vidyasankar's wait-free implementation of a SWSR $K$-valued register 
from binary registers~\cite{VIDYASANKAR1988287}.
The algorithm is for multiple readers, 
but we assume a single reader here.

The value of the register is represented by a binary array $A$ of size $K$,
and intuitively, the register's value at any given moment 
is the smallest index $i \in [K]$
such that $A[i] = 1$.%
\footnote{This is true in the sense that a $\Read$ operation
that does not overlap with a $\Write$ will return the smallest index $i \in [K]$ such that $A[i] = 1$;
for a $\Read$ that does overlap with a $\Write$, the picture is much more complicated~\cite{VIDYASANKAR1988287}.}
In a $\Read$ operation, the reader $r$ \emph{scans up} to find 
the smallest index $i \in [K]$ such that $A[i] = 1$, and then \emph{scans down} from $i$ and returns the smallest index $j \leq i$ 
such that when $A[j]$ was read, its value was 1. (If the reader
executes solo, we will have $i = j$, but if there is a concurrent $\Write$, we may have $j < i$.)
In a $\Write(v)$ operation, the writer $w$ writes 1 to $A[v]$, then writes 0 to all indices $i < v$, starting from index $v - 1$
and proceeding down to index 1.

Since a $\Write(v)$ operation does not clear 
values larger than $v$,
the state of the array $A$ leaks information about past values written to the register:
e.g., if $K = 3$ and there is a $\Write(2)$ operation followed by $\Write(1)$, we will have $A = [1,1,0]$, whereas if we have only a $\Write(1)$,
the state will be $A = [1,0,0]$.
This will happen even in sequential executions, 
so this implementation is not history independent 
even in the minimal sense: it does not satisfy the sequential definition of history independence, even if we consider only sequential executions.

One might hope that this can be fixed by having a $\Write(v)$ operation clear the entire array $A$,
except for $A[v] = 1$, but this would
break the wait-freedom of the implementation:
if the writer zeroes out all positions in the array except one, the reader might never be able to find an array position $i$ where $A[i] = 1$, and it might not find a value that it can return.
In fact, the impossibility result that we prove in the next section 
(Theorem~\ref{thm:oq-hi-gen-lb}) rules out any wait-free implementation that is history-independent,
even if we examine the memory only when no $\Write$
operation is pending.

\begin{wrapfigure}{r}{0.45\textwidth}
\begin{minipage}{0.45\textwidth}\small
\begin{algorithm}[H]
\caption{\small Wait-free SWSR multi-valued register from binary registers~\cite{VIDYASANKAR1988287}}
\label{alg:multi-valued-baseline}
\begin{algorithmic}[1]
    \Statex $A[1 \dots K]$, all entries are initially 0, except
    \Statex $A[v_0]$, where $v_0$ is the initial value 
    \Statex \hrulefill
    \Statex \textsc{Read():} code for the reader $r$
    \State $j \gets 1$
    \While{$A[j] = 0$} $j\gets j + 1$
    \EndWhile
    \State $val \gets j$
    \For{$j = val-1 \dots 1$}
        \If{$A[j] = 1$} $val \gets j$
        \EndIf
    \EndFor
    \State \algorithmicreturn{} $val$
    \Statex
    \Statex \textsc{Write($v$):} code for the write $w$
    \State $A[v] \gets 1$
    \For{$j = v-1 \dots 1$} $A[j] \gets 0$
    \EndFor
\end{algorithmic}
\end{algorithm}
\begin{algorithm}[H]
\caption{\small Lock-free state-quiescent HI SWSR multi-valued register from binary registers}
\label{alg:lock-free-op-quiescent}
\begin{algorithmic}[1]\small
    \Statex $A[1\dots K]$, all entries are initially 0, except $A[v_0]$, where $v_0$ is the initial value 
    \Statex \hrulefill
    \Statex \textsc{Read():} code for the reader $r$
    \State $val \gets \bot$
    \While{$val = \bot$}
        \State $val \gets \textsc{TryRead()}$
    \EndWhile
    \State \algorithmicreturn{} $val$
    \Statex
    \Statex \textsc{Write($v$):} code for the write $w$
    \State $A[v] \gets 1$
    \For{$j = v-1,\dots, 1$} $A[j] \gets 0$
    \EndFor
    \For{$j = v + 1 \dots K$} $A[j] \gets 0$
    \EndFor
\end{algorithmic}
\end{algorithm}
%
\begin{algorithm}[H]
\begin{algorithmic}[1]\small
    \Statex \textsc{TryRead}(): code for the reader $r$
    \For{$j = 1 \dots K$}
        \If{$A[j] = 1$}
            \State $val \gets j$
            \For{$j' = val-1 \dots 1$}
                \If{$A[j'] = 1$} $val \gets j'$
                \EndIf
            \EndFor
            \State \algorithmicreturn{} $val$
        \EndIf
    \EndFor
    \State \algorithmicreturn{} $\bot$
\end{algorithmic}
\caption{\small Read with failure indication}
\label{alg:try-read}
\end{algorithm}
\end{minipage}
\end{wrapfigure}

Nevertheless, if we are willing to settle for lock-freedom instead of wait-freedom, 
then the approach of clearing out the array after a $\Write$ can be made to work.
We can modify Algorithm~\ref{alg:multi-valued-baseline} 
by having a $\Write(v)$ operation
first clear values down from $v-1$ to $1$
(as in the original implementation),
and then clear values up from $v+1$ to $K$
(which it would not do in the original implementation).
As a result, when there is no $\Write$ operation pending,
the register has a unique representation:
if its value is $v$, then 
the array $A$ is 0 everywhere except at $v$,
where we have $A[v] = 1$. 
Since the reader does not write to memory, 
this implies that the algorithm is state-quiescent HI.
The $\Read$ operation is nearly identical to 
Algorithm~\ref{alg:multi-valued-baseline}, 
except that, as we said above, if a $\Read$ operation
overlaps with multiple $\Write$s, 
it may never find a 1 in the array, and thus it may never find a value to return.
Thus,
the reader repeatedly tries to execute the $\Read$ operation, 
until it finds a value to return. 
We call each such attempt a \textsc{TryRead} (Algorithm~\ref{alg:try-read}), 
and it returns $\bot$ to indicate a failed attempt to find a 1 in $A$.
The code of the modified algorithm appears in Algorithm~\ref{alg:lock-free-op-quiescent}.
While the $\Write$ operation remains wait-free,
the $\Read$ operation is only \emph{lock-free}, and it is only
guaranteed to terminate if it eventually runs by itself.

\subsubsection*{Proof of Algorithm~\ref{alg:lock-free-op-quiescent}}
We define a linearization function for Algorithm~\ref{alg:lock-free-op-quiescent}, 
the same way as it is done in the proof of Algorithm~\ref{alg:multi-valued-baseline} (see~\cite[Section~10.2.1]{HagitBook}).

Let $\alpha$ be an execution of Algorithm~\ref{alg:lock-free-op-quiescent}. 
We say that a low-level read of $A[v]$ in $\alpha$ \emph{reads from} a low-level write to $A[v]$ if this is the latest write to $A[v]$ that precedes this read.
We say that a high-level \textsc{Read} operation $R$ in $H(\alpha)$ \emph{reads from} \textsc{Write} operation $W$, if $R$ returns
the value $v$,
and $W$ is the $\Write$
operation containing the low-level write to $A[v]$ that $R$'s last low-level read of $A[v]$ reads from.
Assume there is an initial logical $\textsc{Write}(v_0)$ operation $W_0$ of the initial value, and any \textsc{Read} that returns the initial value reads from $W_0$.

We linearize the operations in execution $\alpha$ according to the \emph{reads from}
order.
Since there is a single writer $w$, the order of \textsc{Write} operations is well-defined.
The \textsc{Read} operations are also ordered by the order they occur in $\alpha$, which again is well-defined since there is a single reader $r$.
As for the ordering between $\Read$s and $\Write$s, 
immediately after each $\Write$ operation we place
all the $\Read$ operations that read from it,
ordered by the order in which they were invoked.

The next theorem is proved in Appendix~\ref{app:swsr lock-free register proofs}.

\begin{restatable}{theorem}{lockfreelinearization}
\label{lem:lock-free-linearization}
    Algorithm~\ref{alg:lock-free-op-quiescent} is a linearizable lock-free 
    state-quiescent HI SWSR multi-valued register from binary registers.
\end{restatable}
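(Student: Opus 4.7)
The plan is to verify three separate claims: linearizability under the stated ordering, lock-freedom, and state-quiescent HI. I would structure the proof in that order, piggybacking on the correctness argument for Algorithm~\ref{alg:multi-valued-baseline} in~\cite{HagitBook} wherever possible, and isolating the new arguments needed because of the two algorithmic changes: the writer now zeros out $A[v+1],\dots,A[K]$ after zeroing the lower indices, and the reader retries via \textsc{TryRead}.

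\paragraph*{Linearizability.}
First I would check that the \emph{reads-from} relation used to define $h$ is well-defined in every execution: each successful \textsc{TryRead} that returns $val$ performs a last low-level read of $A[val]$ that sees the value $1$, and this low-level read is preceded by some unique latest low-level write of $1$ to $A[val]$; that write belongs either to some $\textsc{Write}(val)$ or to the logical $W_0$. Hence every returning \textsc{Read} reads from exactly one $\Write$, and each $\Write$ of value $v$ contains exactly one write of $1$ (into $A[v]$), so the placement is unambiguous. Given this, the sequential specification is satisfied by construction: the $\Read$ operation that reads from $\Write(v)$ returns $v$, and is placed immediately after $\Write(v)$ in $h(\alpha)$. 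Next I would verify real-time order: since there is a single writer, writes appear in their real-time order; since there is a single reader, reads appear in their real-time order; the only nontrivial case is a $\Read$ $R$ placed after $\Write$ $W$, where I must show $R$ does not end before $W$ begins and $W$ does not end after $R$ begins unless allowed. This is exactly the argument from~\cite[Section~10.2.1]{HagitBook} for Algorithm~\ref{alg:multi-valued-baseline}, because the additional upward-clearing loop in $\Write$ only \emph{strengthens} the set of writes of $0$ that could appear during $R$; the reader's scan-up/scan-down mechanism is unchanged, and the key invariant---that $R$ cannot read from a $\Write$ that starts strictly after $R$ ends or ends strictly before $R$ begins---continues to hold. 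I would verify this by replaying the standard case analysis with the added up-clearing steps and observing that they only produce low-level writes of $0$ to positions $>v$, which cannot be ``read as $1$'' anywhere and therefore cannot create a spurious reads-from edge.

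\paragraph*{Lock-freedom.}
The writer is clearly wait-free: each $\Write(v)$ performs at most $K$ low-level writes and returns. For the reader, I need to show that if some operation is pending forever, some operation completes. If the writer has a pending $\Write$, it terminates in a bounded number of its own steps, so progress is immediate. The nontrivial case is when only the reader is pending. If the writer completes only finitely many operations during a reader's execution, then from some point on no $\Write$ is pending or invoked; after that point $A$ is stable with a single $1$, and the next invocation of \textsc{TryRead} scans up, finds that $1$, scans down, and returns a non-$\bot$ value, causing the $\Read$ to return. So eventually either the writer keeps completing operations (infinitely many writer operations complete) or the reader completes---either way, operations keep completing, which is lock-freedom.

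\paragraph*{State-quiescent HI.}
I would first identify the canonical representation: define $\can{v}$ to be the array with $A[v]=1$ and $A[j]=0$ for all $j\ne v$. Then I would prove the invariant: in any state-quiescent configuration $C$ (no $\Write$ pending), $\mem{C} = \can{v}$ where $v$ is the value written by the most recent $\Write$ to complete in $C$ (or $v_0$ if none). This follows by induction on the sequence of completed writes: each $\Write(v)$ ends with the three loops having fully executed, so after the last step of the $\Write$ we have $A[v]=1$ and $A[j]=0$ for all $j\ne v$; readers issue no low-level writes, so neither pending nor completed $\Read$ operations can disturb $A$. Hence at every state-quiescent point the memory equals $\can{\seqstate{h(\alpha)}}$, which gives state-quiescent HI via Definition~\ref{def:op-quiescentSHI}.

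\paragraph*{Main obstacle.}
The main delicate point, and the only part I would spend significant care on, is re-running the real-time-order verification of the scan-up/scan-down reader against the \emph{new} writer that also clears upward. I expect the argument to go through essentially unchanged, but one must check that the upward-clearing phase cannot cause a reader to miss all $1$'s in a way that corrupts the reads-from mapping for a later successful \textsc{TryRead}; the retry loop localizes any such failures to $\bot$-returning attempts that are invisible in $H(\alpha)$, so the reads-from graph for completed $\Read$s inherits the guarantees of the original proof.
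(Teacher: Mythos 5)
Your proposal matches the paper's proof in structure and substance: the same reads-from linearization, the same real-time-order case analysis for write-before-read and read-before-read, and the same canonical-representation argument for state-quiescent HI (the paper relegates lock-freedom to the informal discussion in the main text, so your explicit argument there is a harmless addition). One caveat: the upward-clearing loop is not merely harmless to the original real-time-order argument, as you suggest --- the paper's write-before-read case and its $v_1 < v_2$ read-before-read subcase both rely \emph{affirmatively} on the fact that a completed $\Write$ has overwritten every array cell (including cells above $v$), so the case analysis genuinely changes rather than being inherited verbatim from~\cite{HagitBook}; your plan of replaying the cases would surface this, but the claim that the argument is ``exactly'' the original one should be dropped.
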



There is another way to relax our requirements and circumvent the impossibility result: 
we could settle for \emph{quiescent} history independence,
where the adversary is not allowed to inspect the internal
representation except when the system is fully quiescent.
Algorithm~\ref{alg:wait-free-quiescent} is a SWSR multi-valued register from binary registers, that is both wait-free and quiescent HI.
Our algorithm employs the following simple principle: 
the reader announces its presence to the writer whenever it begins a read operation.
The writer, if it sees that the reader might not
find a value to return, \emph{helps} it by
writing a value that the reader is allowed to return, in an area of shared memory
dedicated to this purpose.
We must carefully manage the footprints left in memory by both the reader and the writer,
and ensure that when all operations complete, the memory is left in a canonical representation,
but at the same time, that the reader is never left hanging without a value that it may return.
%

In more details,
in addition to the array $A$, which has the same functionality as in 
Algorithm~\ref{alg:lock-free-op-quiescent}, there is an additional array $B$ of size $K$. 
In a $\Read$ operation, first, the reader sets $\flag[1]$ to 1, to announce itself to the writer. Then, the reader tries to read $A$ twice, and if it succeeds in finding an index equal to $1$ in $A$, it returns a value as done in Algorithm~\ref{alg:lock-free-op-quiescent}. 
Otherwise, the reader reads $B$ and returns an index in $B$ that is equal to $1$ (we claim that it will always find such an index).
Before returning, the reader sets $\flag[2]$ to 1 and then clears $B$ by writing $0$ to all the indices in $B$. In addition, it writes $0$ to $\flag[1]$ and then to $\flag[2]$, and returns the value read either from $A$ or from $B$ (whichever one succeeded).

In a $\Write$ operation, 
the writer reads the entire array $B$ to check if there is any non-zero cell there.
If $B$ contains only zeroes,
and the writer 
identifies a concurrent $\Read$ operation
by observing that $\flag[1] = 1$,
then it writes 1 to $B[\lastval]$, where $\lastval$ is the last value written to the register
(there is only one writer and it retains the last value it wrote before the current one).
After the writer writes to $B$, it reads $\flag[2]$ and then $\flag[1]$, and if it reads $\flag[2] = 1$ or $\flag[1] = 0$, the writer clears the $B$ array by writing 0 to $B[\lastval]$. 
These flag values indicate that either there is no pending $\Read$ operation, or the concurrent $\Read$ operation is done reading $B$.
This ensures that $B$ is cleared by either the writer, 
or by a concurrent $\Read$ operation, and therefore, 
all cells in $B$ are equal to zero when in a quiescent configuration.

Following this interaction with the array $B$,
the writer proceeds in the same manner as in Algorithm~\ref{alg:lock-free-op-quiescent},
writing 1 to location $A[v]$, then clearing $A$ downwards from $v - 1$ to $1$, 
and finally clearing $A$ upwards from $v + 1$ to $K.$

By having the writer first write to $B$ (if it observes a concurrent reader) and then to $A$, we ensure that if the reader fails twice to find a 1 in $A$,
then it is guaranteed that in-between its two scans of $A$,
the writer has written to $A$ in at least two separate $\Write$ operations;
but the writer must have ``seen'' the reader by the time that it wrote to $A$ in its second $\Write$ operation (since the reader immediately sets $\flag[1] = 1$ when it begins),
and at this point is guaranteed to help the reader by setting a cell in $B$ to 1. 

\begin{algorithm}[tb] \small
\caption{\small Wait-free quiescent HI SWSR multi-valued register from binary registers}
\label{alg:wait-free-quiescent}
\raggedright 
     \quad\,\, $A[1\dots K]$, all entries are initially 0, except $A[v_0]$, where $v_0$ is the initial value \\
     \quad\,\, $B[1\dots K]$, all entries are initially 0 \\
     \quad\,\, $\flag[1,2]$, both entries are initially 0 \\
     \quad\,\, local $\lastval$ at the writer $w$, initially $v_0$ \\
     \quad\,\,\, \hrulefill
    \setlength{\multicolsep}{0.0pt}
    \begin{multicols}{2}
    \begin{algorithmic}[1]
    \Statex $\Read()$: code for the reader $r$
    \State $\flag[1] \gets 1$
    \label{lin:raise-flag-1}
    \For{$\mathit{it} = 1,2$}
        \State $val \gets \textsc{TryRead}()$
        \label{lin:try-read}
        \If{$val \neq \bot$} 
            goto Line \ref{lin:raise-flag-2}
        \EndIf
    \EndFor
    \For{$j = 1 \dots K$}
        \If{$B[j] = 1$}  
        \label{lin:reader-read-B}
            $val \gets j$
        \EndIf
    \EndFor
    \State $\flag[2] \gets 1$
    \label{lin:raise-flag-2}
    \For{$j = 1 \dots K$} $B[j] \gets 0$
    \label{lin:reader-clear-B-cond}
    \EndFor
    \State $\flag[1] \gets 0$; 
    \label{lin:clear-flag1}
    $\flag[2] \gets 0$
    \label{lin:clear-flag2}
    \State \algorithmicreturn{} $val$
    \columnbreak
    \Statex $\Write(v)$: code for the write $w$
    \If{$\forall 1\leq j\leq K, B[j] = 0$}
    \label{lin:empty-B-cond}
        \If{$\flag[1] = 1$}
        \label{lin:write-1-B-cond}
            \State $B[\lastval] \gets 1$
            \label{lin:write-1-B}
            \If{$\flag[2] = 1$ or $\flag[1] = 0$}
            \label{lin:writer-clear-B-cond}
                \State $B[\lastval] \gets 0$
                \label{lin:writer-clear-B}
            \EndIf
        \EndIf
    \EndIf
    \State $A[v] \gets 1$
    \label{lin:1toA}
    \For{$j = v-1 \dots 1$} $A[j] \gets 0$
    \label{lin:downward-clear}
    \EndFor
    \For{$j = v + 1 \dots K$} $A[j] \gets 0$
    \label{lin:upward-clear}
    \EndFor
    \State $\lastval \gets v$
\end{algorithmic}
\end{multicols}
\end{algorithm}

We also prove (in Proposition~\ref{lem:reader-must-write}) that it is essential for the reader to write to shared memory, otherwise it is impossible to obtain even a quiescent HI wait-free implementation.

\subsubsection*{Proof of Algorithm~\ref{alg:wait-free-quiescent}}

The next two lemmas (proved in Appendix~\ref{app:swsr wait-free register proofs})
are used to show that Algorithm~\ref{alg:wait-free-quiescent} is linearizable. 
The next lemma proves that a $\Read$ operation returns a valid value.
It is proved by showing that if two \textsc{TryRead} return $\bot$ in a \textsc{Read} operation, 
then there is a $\textsc{Write}$ operation, which overlaps the second \textsc{TryRead}, which, if $B$ has no index equal to 1, writes to $B$ before $R$ starts reading $B$.

\begin{restatable}{lemma}{valnotbot}
\label{lem:val-not-bot}
    If a \textsc{Read} operation $R$ reaches Line~\ref{lin:raise-flag-2}, then $val \neq \bot$.
\end{restatable}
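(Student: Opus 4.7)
The plan is to case-split on how $R$ arrives at Line~\ref{lin:raise-flag-2}. Let $T_1, T_2$ denote the first and second \textsc{TryRead} calls made by $R$. If $R$ jumps to Line~\ref{lin:raise-flag-2} via the goto on Line~\ref{lin:try-read}, then some $T_i$ returned a non-$\bot$ value and $val \neq \bot$ immediately. The interesting case is when both $T_1$ and $T_2$ returned $\bot$, so $R$ executed the $B$-scanning loop that reads at Line~\ref{lin:reader-read-B}; the task is to show the scan observed some $B[j]=1$. I will pinpoint a specific \textsc{Write} whose interaction with $B$ leaves a $1$ there that survives until the scan.

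First I would identify a \textsc{Write} $W$ whose Line~\ref{lin:1toA} lies strictly inside $T_2$. The key invariant is that $A$ always holds at least one $1$: Line~\ref{lin:1toA} precedes any clearing inside a \textsc{Write}, and the freshly-set position is never cleared within that same \textsc{Write}. Consider $T_2$'s last read, $A[K]$, which must be $0$; by the invariant some $A[v]=1$ at that moment, necessarily with $v<K$; but $T_2$ already read $A[v]=0$ earlier, so $A[v]$ transitioned from $0$ to $1$ in between, which only happens at some \textsc{Write}'s Line~\ref{lin:1toA}. Taking $W$ to be this write, I next show $W$ starts after $R$'s Line~\ref{lin:raise-flag-1}: otherwise, since writes are sequential and $W$'s Line~\ref{lin:1toA} is in $T_2$, the writer sits inside $W$'s $B$-interaction throughout all of $T_1$ and no earlier write overlaps $T_1$, so $A$ is stable through $T_1$ and $T_1$ would have succeeded.

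Given such a $W$, every flag check $W$ performs (at Lines~\ref{lin:write-1-B-cond} and~\ref{lin:writer-clear-B-cond}) lies before its Line~\ref{lin:1toA}, hence before the end of $T_2$, hence before $R$'s scan of $B$ and before Lines~\ref{lin:raise-flag-2} and~\ref{lin:clear-flag1}; therefore at all of these checks $\flag[1]=1$ and $\flag[2]=0$. Now I split on $W$'s observation at Line~\ref{lin:empty-B-cond}. If $B$ is all zero, $W$ writes $B[\lastval]\gets 1$ at Line~\ref{lin:write-1-B}, and the condition at Line~\ref{lin:writer-clear-B-cond} evaluates to false, so $W$ does not clear $B$. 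If $B$ is non-empty, some earlier write $\tilde{W}$ already placed a $1$ there; applying the same flag-timing argument to $\tilde{W}$---using that $\tilde{W}$ ends before $W$ starts and $W$'s Line~\ref{lin:1toA} lies in $T_2$---shows $\tilde{W}$ also saw $\flag[1]=1,\ \flag[2]=0$ at its clear-check and did not clear its write either. Either way, $B$ contains a $1$ once $W$'s $B$-interaction is complete.

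Finally, that $1$ persists until $R$'s scan: the reader does not touch $B$ until the scan itself, and only clears $B$ afterwards at Line~\ref{lin:reader-clear-B-cond}; any subsequent \textsc{Write} finds $B$ non-empty at Line~\ref{lin:empty-B-cond} and skips the entire $B$-interaction. Hence the reader's scan sets $val \gets j$ for some $j$ with $B[j]=1$, so $val \neq \bot$. The main obstacle is the timing argument for $\tilde{W}$ in the non-empty sub-case: ruling out that $\tilde{W}$'s clear-check happens after $R$'s Line~\ref{lin:raise-flag-2} requires chaining the strict orderings between $\tilde{W}$'s end, $W$'s start, $W$'s Line~\ref{lin:1toA}, the end of $T_2$, the scan of $B$, and Line~\ref{lin:raise-flag-2}, rather than any single step comparison.
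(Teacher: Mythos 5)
Your proof is correct, and it reaches the same pivotal configuration as the paper's proof --- a \textsc{Write} whose entire interaction with $B$ takes place while $\flag[1]=1$ and $\flag[2]=0$, so that a $1$ is left in $B$ and survives until the reader's scan --- but it gets there by a different decomposition. The paper factors the argument through its Lemma~\ref{lem:failed-try-read}, applied to \emph{both} failed \textsc{TryRead}s: the witness $W_1$ for $T_1$ (which must reach the upward-clearing loop before $T_1$ returns) is used to force the witness $W_2$ for $T_2$ to begin after $T_1$ begins, hence after $\flag[1]$ is raised. You instead extract the $T_2$-witness directly from the invariant that $A$ always contains a $1$ (so a failed upward scan certifies a concurrent execution of Line~\ref{lin:1toA}), and you place that witness after Line~\ref{lin:raise-flag-1} by contradiction: a \textsc{Write} straddling all of $T_1$ while still in its $B$-interaction would leave $A$ untouched throughout $T_1$, forcing $T_1$ to succeed. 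This buys you a slightly leaner argument --- you never need the ``reaches Line~\ref{lin:upward-clear} before the \textsc{TryRead} returns'' half of the paper's lemma, nor a second witness $W_1$ --- at the cost of re-deriving inline what the paper packages as a reusable lemma (which it also needs for Theorem~\ref{thm:wait-free-q-hi-reg}'s linearizability proof). One small remark: in your non-empty sub-case, the separate flag-timing argument for $\tilde{W}$ is unnecessary --- the fact that $W$ observes $B[j]=1$ at Line~\ref{lin:empty-B-cond} after $\tilde{W}$ has completed already certifies that $\tilde{W}$ never executed Line~\ref{lin:writer-clear-B}; all that remains is the persistence argument you give in the final paragraph, which is the same as the paper's.
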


In the next lemma we show that if the reader returns a value read from $B$, it was written by an overlapping \textsc{Write} operation

\begin{restatable}{lemma}{readsboverlaps}
\label{lem:reads-B-overlaps}
    Consider a read of 1 from $B[j]$, $1 \leq j \leq K$, in Line~\ref{lin:reader-read-B} by \textsc{Read} operation R
    and let $W$ be the \textsc{Write} operation that writes this value of 1 to $B[j]$ that $R$ reads,
    then $W$ overlaps $R$.  
\end{restatable}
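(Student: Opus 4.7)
My approach is to argue by contradiction. Suppose $W$ does not overlap $R$. Since the value $1$ that $R$ reads from $B[j]$ at Line~\ref{lin:reader-read-B} is the one written by $W$ at Line~\ref{lin:write-1-B}, $W$'s write must precede $R$'s read, so the only non-overlap scenario is that $W$ completes before $R$ starts. Because $R$ reads the value $W$ wrote, no write to $B[j]$ occurs between $W$'s Line~\ref{lin:write-1-B} and $R$'s read; in particular, $W$ itself does not perform the clearing write at Line~\ref{lin:writer-clear-B}. Consequently, at Line~\ref{lin:writer-clear-B-cond} the writer observed $\flag[2] = 0$ and $\flag[1] = 1$.

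Next I would identify the \Read{} operation whose raise of $\flag[1]$ prompted $W$ to enter the helping branch. Since the reader is the only process that writes $\flag[1]$, and it raises $\flag[1]$ at the beginning of a \Read{} (Line~\ref{lin:raise-flag-1}) and clears it near the end (Line~\ref{lin:clear-flag1}), the observation $\flag[1] = 1$ at Line~\ref{lin:writer-clear-B-cond} must come from some \Read{} operation $R'$ that has executed Line~\ref{lin:raise-flag-1} but not yet Line~\ref{lin:clear-flag1}. Under the assumption that $W$ completes before $R$ starts, $R'$ must be distinct from $R$ and precede it in the reader's sequential history; in particular, $R'$ completes before $R$ begins.

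The heart of the argument is a case analysis on when $R'$ writes $0$ to $B[j]$ during its Line~\ref{lin:reader-clear-B-cond} loop, relative to $W$'s write at Line~\ref{lin:write-1-B} and $R$'s read at Line~\ref{lin:reader-read-B}. If $R'$'s $0$-write occurs strictly between $W$'s write of $1$ and $R$'s read, then $R'$ overwrites $W$'s $1$ before $R$ reads it, contradicting that $R$ reads $W$'s write. If it occurs after $R$'s read, this contradicts that $R'$ completes before $R$ starts. And if it occurs before $W$'s write, then by the program order of $R'$, Line~\ref{lin:raise-flag-2} was executed before Line~\ref{lin:reader-clear-B-cond}, and hence before $W$'s Line~\ref{lin:writer-clear-B-cond}; since $R'$ only clears $\flag[2]$ after clearing $\flag[1]$, and $\flag[1] = 1$ at the moment $W$ reads the flags, $R'$ has not yet cleared $\flag[2]$, so $W$ would observe $\flag[2] = 1$, contradicting the earlier conclusion that $\flag[2] = 0$.

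The main subtlety I expect is the last case: combining the program-order chain Line~\ref{lin:raise-flag-2}, Line~\ref{lin:reader-clear-B-cond}, Line~\ref{lin:clear-flag1}, Line~\ref{lin:clear-flag2} in $R'$ with the real-time observation by $W$ to rule out that $\flag[2]$ has already been cleared; the other two cases are essentially direct consequences of the definition of ``reads-from'' and the sequentiality of the single reader. Once this ordering is spelled out carefully, every case yields a contradiction, and we conclude that $W$ must overlap $R$.
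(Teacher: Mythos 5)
Your proof is correct, but it is organized differently from the paper's. The paper dispatches this lemma in two lines as a corollary of Lemma~\ref{lem:overwriteB}: if $W$ precedes $R$, then (single reader) every \textsc{Read} overlapping $W$ also completes before $R$ begins, so Lemma~\ref{lem:overwriteB} forces $W$'s write of $1$ to $B[j]$ to be overwritten before $R$ even starts, contradicting the assumption that $R$ reads it. What you do instead is re-derive the content of that lemma inline: you identify the \textsc{Read} operation $R'$ from whose Line~\ref{lin:raise-flag-1} the writer's read of $\flag[1]=1$ at Line~\ref{lin:writer-clear-B-cond} reads, observe that $R'$ completes before $R$ begins, and then do a three-way case split on where $R'$'s clearing write of $0$ to $B[j]$ (Line~\ref{lin:reader-clear-B-cond}) falls relative to $W$'s write and $R$'s read. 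Your third case---chaining $R'$'s program order through Lines~\ref{lin:raise-flag-2}, \ref{lin:reader-clear-B-cond}, \ref{lin:clear-flag1}, \ref{lin:clear-flag2} against $W$'s read order ($\flag[2]$ then $\flag[1]$) to conclude $W$ would have seen $\flag[2]=1$---is essentially the by-contradiction step inside the paper's proof of Lemma~\ref{lem:overwriteB}, and your first two cases are the reads-from and single-reader observations that the paper's two-line reduction packages implicitly. Both arguments are sound; the paper's factoring buys reuse (Lemma~\ref{lem:overwriteB} is also what drives the quiescent-HI claim in Lemma~\ref{lem:reg-quiescent-HI}), while your version is self-contained and makes the handshake explicit at the point of use. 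One small presentational point: your definition of $R'$ as the read that ``prompted $W$ to enter the helping branch'' conflates the read of $\flag[1]$ at Line~\ref{lin:write-1-B-cond} with the one at Line~\ref{lin:writer-clear-B-cond}; your argument actually needs (and your next sentence correctly uses) the latter.
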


A \textsc{Read} operation $R$ \emph{reads $A$} if $R$ performs a 
successful \textsc{TryRead} in Line~\ref{lin:try-read} that 
returns a non-$\bot$ value, and $R$ \emph{reads $B$} 
if $R$ reads 1 from $B[j]$ in Line~\ref{lin:reader-read-B}.
If $R$ reads $A$, it \emph{reads $A$ from} \textsc{Write} operation $W$, 
if $R$ returns $v$ and $W$ contains the low-level write to $A[v]$ that $R$'s last low-level read of $A[v]$ reads from.
If $R$ reads $B$, it \emph{reads $B$ from} \textsc{Write} operation $W$, if $R$ returns $v$ and $W$ contains the low-level write to $B[v]$ that $R$'s low-level read of $B[v]$ reads from.

Construct a linearization $H$ of $\alpha$ similarly to the one constructed for Algorithm~\ref{alg:lock-free-op-quiescent}.
Consider the \textsc{Read} operations in the order they occur in $\alpha$, this order is well-defined since there is a single reader $r$.
For every \textsc{Read} operation $R$ in $H(\alpha)$ that reads $A$, let $W$ be the \textsc{Write} operation $R$ reads $A$ from. $R$ is placed immediately after $W$ in $H$.
For every \textsc{Read} operation $R$ in $H(\alpha)$ that reads $B$, let $W_1$ be the \textsc{Write} operation $R$ reads $B$ from and $W_2$ the \textsc{Write} operation that precedes $W_1$ in $\alpha$, i.e., the last \textsc{Write} operation that returns before $W_1$ starts. $R$ is placed before $W_1$ and immediately after $W_2$ in $H$.
Note that the read operations that are placed after a write operation $W$ are ordered by the order in which they were invoked.

\begin{figure}[!tb]
    \centering
    \begin{subfigure}{\textwidth}
    \begin{tikzpicture}
        \node at (-.5,0) {$r$};
        \draw[|-|,solid,thick] (0,0) -- (6,0) node [midway, below, text height = 0.4cm] {$R_1$};
        \draw[thick] (4,0.1) -- ++ (0,-0.2);
        \node at (4,0.3) {\footnotesize{writes 0 to $B[j]$}};
        \draw[thick] (2,0.1) -- ++ (0,-0.2);
        \node at (1.8,-0.4) {\footnotesize{\shortstack{\textsc{TryRead} returns \\ $val \neq \bot$}}};
        
        \draw[|-|,solid,thick] (7,0) -- (11,0) node 
        [midway, below, text height = 0.4cm] {$R_2$};
        \draw[thick] (8,0.1) -- ++ (0,-0.2);
        \node at (9,0.3) {\footnotesize{reads 1 from $B[j]$}};

        \node at (4.5,1.5) {$\dots$};

        \node at (-.5,1.5) {$w$};
        \draw[|-|,solid,thick] (0,1.5) -- ++ (4,0) node [midway, below] {$W_1$};
        \draw[thick] (1,1.6) -- ++ (0,-0.2);
        \node at (1,1.8) {\footnotesize{writes 1 to $A$}};
        
        \draw[|-|,solid,thick] (5,1.5) -- ++ (5,0) node [midway, below] {$W_2$};
        \draw[thick] (6,1.6) -- ++ (0,-0.2);
        \node at (6,1.8) {\footnotesize{writes 1 to $B[j]$}};
        \draw[thick] (9,1.6) -- ++ (0,-0.2);
        \node at (9,1.8) {\footnotesize{writes 1 to $A$}};
        
        \draw[dotted, thick] (6,1.5) -- (8,0);
        \draw[dotted, thick] (1,1.5) -- (2,0);
        \draw[dotted, thick, red] (9,1.5) -- (2,0);
    \end{tikzpicture}
    \caption{Scenario for read from $A$ before read from $B$}
    \label{fig:thm:wait-free-q-hi-rega}
    \vspace{2mm}
    \end{subfigure}
    \begin{subfigure}{\textwidth}
    \begin{tikzpicture}
        \node at (-.5,0) {$r$};
        \draw[|-|,solid,thick] (0,0) -- (7,0) node [midway, below, text height = 0.4cm] {$R_1$};
        \draw[thick] (6,0.1) -- ++ (0,-0.2);
        \node at (6,-0.3) {\footnotesize{reads 1 to $B[j]$}};
        
        \draw[|-|,solid,thick] (8,0) -- (11,0) node 
        [midway, below, text height = 0.4cm] {$R_2$};
        \draw[thick] (9,0.1) -- ++ (0,-0.2);
        \node at (9,0.4) {\footnotesize{\shortstack{\textsc{TryRead} returns \\ $val \neq \bot$}}};

        \node at (-.5,1.5) {$w$};
        \draw[|-|,solid,thick] (0,1.5) -- ++ (3,0) node [midway, below] {$W_2$};
        \draw[thick] (2,1.6) -- ++ (0,-0.2);
        \node at (2,1.8) {\footnotesize{writes 1 to $A$}};
        
        \draw[|-|,solid,thick] (4,1.5) -- ++ (6,0) node [midway, below] {$W_1$};
        \draw[thick] (5,1.6) -- ++ (0,-0.2);
        \node at (5,1.8) {\footnotesize{writes 1 to $B[j]$}};
        
        \draw[dotted, thick] (5,1.5) -- (6,0);
        \draw[dotted, thick, blue] (2,1.5) -- (9,0);
        \node at (3.5,0.6) [blue]{\footnotesize{\shortstack{earliest \textsc{Write} $R_2$ \\ can read from}}};
    \end{tikzpicture}
    \caption{Scenario for read from $B$ before read from $A$}
    \label{fig:thm:wait-free-q-hi-regb}
    \end{subfigure}
    \caption{Illustrating the proof of Theorem~\ref{thm:wait-free-q-hi-reg}}
    \label{fig:thm:wait-free-q-hi-reg}
\end{figure}
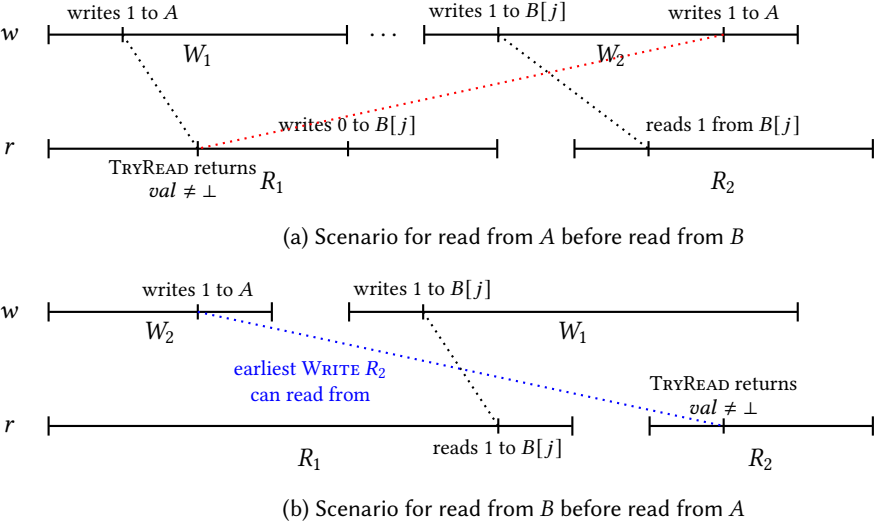

Algorithm~\ref{alg:wait-free-quiescent} is clearly wait-free. 
History independence (proved in Appendix~\ref{app:swsr wait-free register proofs}) 
follows by showing that if the writer writes 1 to an index in $B$ in a \textsc{Write} operation, 
then this write is overwritten either by the writer 
or by an overlapping \textsc{Read} operation. 
This implies that in a quiescent configuration all entries in $B$ are equal to 0.
We next argue linearizability.

\begin{restatable}{theorem}{waitfreeqhireg}
\label{thm:wait-free-q-hi-reg}
Algorithm~\ref{alg:wait-free-quiescent} is a linearizable wait-free 
quiescent HI SWSR multi-valued register from binary registers.
\end{restatable}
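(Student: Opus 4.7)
My plan is to verify, in order, wait-freedom, quiescent HI, and linearizability, reusing the linearization $H$ and the two lemmas already established above. Wait-freedom is immediate because every \textsc{Read} and \textsc{Write} consists only of bounded loops over $[1,K]$. For quiescent HI it suffices to show that in every quiescent configuration $\flag[1]=\flag[2]=0$, the $A$ array equals the indicator of the current value, and $B$ is identically zero; the first two properties are visible from the code, so the real content is the third. Following the hint in the excerpt, I would argue that every $1$ written into $B$ at Line~\ref{lin:write-1-B} by some \textsc{Write} $W$ is erased before quiescence: Line~\ref{lin:write-1-B} requires $W$ to have observed $\flag[1]=1$ at Line~\ref{lin:write-1-B-cond}, so some \textsc{Read} $R$ has raised $\flag[1]$ at Line~\ref{lin:raise-flag-1}; then at Line~\ref{lin:writer-clear-B-cond} either $W$ sees $\flag[2]=1$ or $\flag[1]=0$ and clears $B[\lastval]$ itself, or $W$ sees $\flag[2]=0$ and $\flag[1]=1$, which forces $R$ to be still before Line~\ref{lin:raise-flag-2} and hence to reach Line~\ref{lin:reader-clear-B-cond} and wipe $B$. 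Uniqueness of the writer prevents a stale $1$ from being reintroduced before quiescence.

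For linearizability the three obligations on $H$ are that every pending \textsc{Read} admits a legal response (already covered by Lemma~\ref{lem:val-not-bot} and Lemma~\ref{lem:reads-B-overlaps}), that $H$ matches the sequential specification, and that $H$ preserves the real-time order of $H(\alpha)$. For specification compliance, reads-from-$A$ inherit the Vidyasankar analysis verbatim. For reads-from-$B$, observe that the writer writes $B[\lastval]$, where $\lastval$ equals the value written by the \textsc{Write} $W_2$ immediately preceding the writing \textsc{Write} $W_1$; placing the reader immediately after $W_2$ in $H$ therefore makes its return value agree with the abstract state just after $W_2$ linearizes.

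The main obstacle is real-time order in the mixed cases, where one of two successive \textsc{Read}s reads from $A$ and the other from $B$; these are precisely the two scenarios of Figure~\ref{fig:thm:wait-free-q-hi-reg}. When both read $A$ or both read $B$, the argument reduces to the classical Vidyasankar case or a direct application of Lemma~\ref{lem:reads-B-overlaps}, with the invocation-order tie-break resolving placements at the same linearization point. In the setting of Figure~\ref{fig:thm:wait-free-q-hi-rega} the danger is that $R_1$'s read of $A$ is attributed to a \textsc{Write} strictly later than the $W_2$ that precedes the \textsc{Write} from which $R_2$ reads $B$ (the red edge); this is blocked because $R_1$'s last low-level read of $A$ terminates before $R_2$ is even invoked, hence before the $1$ that $R_2$ eventually reads is written into $B$, and the writer always writes to $B$ before touching $A$ inside a single \textsc{Write}. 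The symmetric bound in Figure~\ref{fig:thm:wait-free-q-hi-regb} (the blue edge) shows that the earliest \textsc{Write} $R_2$ can read $A$ from is no earlier than the predecessor of the \textsc{Write} from which $R_1$ read $B$. Together these bounds place $R_2$'s linearization point no earlier than $R_1$'s, and the invocation-order tie-break recovers real-time order when the two coincide.
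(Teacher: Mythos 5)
Your proposal is correct and follows essentially the same route as the paper's proof: the same linearization (reads-from-$A$ placed after the \textsc{Write} they read from, reads-from-$B$ placed after the predecessor of the overlapping \textsc{Write} they read from), the same supporting lemmas (Lemma~\ref{lem:val-not-bot}, Lemma~\ref{lem:reads-B-overlaps}, and the $B$-clearing argument, which is the paper's Lemma~\ref{lem:overwriteB}), and the same case analysis for real-time order, including the two mixed-read scenarios of Figure~\ref{fig:thm:wait-free-q-hi-reg}. The only compressed step is in the Figure~\ref{fig:thm:wait-free-q-hi-rega} case, where the reason the relevant $1$ can only enter $B$ after $R_1$'s scan of $A$ is that $R_1$ unconditionally wipes $B$ at Line~\ref{lin:reader-clear-B-cond} before returning --- a fact you use elsewhere and should invoke explicitly there.
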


\begin{proof}[Proof of linearizabilty]
    By construction, $H$ is in the sequential specification of the register and by Lemma~\ref{lem:val-not-bot},
    it includes all completed operations in $H(\alpha)$.
    It is left to show that the linearization respects the real-time order of non-overlapping operations.
    The order between two \textsc{Write} operations respects the real-time order by the construction of $H$, and a \textsc{Read} operation cannot be placed after a \textsc{Write} operation that follows it, as a reader cannot read from the future.
    
    The primitive read and write operations on array $A$ follow exactly Algorithm~\ref{alg:lock-free-op-quiescent}. Additionally, the projection of $H$ that includes all \textsc{Write} operations and only \textsc{Read} operations that read $A$, is built in the same manner as a linearization of an execution of Algorithm~\ref{alg:lock-free-op-quiescent}. Therefore, 
    the order of the \textsc{Write} operations and \textsc{Read} operations that read $A$ respects the real-time order among themselves. Next, we consider the order between two operations, where one of the operations is a \textsc{Read} operation that reads $B$.

    {\bf Write before read:} Consider a \textsc{Read} operation $R$ that reads $B$.
    By Lemma~\ref{lem:reads-B-overlaps}, $R$ reads $B$ from \textsc{Write} operation $W_1$ that overlaps $R$, 
    and by the construction of $H$, 
    if $R$ is placed between $W_1$ and $W_2$, the \textsc{Write} operation that precedes $W_1$. 
    Thus, $R$ is placed after any \textsc{Write} operation that precedes it in $\alpha$.

    {\bf Read before read:} assume a \textsc{Read} operation $R_1$ returns before a \textsc{Read} operation $R_2$ begins, 
    but $R_2$ is placed before $R_1$ in the linearization.
    There are three cases:
    
    {\bf (1) both $R_1$ and $R_2$ read $B$:} 
    Let $W_1$ be the \textsc{Write} operation $R_1$ reads $B$ from and let $W_2$ be the \textsc{Write} operation $R_2$ reads $B$ from.
    Since $R_2$ is placed before $R_1$ in the linearization, $W_2$ precedes $W_1$ in $\alpha$.
    By Lemma~\ref{lem:reads-B-overlaps}, $W_1$ overlaps $R_1$ and $W_2$ overlaps $R_2$.
    This implies that $W_1$, which begins after $W_2$ returns, begins before $R_1$ returns, but this contradicts that $W_2$ overlaps $R_2$.

    {\bf (2) $R_1$ reads $A$ and $R_2$ reads $B$:} 
    Let $W_2$ be the \textsc{Write} operation $R_2$ reads $B$ from (Figure~\ref{fig:thm:wait-free-q-hi-rega}). 
    $W_2$ writes 1 to $B[j]$ in Line~\ref{lin:write-1-B} after $R_1$ writes 0 to $B[j]$ in Line~\ref{lin:reader-read-B}. 
    Thus, $W_2$ first writes to $A$ after $R_1$ finishes reading $A$, and $R_1$ reads $A$ from \textsc{Write} operation $W_1$ that precedes $W_2$ in $\alpha$. By the construction of $H$, $R_2$ is placed between $W_2$ and the previous \textsc{Write} operation $W_3$ that directly precedes $W_2$. If $W_1 = W_3$, then by the construction of $H$, $R_1$ is placed before $R_2$ in $H$. Otherwise, $W_1$ precedes $W_3$ and this also implies that $R_1$ is placed before $R_2$ in $H$.

    {\bf (3) $R_1$ reads $B$ and $R_2$ reads $A$:} 
    Let $W_1$ be the \textsc{Write} operation $R_1$ reads $B$ from (Figure~\ref{fig:thm:wait-free-q-hi-regb}). By Lemma~\ref{lem:reads-B-overlaps}, $W_1$ overlaps $R_1$. Let $W_2$ be the \textsc{Write} operation that directly precedes $W_1$, then $R_1$ is placed between $W_1$ and $W_2$. Since $W_1$ starts before $R_2$ starts, the earliest \textsc{Write} operation $R_2$ can read $A$ from is $W_2$, that is, $R_2$ reads $A$ from $W_2$ or from \textsc{Write} operation that follows $W_2$ in $\alpha$.
    By the construction of $H$, $R_1$ is placed before $R_2$.
\end{proof}
\section{History Independent Implementations 
    from Smaller Base Objects}
\label{sec:small-base-objs}

In this section, we show that for a large class of objects, 
a reasonably strong notion of history independence---
\emph{state-quiescent history independence} (Definition~\ref{def:op-quiescentSHI})---cannot be achieved from smaller base objects, 
if we require wait-freedom. 



\subsection{The class $\calC_t$}
Informally, our impossibility result applies to all objects 
with the following properties:
\begin{itemize}
    \item The object has a ``non-trivial'' read operation, which is able to distinguish between $t$ different subsets of the object's possible states; and
    \item The object can be ``moved freely'' from any state to any other state, in a single operation.
\end{itemize}
In fact, the impossibility result applies to other objects, 
including a queue, which do not fall into this class
because they cannot be moved from any possible state to any other possible state in a single operation; for example, if a queue currently has two elements, we cannot reach the state where it is empty in one operation. 
For simplicity, we present here the proof for the simpler,
more restricted class described above,
and
we discuss a generalization for a queue in Section~\ref{sec:queue-impossibility}.

The class $\calC_t$ is formally defined as follows:
\begin{definition}[The class $\calC_t$]
An object $\calO$ is in the class $\calC_t$
for $t \geq 2$ if its state space $Q$
can be partitioned into $t$ nonempty subsets $X_1,\ldots,X_t$,
such that
\begin{itemize}
    \item The object has some operation $\Readop$,
    which does not change the state of the object,
    such that for any two states $q_i \in X_i, q_j \in X_j$
    where $i \neq j$,
    the response to $\Readop$ from state $q_i$
    differs from the response to $\Readop$ from state $q_j$.
    \item For any two states $q \neq q' \in Q$
    there is some operation $\Changeop(q, q')$
    that causes the object to transition
    from state $q$ to state $q'$.
\end{itemize}
\end{definition}

An object in the class $\calC_t$ has at least $t$ different states, and any pair of states are mutually reachable from each other by a single operation. 
Thus, the memory representations that arise from an implementation of 
an object in $\calC_t$, $t\geq 3$, from base objects with fewer than $t$ states cannot all be at distance 1 from each other. 
By Proposition~\ref{prop:p-hi} we obtain:

\begin{restatable}{proposition}{philb}
\label{prop:p-hi-lb}
There is no obstruction-free perfect HI implementation of an object in 
$\calC_t$, $t \geq 3$, from base objects with fewer than $t$ states.
\end{restatable}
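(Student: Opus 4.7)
The plan is to use Proposition~\ref{prop:p-hi} as a black box, together with a simple combinatorial fact about sets of vectors that are pairwise at Hamming distance at most 1, to force a contradiction on the size of the base-object alphabets.

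First I would pick one representative state $q_i \in X_i$ from each of the $t$ classes $X_1, \ldots, X_t$. By the definition of $\calC_t$, for any $i \neq j$ there is an operation $\Changeop(q_i, q_j)$ moving the object from $q_i$ to $q_j$ in one step, so Proposition~\ref{prop:p-hi} yields that the canonical memory representations $\can{q_1}, \ldots, \can{q_t}$ are pairwise at distance at most $1$. Next I would argue that these $t$ canonical representations are pairwise \emph{distinct}: if $\can{q_i} = \can{q_j}$ for some $i \neq j$, then running $\Readop$ solo (starting from a quiescent configuration whose memory is $\can{q_i}$) from a process that has just been initialized and is in no operation, determinism of the implementation would force the same response in both cases, contradicting the fact that $\Readop$ distinguishes states in $X_i$ from states in $X_j$. (Technically I would construct a quiescent execution ending in each canonical representation via a single process, which is possible under obstruction-freedom using $\Changeop$ operations starting from the initial configuration.)

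The heart of the argument is then the following combinatorial claim: if $v_1, \ldots, v_t$ are $t$ pairwise-distinct vectors in $Q_1 \times \cdots \times Q_m$ that are pairwise at Hamming distance at most $1$, then there is a single coordinate $c \in [m]$ on which they all disagree, and on every other coordinate they all agree. I would prove this by induction on $t$: for $t = 2$, it is immediate from the definition of distance $\leq 1$. For the step, fix the coordinate $c$ on which $v_1$ and $v_2$ differ, and observe that any $v_k$ with $k \geq 3$ must also differ from $v_1$ at $c$ (otherwise it would agree with $v_1$ at $c$ but differ from $v_1$ somewhere, forcing distance at least $2$ from $v_2$), and its value at $c$ must differ from both $v_1[c]$ and $v_2[c]$ by distinctness. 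Extending to all $k$, the values $v_1[c], \ldots, v_t[c]$ are $t$ distinct elements of the state space $Q_c$ of the $c$-th base object.

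Combining the two pieces gives the contradiction: some base object must have at least $t$ distinct states, contradicting the assumption that every base object has fewer than $t$ states. I do not expect any real obstacle beyond step (2), where one must carefully cite Proposition~\ref{prop:p-hi} (which is stated under obstruction-freedom) and rule out the degenerate case $\can{q_i} = \can{q_j}$ using determinism of the $\Readop$ response; the combinatorial claim itself is short.
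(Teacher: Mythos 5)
Your proposal is correct and takes essentially the same route as the paper's proof: pick one representative per class, use the deterministic solo $\Readop$ to show the canonical representations are pairwise distinct, invoke Proposition~\ref{prop:p-hi} for pairwise distance at most~1, deduce that all $t$ representations differ in a single common coordinate, and derive a pigeonhole contradiction with the base objects having fewer than $t$ states. The only cosmetic difference is that you isolate the ``single common coordinate'' step as a standalone combinatorial claim proved by induction, while the paper argues it inline by centering at $\can{q_1}$.
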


\begin{proof}
    Consider an object in $\calC_t$, $t \geq 3$, and for each $1\leq i \leq t$, choose a state $q_i \in X_i$. Let $\can{q_1}, \ldots, \can{q_t}$ be a possible configuration of canonical memory representations that arise from a perfect HI implementation. 
    Since $\Changeop$ returns a different response from $q_i$ and from $q_j$, $i\neq j$, $q_1, \ldots, q_t$ are $t$ distinct states that also induce $t$ distinct memory representations, since an $\Changeop$ that runs solo from the same memory representation must return the same response.
    By Proposition~\ref{prop:p-hi}, 
    this implies that for every $2\leq i \leq t$, there is an index $j_i$ such that $\can{q_1}[j_i] \neq \can{q_i}[j_i]$, and for any other index, the memory representations are equal.
    Assume there are $2\leq i,k\leq t$ such that $j_i \neq j_k$, this implies that $\can{q_i}[j_k] = \can{q_1}[j_k] \neq \can{q_k}[j_k]$ and  $\can{q_k}[j_i] = \can{q_1}[j_i] \neq \can{q_i}[j_i]$ and therefore, $\can{q_i}$ and $\can{q_k}$ are at distance at least 2. Thus, for every, $2\leq i,k\leq t$, $j_i = j_k$.
    However, since the base object has less than $t$ state this implies that for some $i\neq k$, $\can{q_i} = \can{q_k}$, 
    which is a contradiction.
\end{proof}



        

\subsubsection*{Examples of objects in the class $\calC_t$}

A $t$-valued read/write register is in the class $\calC_t$:
it has $t$ different states, each representing the value the register holds, and the $\Read$ operation distinguishes between them;
the $\Write$ operation can move us from any state (i.e., any value) to any other state in a single operation.
We have already seen in Section~\ref{sec:swsrexample} that multi-valued registers \emph{can} be implemented from binary registers, if we weaken either the progress or the history independence requirements.
Our results for SWSR multi-valued registers are summarized in Table~\ref{tab:regImplementationSumm}.
\begin{table}[tb]
\centering\small
\begin{tabular}{|c|c|c|c|}
\hline
Perfect HI (Def.~\ref{def:perfectSHI}) & State-quiescent HI (Def.~\ref{def:op-quiescentSHI}) & Quiescent HI (Def.~\ref{def:quiescentSHI}) & Progress \\ \hline
Impossible (Prop.~\ref{prop:p-hi-lb}) & Impossible (Cor.~\ref{thm:oq-hi-lb}) & Possible (Alg.~\ref{alg:wait-free-quiescent}) & Wait-free \\ \hline
Impossible (Prop.~\ref{prop:p-hi-lb}) & Possible (Alg.~\ref{alg:lock-free-op-quiescent}) & Possible (Alg.~\ref{alg:lock-free-op-quiescent} \& Alg.~\ref{alg:wait-free-quiescent}) &  Lock-free \\ \hline
\end{tabular}
\caption{Summary of results for implementing a SWSR multi-valued register from binary registers}
\label{tab:regImplementationSumm}
\end{table}

Another example
of an object in the class $\calC_t$
is a $t$-valued {\CAS} object that supports a read operation:
the state is again the current value of the $\CAS$,
and the read operation distinguishes between all $t$ possible values;
as for the $\Changeop$ operation,
we can move from any state $q$ to any state $q'$
by invoking $\CAS(X, q, q')$.


To illustrate the importance of the state-connectivity requirement 
in the definition of the class $\calC_t$, 
we argue that a \emph{max register}~\cite{AspnesAC2009}, 
returning the maximum value ever written in it,  is \emph{not} in the class.
The state space of a max register is not well-connected: as soon as we visit state $m$,
the object can never go back to a state smaller than $m$.
A simple modification to Algorithm~\ref{alg:multi-valued-baseline}, 
where the writer only writes to $A$ if the new value is bigger than all the values 
it has written in the past, results in a wait-free state-quiescent HI max register from binary registers.

Another object that is not in the class $\calC_t$ is a \emph{set} over $t$ elements, with insert, remove and lookup operations. 
Even though the set has $2^t$ possible abstract states, its operations return only two responses, ``success'' or ``failure''; thus, we cannot \emph{distinguish} via a single operation between all $2^t$ states, or even between $t$ states (the number of elements that could be in the set).
There is a simple wait-free perfect HI implementation of a set 
over the domain $\set{1,\ldots,t}$, using $t$ binary registers:
we simply represent the set as an array $S$
of length $t$, with $S[i] = 1$ if and only if element $i$ is in the set,
with the obvious implementation of insert, delete and lookup.

\subsection{Impossibility of Wait-Free, State-Quiescent HI Implementations for the Class $\calC_t$}
\label{sec:gen-lb}

Let $\calO  \in \calC_t$ be a high-level object with state space $Q$. 
Consider a wait-free state-quiescent HI implementation of $\calO$
using $m \geq 1$ base objects $\obj_1,\ldots, \obj_m$.
For each base object $\obj_i$,
let $Q_i$ be the state space of $\obj_i$;
we assume that $|Q_i| \leq t-1$.
This is the only assumption we make about the base objects, 
and our impossibility result applies to arbitrary read-modify-write base objects 
as well as to simple read-write base registers.
Let $h$ be a linearization function for the implementation.



We consider executions with two processes:
\begin{itemize}
    \item A ``reader'' process $r$, which executes a single $\Readop$ operation, and
    \item A ``changer'' process $c$, which repeatedly invokes $\Changeop$ operations.
\end{itemize}

For the purpose of the impossibility result, we assume, 
that the local state of a process $p_i$ contains the complete history of $p_i$'s invocations and responses. 
Our goal is to show that we can construct an execution where $r$ 
does not return from its single $\Readop$ operation,
violating wait-freedom.

The executions that we construct have the following form:
\begin{equation*}
    \alpha_{q_0,\ldots,q_k} = \Changeop(q_0, q_1), r_1,
    \Changeop(q_1, q_2), r_2, \ldots, \Changeop(q_{k-1}, q_k), r_k
\end{equation*}
where $\Changeop(q_i, q_{i+1})$ is an operation
executed by the changer process during which the reader process takes no steps,
and $r_i$ is a single step by the reader process.
The reader
executes a single $\Readop$
operation that is invoked immediately
after the first $\Changeop$
operation completes, and we will argue that the reader never returns.

In any linearization of $\alpha_{q_0,\ldots,q_k}$,
the operations $\Changeop(q_0,q_1),\ldots,\Changeop(q_{k-1},q_k)$
must be linearized in order,
as they do not overlap.
Furthermore, the $\Readop$
operation carried out by the reader is not state-changing.
Thus, the linearization of $\alpha_{q_0,\ldots,q_k}$ ends
with the object in state $q_k$,
and we abuse the terminology by saying that the execution ``ends at state $q_k$''.

We say that execution $\alpha_{q_0,\ldots,q_k}$
\emph{avoids} a subset $X \subseteq Q$
if $\set{q_1,\ldots,q_k} \cap X = \emptyset$.
(Note that we may have $q_0 \in X$ and still 
say that $\alpha_{q_0,\ldots,q_k}$ avoids $X$;
this is fine for our purposes, because the reader 
only starts running after the first $\Changeop$ operation completes.)

\begin{lemma}
    There exists a partition of the possible return values $R$
    for the $\Readop$ into subsets $R_1,\ldots,R_t$,%
    \footnote{We assume there are no unused values in $R$,
    that is, for any $r \in R$,
    there is some state $q \in Q$ such that
    when $\Readop$ is executed from state $q$,
    it returns $r$.}
    such that if an execution $\alpha_{q_0,\ldots,q_k}$ avoids
    $X_i \subseteq Q$,
    then the $\Readop$ operation cannot return any value
    from $R_i$ at any point in $\alpha_{q_0,\ldots,q_k}$.
    \label{lemma:exclusive_returns}
\end{lemma}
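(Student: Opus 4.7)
The plan is to define the partition explicitly: for each $i \in \{1,\ldots,t\}$, let $R_i$ be the set of responses that $\Readop$ produces when applied (sequentially) from some state $q \in X_i$. The first step is to verify that $R_1,\ldots,R_t$ genuinely partition $R$. Disjointness is immediate from the defining property of $\calC_t$: if $r \in R_i \cap R_j$ with $i \neq j$, then there would be some $q \in X_i$ and $q' \in X_j$ from which $\Readop$ returns the same response $r$, contradicting the fact that $\Readop$ distinguishes states in different parts of the partition $X_1,\ldots,X_t$. Coverage (that $R = \bigcup_i R_i$) follows from the footnote assumption that every response in $R$ is actually produced by $\Readop$ from some state, combined with the fact that $X_1,\ldots,X_t$ partition $Q$.

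The second step is to prove the main claim via a linearization argument. Suppose $\alpha_{q_0,\ldots,q_k}$ avoids $X_i$, and suppose toward contradiction that $\Readop$ returns a value $r \in R_i$ at some point in $\alpha_{q_0,\ldots,q_k}$. Let $\alpha'$ be the shortest prefix of $\alpha_{q_0,\ldots,q_k}$ in which the $\Readop$ response has already occurred, and consider the linearization $h(\alpha')$. Because $\Readop$ has completed in $\alpha'$, the linearization must contain it. The real-time order constraint forces $\Readop$ to be placed strictly after $\Changeop(q_0,q_1)$ (whose response precedes the $\Readop$ invocation), and strictly before any $\Changeop(q_j,q_{j+1})$ whose invocation occurs after $\Readop$'s response. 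Hence in $h(\alpha')$, the $\Readop$ sits between two consecutive changer operations $\Changeop(q_{j-1},q_j)$ and $\Changeop(q_j,q_{j+1})$ (or after the last completed changer operation), for some $j \in \{1,\ldots,k\}$.

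The third step closes the argument: at the moment $\Readop$ is linearized, the sequential state of the object is $q_j$ for some $j \geq 1$. Since $\alpha_{q_0,\ldots,q_k}$ avoids $X_i$, we have $q_j \notin X_i$, i.e., $q_j \in X_{j'}$ for some $j' \neq i$. Because $h(\alpha')$ matches the sequential specification and $\Delta$ is deterministic, the response of $\Readop$ in $h(\alpha')$ must lie in $R_{j'}$, and by disjointness of the partition it cannot lie in $R_i$, contradicting the assumption that $r \in R_i$.

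I do not expect a serious obstacle here; the proof is essentially bookkeeping over the partition together with the standard real-time consequence of linearizability. The only subtle point worth being careful about is the formal meaning of ``at any point in $\alpha_{q_0,\ldots,q_k}$'': one must work with the prefix of the execution at which $\Readop$'s response occurs, and invoke linearizability of that prefix, rather than of the full execution, so that no later changer operation (which could move the state into $X_i$) can interfere with where $\Readop$ is placed in the linearization.
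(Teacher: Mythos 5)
Your proposal is correct and follows essentially the same route as the paper's proof: the same definition of $R_i$ as the responses attainable from states in $X_i$, the same disjointness/coverage check, and the same linearization argument placing the $\Readop$ after some $\Changeop(q_j,q_{j+1})$ so that it observes a state outside $X_i$. Your extra care in passing to the prefix of the execution at which the response occurs is a minor sharpening of rigor but does not change the substance of the argument.
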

\begin{proof}
    For each $1 \leq i \leq t$,
    let $R_i$ be the set of values $r$
    such that for some state $q \in X_i$,
    the $\Readop$ operation returns $r$
    when executed from state $q$.
    By the definition of the class $\calC_t$,
    the sets $R_1,\ldots,R_t$
    are disjoint,
    and since the sets $X_1,\ldots,X_t$
    partition the state space $Q$,
    the sets
    $R_1,\ldots,R_t$
    partition the set of responses $R$.

    Fix an execution $\alpha_{q_0,\ldots,q_k}$
    that avoids $X_i$,
    and recall that in any linearization,
    the operations
    $\Changeop(q_0,q_1),\ldots,\Changeop(q_{k-1},q_k)$
    must be linearized in-order,
    as they are non-overlapping operations by the same process.
    The $\Readop$ operation cannot be linearized before
    the first operation $\Changeop(q_0, q_1)$,
    because it is only invoked after this operation completes.
    Thus, the $\Readop$ operation 
    either does not return in $\alpha_{q_0,\ldots,q_k}$,
    or it is linearized after some operation
    $\Changeop(q_j, q_{j+1})$ where $j \geq 0$.
    In the latter case, let $\ell \neq i$ be the index such that $q_{j+1}\in X_{\ell}$;
    we know that $\ell \neq i$
    as $\alpha_{q_0,\ldots,q_k}$ avoids $X_i$.
    The value returned by the $\Readop$
    is in the set $R_{\ell}$,
    which is disjoint from $R_i$.
    Therefore in $\alpha_{q_0,\ldots,q_k}$ the
    $\Readop$ operation either does not return,
    or returns a value that is not in $R_i$.
\end{proof}




Using the fact that each base object has at most $t -1$
possible states, we can construct 
$t$ arbitrarily long executions
that the reader cannot distinguish from one another,
such that each subset $X_i$ is avoided by one of the $t$ executions.
Two execution prefixes $\alpha_1$ and $\alpha_2$ are \emph{indistinguishable} 
to the reader, denoted $\alpha_1 \localind{r} \alpha_2$, 
if the reader is in the same state in the final configurations of $\alpha_1$ and $\alpha_2$.

The construction is inductive, with each step extending the executions
by one operation and a single step of the reader:
\begin{lemma}
Fix $k \geq 0$,
and suppose we are given $t$
    executions of the form
    $\alpha_i = \alpha_{q_0^i,\ldots,q_k^i}$
    for $i = 1,\ldots,t$,
    such that $\alpha_1 \localind{r} \ldots \localind{r} \alpha_t$,
    and each $\alpha_i$ avoids $X_i$.
    Then we can extend
    each $\alpha_i$
    into an execution 
    $\alpha_i' = \alpha_{q_0^i,\ldots,q_{k+1}^i}$
    that also avoids $X_i$,
    such that $\alpha_1' \localind{r} \ldots \localind{r} \alpha_t'$.
    \label{lemma:exclusive_executions}
\end{lemma}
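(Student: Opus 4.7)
The plan is to run a pigeonhole argument on the state space of the single base object that the reader is about to touch, combining it with state-quiescent HI to force indistinguishability. Since $\alpha_1 \localind{r} \ldots \localind{r} \alpha_t$, the reader's next step $r_{k+1}$ performs the same primitive operation $\op$ on the same base object $\obj_j$ in every $\alpha_i$. For any candidate extension $\alpha_i' = \alpha_i \cdot \Changeop(q_k^i, q_{k+1}^i) \cdot r_{k+1}$, the configuration immediately after $\Changeop(q_k^i, q_{k+1}^i)$ is state-quiescent (the only pending operation is the read-only $\Readop$), so by state-quiescent HI the memory equals the canonical representation $\can{q_{k+1}^i}$, and in particular $\obj_j$ is in state $\can{q_{k+1}^i}[j]$. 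Thus it suffices to choose each $q_{k+1}^i$ so that (a) $q_{k+1}^i \notin X_i$, and (b) $\can{q_{k+1}^i}[j]$ is independent of $i$; then $\op$ returns the same response in every extension and the reader's local state coincides afterwards.

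For each $s \in Q_j$ let $T(s) = \{q \in Q : \can{q}[j] = s\}$. The non-empty $T(s)$'s partition $Q$ into at most $|Q_j| \leq t-1$ parts, while $X_1,\ldots,X_t$ partition $Q$ into $t$ non-empty parts. A simple counting argument shows that no assignment can map each non-empty $T(s)$ into a single $X_i$ covering all $X_i$'s: that would need at least $t$ distinct non-empty $T(s)$'s. Hence there is some $s^* \in Q_j$ such that $T(s^*)$ meets at least two different classes, say $q^a \in T(s^*) \cap X_a$ and $q^b \in T(s^*) \cap X_b$ with $a \neq b$. I would then set $q_{k+1}^i = q^a$ for every $i \neq a$ and $q_{k+1}^a = q^b$; property (a) is immediate ($q^a \in X_a$ lies outside every $X_i$ with $i \neq a$, and $q^b \in X_b$ lies outside $X_a$), and property (b) holds because every choice lies in $T(s^*)$.

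There is one bookkeeping subtlety: $\Changeop(q_k^i, q_{k+1}^i)$ is only defined when $q_k^i \neq q_{k+1}^i$. For $i \notin \{a,b\}$ we have both $q^a$ and $q^b$ available in $T(s^*) \setminus X_i$, so at least one differs from $q_k^i$. For $i \in \{a,b\}$ we can break ties by choosing, among all pairs $(s^*, q^a, q^b)$ produced by the pigeonhole argument, one that avoids the finitely many forbidden values $\{q_k^1, \ldots, q_k^t\}$; the same counting argument (applied to $Q$ minus a bounded number of points) still goes through because $|Q| \geq t$ and the $X_i$ are non-empty, so this is a routine refinement rather than a real obstacle.

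The main conceptual step---and the one I expect to require the most care to state cleanly---is the pigeonhole extraction of $s^*$ with $T(s^*)$ crossing at least two $X_i$'s, which is exactly where the hypothesis $|Q_i| \leq t-1$ is used. Once $s^*$ and the $q_{k+1}^i$'s are fixed, everything else is immediate: each $\alpha_i'$ still avoids $X_i$ because $\alpha_i$ does and $q_{k+1}^i \notin X_i$; state-quiescent HI plants $\obj_j$ in state $s^*$ uniformly across all $\alpha_i'$ before $r_{k+1}$; and therefore $r_{k+1}$ induces the same reader-state transition in every extension, giving $\alpha_1' \localind{r} \ldots \localind{r} \alpha_t'$.
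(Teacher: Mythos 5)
Your proof is correct and follows essentially the same route as the paper's: the reader's next step touches a single base object with at most $t-1$ states, state-quiescent HI pins the memory to the canonical representation after each complete $\Changeop$, and a pigeonhole over the $t$ classes versus the at most $t-1$ projections yields two states in distinct classes with the same projection, from which the $q_{k+1}^i$ are chosen exactly as in the paper. The only divergence is your tie-breaking refinement for the case $q_{k+1}^i = q_k^i$ (an edge case the paper silently ignores); note that re-applying the counting argument to $Q$ minus the forbidden states can fail when $|Q| = t$, but the edge case is harmless anyway, since when the abstract state need not change one can simply append no $\Changeop$ at that step, leaving the canonical memory, the avoidance property, and the reader's view intact.
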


\begin{proof}
    Let $\alpha_i = \alpha_{q_0^i, \ldots, q_k^i}$
    for $i = 1,\ldots,t$ be executions
    satisfying the conditions of the lemma,
    and let us construct extensions $\alpha_i' = \alpha_{q_0^i, \ldots, q_k^i, q_{k+1}^i}$ for each $i = 1,\ldots,t$.
    By assumption, the reader is in the same local state
    at the end of all executions $\alpha_i$ for $1 \leq i \leq t$,
    and so its next step is the same in all of them.
    Our goal is to 
    choose a next state $q_{k+1}^i$
    for each $i = 1,\ldots,t$,
    and extend each $\alpha_i = \alpha_{q_0^i,\ldots,q_k^i}$
    into $\alpha_i' = \alpha_{q_0^i,\ldots,q_k^i}$
    by appending an operation $\Changeop(q_k^i, q_{k+1}^i)$,
    followed by a single step of the reader.
    We must do so in a way that continues to avoid
    $X_i$, and maintains indistinguishability to the reader.

    Since the implementation of $\calO$ is state-quiescent HI
    and since each execution $\alpha_i$
    ends in a state-quiescent configuration,
    if $\alpha_i$
    ends in state $q$,
    then the memory must be in its canonical representation, $\can{q}$
    (as defined in Section~\ref{sec:prelim}).
    
    Let $\obj_{\ell}$
    be the base object
    accessed by the reader in its next step
    in all $t$ executions.
    Because $\obj_{\ell}$
    has only $t - 1$
    possible memory states
    and there are $t$ subsets $X_1,\ldots,X_t$,
    there must exist two distinct subsets $X_j, X_{j'}$
    ($j \neq j'$)
    and two states $q \in X_j, q' \in X_{j'}$
    such that
    $\can{q}[\ell] = \can{q'}[\ell]$.
    For every $1 \leq i \leq t$,
    there is a state $q_{k+1}^i \in \set{q,q'}$
    such that $q_{k+1}^i \notin X_i$:
    if $i \notin \set{j, j'}$
    then we choose between $q$ and $q'$
    arbitrarily, and if $i = j$ or $i = j'$
    then we choose $q_{k+1}^i = q'$
    or $q_{k+1}^i = q$, respectively.
    
    We extend each $\alpha_i = \alpha_{q_0^i,\ldots,q_{k}^i}$
    into $\alpha_i' = \alpha_{q_0^i,\ldots,q_k^i, q_{k+1}^i}$
    by appending
    a complete $\Changeop(q_k^i, q_{k+1}^i)$
    operation, followed by a single
    step of the reader.
    The resulting execution $\alpha_i'$
    still avoids $X_i$,
    as we had $\set{ q_1^i,\ldots, q_k^i } \cap X_i = \emptyset$,
    and the new state also satisfies $q_{k+1}^i \notin X_i$.
    Moreover,
    when the reader takes its step,
    it observes the same state for the base
    object $\obj_{\ell}$ that it accesses in all executions,
    as all of them end in either state $q$
    or state $q'$,
    and $\can{q}[\ell] = \can{q'}[\ell]$.
    Therefore, the reader cannot distinguish
    the new executions from one another.
\end{proof}

By repeatedly applying Lemma~\ref{lemma:exclusive_executions},
we can construct arbitrarily long executions,
with the reader taking more and more steps
(since in an execution $\alpha_{q_0,\ldots,q_k}$
the reader takes $k$ steps) but never returning.
This contradicts the wait-freedom of the implementation,
to yield:

\begin{restatable}{theorem}{oqhigenlb}
\label{thm:oq-hi-gen-lb}
For any object $\calO$
in the class $\calC_t$,
there is no wait-free implementation 
that is state-quiescent HI using base objects with fewer than $t$
states.
\end{restatable}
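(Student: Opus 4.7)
The plan is to combine the two preceding lemmas into a wait-freedom contradiction. Assume, for contradiction, that a wait-free, state-quiescent HI implementation of $\calO$ from base objects with fewer than $t$ states exists, with linearization function $h$. By Lemma~\ref{lemma:exclusive_returns}, there is a partition $R = R_1 \cup \cdots \cup R_t$ of the response set with the property that any execution avoiding $X_i$ cannot have the $\Readop$ return a value in $R_i$.

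Next, I would bootstrap an infinite inductive application of Lemma~\ref{lemma:exclusive_executions}. For the base case $k = 0$, take $\alpha_i = \alpha_{q_0}$ (the ``empty'' execution from the initial state) for every $i \in \{1, \ldots, t\}$; the avoidance condition holds vacuously because $\{q_1, \ldots, q_k\}$ is empty, and indistinguishability to the reader is immediate since the reader has not yet taken any step. Applying Lemma~\ref{lemma:exclusive_executions} repeatedly then yields, for every $k \geq 0$, a family $\alpha_1^{(k)}, \ldots, \alpha_t^{(k)}$ of executions such that the reader has taken $k$ of its own steps in each, they are pairwise indistinguishable to the reader, and each $\alpha_i^{(k)}$ avoids $X_i$.

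To close the argument I would invoke wait-freedom. Fix any one of the threads of extensions, say $\alpha_1^{(0)}, \alpha_1^{(1)}, \alpha_1^{(2)}, \ldots$; the reader takes one additional primitive step at each stage, so by wait-freedom its single $\Readop$ invocation must complete after some finite $k^*$ steps. Because the reader's local state is identical in $\alpha_1^{(k^*)}, \ldots, \alpha_t^{(k^*)}$, the $\Readop$ has in fact completed in every $\alpha_i^{(k^*)}$ with the same response value $v$. But Lemma~\ref{lemma:exclusive_returns} applied to each $\alpha_i^{(k^*)}$ in turn gives $v \notin R_i$ for every $i$, and hence $v \notin R_1 \cup \cdots \cup R_t = R$, contradicting the fact that $v$ is a legal response.

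The main obstacle has already been absorbed into Lemma~\ref{lemma:exclusive_executions}, which is where the hypothesis $|Q_\ell| < t$ is used via pigeonhole on the state of the single base object the reader accesses next, together with state-quiescent HI to conclude that this base object is in its canonical state at the end of each $\alpha_i^{(k)}$. Once that machinery is in place, the remaining contradiction is a straightforward application of wait-freedom together with the disjointness of the $R_i$'s.
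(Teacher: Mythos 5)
Your proposal is correct and follows essentially the same route as the paper's proof: the same inductive construction via Lemma~\ref{lemma:exclusive_executions} from empty executions, followed by the observation that indistinguishability forces a common return value excluded from every $R_i$ by Lemma~\ref{lemma:exclusive_returns}. The only difference is presentational — you invoke wait-freedom first to extract a finite completion point $k^*$, whereas the paper assumes a return and derives non-termination — but the underlying argument is identical.
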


\begin{proof}
    We construct $t$
    arbitrarily long executions, in each of which an $\Readop$
    operation takes infinitely many steps but never returns.
    The construction uses Lemma~\ref{lemma:exclusive_executions}
    inductively:
    we begin with empty executions,
    $\alpha_1^0 = \ldots = \alpha_t^0 = \alpha_{q_0}$.
    These executions trivially satisfy the conditions of Lemma~\ref{lemma:exclusive_executions},
    as each $\alpha_i^0$ avoids $X_i$ (technically,
    it avoids \emph{all} subsets $X_j$),
    and furthermore,
    since the reader has yet to take a single step
    in any of them,
    it is in the same local state in all executions.
    We repeatedly apply Lemma~\ref{lemma:exclusive_executions}
    to extend these executions,
    obtaining for each $k \geq 0$
    a collection of
    $t$  executions
    $\alpha_1^k = \alpha_{q_0^1,q_1^1,\ldots},\ldots,\alpha_t^k = \alpha_{q_0^t,q_1^t,\ldots}$,
    such that each $\alpha_i^k$ avoids $X_i$,
    and the reader cannot distinguish the executions from one another.

    Suppose for the sake of contradiction
    that the reader returns a value
    $r$ at some point in $\alpha_1^k$.
    Then it returns the same value $r$
    at some point in each execution $\alpha_i^k$
    for each $i = 1,\ldots,t$,
    as it cannot distinguish these executions,
    and its local state encodes
    all the steps it has taken,
    including whether it has 
    returned a value,
    and if so, what value.
     Let $R_1,\ldots,R_t$
    be the partition from Lemma~\ref{lemma:exclusive_returns}.
    By the lemma,
    for each $i = 1,\ldots,t$,
    since execution $\alpha_i^k$
    avoids $X_i$,
    we must have $r \notin X_i$.
    But this means that there is no value
    that the reader can return, a contradiction.

    Continuing on in this way,
    we can construct arbitrarily long executions,
    with the reader taking more and more steps
    (since in an execution $\alpha_{q_0,\ldots,q_k}$
    the reader takes $k$ steps)
    but never returning.
    This contradicts the wait-freedom of the implementation.    
\end{proof}

\subsection{Multi-Valued Register}

For read/write registers, we obtain the following impossibility result:

\begin{corollary}
\label{thm:oq-hi-lb}
    There is no wait-free state-quiescent HI implementation of a $t$-valued register, $t\geq 3$, from binary registers.
\end{corollary}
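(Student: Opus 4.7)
The plan is to derive the corollary as a direct instantiation of Theorem~\ref{thm:oq-hi-gen-lb}, using the observation already made in the ``Examples of objects in the class $\calC_t$'' subsection that a $t$-valued read/write register belongs to $\calC_t$.

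First I would verify that the $t$-valued register $\calO$ satisfies the two properties required by the definition of $\calC_t$. Its state space is $Q = \{v_1, \dots, v_t\}$, the set of $t$ possible values the register can hold, partitioned trivially into the $t$ singletons $X_i = \{v_i\}$. The $\Read$ operation does not change the state and returns $v_i$ from state $v_i$, so it yields distinct responses from states in distinct parts of the partition; this discharges the $\Readop$ requirement. For the $\Changeop$ requirement, for any two states $v_i \neq v_j$ the operation $\Write(v_j)$ moves the register from state $v_i$ to state $v_j$ in a single step. Hence $\calO \in \calC_t$.

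Next I would observe that a binary register has exactly $2$ possible states, and since $t \geq 3$ this satisfies the hypothesis of Theorem~\ref{thm:oq-hi-gen-lb} that the base objects have fewer than $t$ states. Applying the theorem to $\calO$ with this choice of base objects immediately yields the non-existence of a wait-free state-quiescent HI implementation, as required.

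There is no real obstacle here: the content of the corollary is entirely in Theorem~\ref{thm:oq-hi-gen-lb}, and the only work is to spell out the membership of the $t$-valued register in $\calC_t$, which the preceding discussion already contains. The proof is therefore essentially a two-sentence appeal to the theorem.
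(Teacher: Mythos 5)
Your proposal is correct and matches the paper exactly: the corollary is stated there without a separate proof, as an immediate instantiation of Theorem~\ref{thm:oq-hi-gen-lb} using the observation (made in the ``Examples of objects in the class $\calC_t$'' discussion) that a $t$-valued register lies in $\calC_t$ and that binary registers have $2 < t$ states for $t \geq 3$. Nothing is missing.
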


This impossibility result is complemented by the register implementations we described in Section~\ref{sec:swsrexample},
where we relaxed either the progress condition
or the history independence condition to circumvent the impossibility.
%
Corollary~\ref{thm:oq-hi-lb} also implies:

\begin{restatable}{proposition}{readerMustWrite}
\label{lem:reader-must-write}
The reader must write in any wait-free quiescent HI implementation of a SWSR multi-valued register from binary registers.
\end{restatable}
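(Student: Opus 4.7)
The plan is to prove the proposition by contradiction. Suppose there exists a wait-free quiescent HI implementation of a SWSR $K$-valued register, for $K \geq 3$, from binary registers, in which the reader never writes to shared memory. I will show that such an implementation is in fact state-quiescent HI, contradicting Corollary~\ref{thm:oq-hi-lb}.

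Let $h$ be a linearization function witnessing quiescent HI. Given any state-quiescent execution $\beta$, define $\beta^{*}$ to be the execution obtained by deleting all steps of the reader from $\beta$. Because the reader performs no shared-memory writes, its steps do not change the shared memory and do not influence the writer's local state, so $\beta^{*}$ is itself a legal execution of the implementation, and $\mem{\beta^{*}} = \mem{\beta}$. Moreover, $\beta^{*}$ is fully quiescent: $\beta$ contained no pending $\Write$ operation (by state-quiescence), and $\beta^{*}$ contains no $\Read$ operations at all.

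Next I would argue that the abstract state is preserved under this stripping. In $\beta$, the only operations that can change the state of the register are the $\Write$ operations, which all belong to the single writer and are all completed; in any linearization, they must appear in their real-time order. Since $\Read$ operations are read-only, $\seqstate{h(\beta)}$ is simply the value written by the last completed $\Write$ in $\beta$, and the same holds for $\seqstate{h(\beta^{*})}$, giving $\seqstate{h(\beta)} = \seqstate{h(\beta^{*})}$. Now take any two state-quiescent executions $\beta_1, \beta_2$ with $\seqstate{h(\beta_1)} = \seqstate{h(\beta_2)}$. Then $\seqstate{h(\beta_1^{*})} = \seqstate{h(\beta_2^{*})}$, and since $\beta_1^{*}, \beta_2^{*}$ are quiescent, quiescent HI yields $\mem{\beta_1^{*}} = \mem{\beta_2^{*}}$. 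Combined with $\mem{\beta_i} = \mem{\beta_i^{*}}$, this gives $\mem{\beta_1} = \mem{\beta_2}$, establishing state-quiescent HI.

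The main subtle point is the reduction step itself: I need to be confident that removing the reader's (write-free) steps yields a valid execution whose memory and abstract state coincide with the original. This is immediate because reader steps are pure reads of shared memory and hence do not affect either shared state or the writer's local computation, and because $\Read$ is not state-changing on a register. Given this reduction, the contradiction with Corollary~\ref{thm:oq-hi-lb} is immediate, and we conclude that the reader must write in any wait-free quiescent HI implementation of a SWSR multi-valued register from binary registers.
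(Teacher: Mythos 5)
Your proof is correct, and it follows the same overall strategy as the paper: assume the reader never writes, show that the implementation is then state-quiescent HI, and contradict Corollary~\ref{thm:oq-hi-lb}. The one genuine difference is the direction of the reduction from a state-quiescent execution to a quiescent one. The paper extends $\alpha$ \emph{forward}, letting the reader run solo until its pending $\Read$ completes (which is where wait-freedom is used inside the reduction), and observes that the memory and abstract state are unchanged. You instead go \emph{backward}, deleting all of the reader's steps to obtain $\beta^{*}$; this requires the (easy but necessary) observation that stripping purely-reading steps yields a legal execution with the same memory and the same sequence of $\Write$s, which you state and justify. Your variant has the minor advantage of not needing wait-freedom for the reduction itself --- it shows that \emph{any} quiescent HI implementation with a non-writing reader is state-quiescent HI --- with wait-freedom entering only through the appeal to Corollary~\ref{thm:oq-hi-lb}; the paper's forward extension avoids having to argue that a modified step sequence is still an execution of the implementation. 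Both arguments are sound and of essentially the same length.
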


\begin{proof}
    Assume there is a wait-free quiescent HI implementation with linearization function $h$ where the reader never writes to the shared memory. 
    Let $\alpha$ be a finite execution of the implementation, which ends with a configuration without a pending $\textsc{Write}$ operation.
    If there is also no pending $\textsc{Read}$ operation in $\alpha$, the execution ends in a quiescent configuration and the memory of this configuration is equal to the canonical memory representation of the value $\seqstate{h(\alpha)}$.

    Otherwise, since the implementation is wait-free, 
    there is an extension of $\alpha$, $\beta = \alpha \alpha'$, 
    such that only the reader takes steps in $\alpha'$ and completes the pending $\textsc{Read}$ operation, and $\beta$ ends in a quiescent configuration.
    Since the reader does not write to the shared memory, and the writer did not take steps in $\alpha'$, $\mem{\alpha} = \mem{\beta}$.
    In addition, the last high-level value written in $\alpha$ and $\beta$ are identical, i.e., $\seqstate{h(\alpha)} = \seqstate{h(\beta)}$.
    Since $\beta$ ends with a quiescent configuration, the memory of this configuration is equal to the canonical memory representation of the last written value in $\beta$ and the same is true for $\alpha$.
    This shows the implementation is state-quiescent HI, 
    contradicting Corollary~\ref{thm:oq-hi-lb}, 
    which states that there is no wait-free state-quiescent HI implementation.
\end{proof}

\subsection{Extension to a Queue}
\label{sec:queue-impossibility}

The state space of a \emph{queue} is described as a list, that is, an ordered set of elements.
An $\enq(v)$ operation adds the element $v$ to the end of the list and a $\deq()$ operation removes the first element from the list and returns it. We consider a queue that supports a $\peek()$ operation, 
which returns the first element in the list, without changing the object's state.
The elements added to the queue are taken from the finite domain $\set{1,\dots, t}$, $t\geq 2$.
The response space $R = \set{r_0, \ldots, r_t}$ is of size $t+1$, where response $r_i = i$, $1\leq i \leq t$, indicates that $i$ is the first element in the list, and response $r_0 = \emptyset$ indicates that the queue is empty. For simplicity, response $r_0$ is also used as a default response from an {\enq} operation. 

\ns{A queue is not in the class $\calC_t$ since not all states are reachable from each other by a single operation. Items can be added to the queue using $\enq$ operations without changing the value of the item in the front of the queue, which is the response from a $\peek$ operation.
We can use this property to move from state to state, without going through a state that returns a response we wish to avoid. 
To fully control the response values the $\peek$ operation can return in the executions we construct, we pick a \emph{representative} state for each response value, instead of partitioning the whole state space, and construct executions alternating between these states.}

We pick $t+1$ {representative} states from $Q$, $q_0, \dots q_t$, such that the response of {\peek} from state $q_i$, $0\leq i\leq t$, is $r_i$; let $q_i = \set{i}$, $1\leq i \leq t$, that is, a queue that contains exactly one element $i$, and $q_0 = \emptyset$ is the initial state where the queue is empty.

For $i_1\neq i_2$, $0\leq i_1,i_2 \leq t$, define a sequence of operations $S(i_1,i_2)$ that moves the object's state from $q_{i_1}$ to $q_{i_2}$ as follows:
\begin{enumerate}
    \item \underline{If $i_1 = 0$:} Define $S(i_1,i_2) = \enq(i_2)$. 
    \item \underline{If $i_2 = 0$:} Define $S(i_1,i_2) = \deq()$. 
    \item \underline{Otherwise, if $i_1,i_2 \neq 0$:} Define $S(i_1,i_2) = \enq(i_2), \deq()$. 
\end{enumerate}

If $i_1 = 0$, $\Delta(\emptyset, \enq(i_2)) = (\set{i_2},\wildcard)$ and for $i_2 = 0$, $\Delta(\set{i_1}, \deq()) = (\emptyset,\wildcard)$. 
If $i_1,i_2 \neq 0$, we have that $\Delta(\set{i_1}, \enq(i_2)) = (\set{i_1,i_2},\wildcard)$ and then, $\Delta(\set{i_1,i_2}, \deq()) = (\set{i_2},\wildcard)$. 
That is, the sequence of operations goes through a third state $\set{i_1,i_2}$, different than $q_0,\dots, q_t$, such that a $\peek$ operation returns the same response from this state and state $q_{i_1}$. Therefore, even though this sequence has two operations and not one, the response the $\peek$ operation can return goes from $r_{i_1}$ to $r_{i_2}$, without going through a third response value.
This allows to provide a lower bound for a queue with a similar proof to the one for class $\calC_t$, where the changer moves the object's state to a representative state between each step of the reader.
For more details and proof of the next theorem, refer to Appendix~\ref{app:queue-impossibility}.

\begin{restatable}{theorem}{queue}
\label{thm:queue}
There is no wait-free implementation of a queue with a {\peek} operation and elements from domain $\set{1,\ldots,t}$
that is state-quiescent HI using base objects with fewer than $t+1$
states.
\end{restatable}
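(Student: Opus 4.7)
The plan is to mimic the proof of Theorem~\ref{thm:oq-hi-gen-lb}, replacing the role of the subsets $X_1, \ldots, X_t$ with the $t+1$ representative states $q_0, \ldots, q_t$, and using the sequences $S(i, j)$ in place of single $\Changeop$ operations. Let $h$ be a linearization function for the hypothesized implementation. I will consider executions of the form
\[
   \alpha_{q_{i_0}, \ldots, q_{i_k}} = S(i_0, i_1), r_1, S(i_1, i_2), r_2, \ldots, S(i_{k-1}, i_k), r_k,
\]
where each $S(i_{j-1}, i_j)$ is a sequence of one or two complete changer operations taken without any reader step interleaved, and $r_j$ is a single step of the reader, which is executing a $\peek$ operation invoked right after $S(i_0, i_1)$ completes. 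I will say that such an execution \emph{avoids} $q_i$ if $i \notin \{i_1, \ldots, i_k\}$ (note that $i_0$ is allowed to equal $i$, as the reader has not yet started).

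The first step is an analog of Lemma~\ref{lemma:exclusive_returns}: if $\alpha_{q_{i_0}, \ldots, q_{i_k}}$ avoids $q_i$, then the $\peek$ cannot return $r_i$. The key observation is that although each sequence $S(i_{j-1}, i_j)$ with $i_{j-1}, i_j \neq 0$ passes through an intermediate state $\{i_{j-1}, i_j\}$, the $\peek$ response from this intermediate state is $r_{i_{j-1}}$; so if neither $i_{j-1}$ nor $i_j$ equals $i$ for any $j \geq 1$, then at no point after $S(i_0, i_1)$ completes is the queue in a state whose $\peek$ response is $r_i$. Since any linearization of the $\peek$ must fall after $S(i_0, i_1)$ (the non-overlapping changer sequences linearize in order, and the $\peek$ is invoked only after $S(i_0, i_1)$ returns), the value returned by $\peek$, if any, must lie in $R \setminus \{r_i\}$.

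The second step is the analog of Lemma~\ref{lemma:exclusive_executions}. Given $t+1$ executions $\alpha^1, \ldots, \alpha^{t+1}$ of the above form, each of length $k$, indistinguishable to the reader and with $\alpha^i$ avoiding $q_i$, I extend each by one more round. Each $\alpha^i$ ends in a state-quiescent configuration (no state-changing operation is pending since $S(i_{k-1}, i_k)$ has completed), so by state-quiescent HI the memory at the end of $\alpha^i$ is the canonical representation of the state reached, namely $q_{i_k}$. The reader's next step accesses some base object $\obj_\ell$, which is the same in all $t+1$ executions by indistinguishability. Since $\obj_\ell$ has fewer than $t+1$ possible states, by pigeonhole there exist two distinct indices $j \neq j'$ in $\{0, \ldots, t\}$ with $\can{q_j}[\ell] = \can{q_{j'}}[\ell]$. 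For each $i = 0, \ldots, t$, at least one of $q_j, q_{j'}$ is different from $q_i$, so I can choose a next representative $q_{i_{k+1}^i} \in \{q_j, q_{j'}\} \setminus \{q_i\}$, append the sequence $S(i_k^i, i_{k+1}^i)$ followed by one reader step, and the new executions continue to avoid their respective $q_i$ and remain indistinguishable to the reader.

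The main obstacle I anticipate is the intermediate state inside $S(i_{j-1}, i_j)$: I must verify that the avoidance property is preserved at every instant (not merely at the endpoints of each sequence), which is why the representative states are chosen to be singletons $\{i\}$ and why the intermediate state $\{i_{j-1}, i_j\}$ still has $\peek$-response $r_{i_{j-1}}$. Once these two lemmas are in hand, I iterate the extension starting from $\alpha^i = \alpha_{q_0}$ (the empty execution at the initial state, which is $q_0$ for all $i$ and trivially satisfies the hypotheses), producing arbitrarily long executions in which the $\peek$ operation by $r$ takes unboundedly many steps. By indistinguishability, if $r$ ever returned a value $v$ in $\alpha^1$, it would return the same $v$ in all $\alpha^i$, but then $v \notin \{r_0, \ldots, r_t\} = R$ by the first lemma, a contradiction to wait-freedom.
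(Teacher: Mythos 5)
Your proposal is correct and follows essentially the same route as the paper's proof in Appendix~\ref{app:queue-impossibility}: the same $t+1$ representative states, the same sequences $S(i,j)$ with the observation that the intermediate state $\set{i_{j-1},i_j}$ has $\peek$-response $r_{i_{j-1}}$, the same two lemmas (avoidance of a response value, and the pigeonhole-based indistinguishable extension), and the same final contradiction with wait-freedom.
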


\section{A History-Independent Universal Implementation}
\label{sec:universal-impl}

In the previous sections, we considered history-independent implementations
of objects from base objects that
are too small to store the state of the abstract object in its entirety, and showed that certain tradeoffs are unavoidable in this setting: for many objects, one must sacrifice either wait-freedom or state-quiescent history independence. We now turn to study large base objects, which can store the entire state of the abstract object, together with auxiliary information; we show that in this regime, a wait-free implementation that is state-quiescent HI \emph{is} possible.
Our implementation actually satisfies a somewhat stronger property than state-quiescent HI: 
at any point in the execution,
the observer cannot gain information about operations that completed before a 
pending operation started, 
except for the state of the object when the earliest pending operation began.

\ha{
When the full state of the object can be stored in a single memory cell,  
there is a simple lock-free universal implementation,
using \emph{load-link/store-conditional} (LLSC).\footnote{
    In hardware, \emph{load-linked} reads a memory cell, 
    while \emph{store-conditional} changes this memory cell, 
    provided that it was not written since the process' 
    most recent load-linked.}
The current state of the object is stored in a single cell, 
an operation reads the current value of this cell, using \emph{LL},
and then tries to write the new value of the object (after applying its changes)
in this cell, using \emph{SC}. 

This implementation is clearly perfect HI.
However, it is not wait-free since an operation may repeatedly fail, 
since other operations may modify the memory cell in between its LL and its SC. 
The standard way to make the universal implementation wait-free relies 
on~\emph{helping}~\cite{HerlihyWaitFreeSynch,FatourouKallimanisSPAA11}:
When starting, an operation \emph{announces} its type and arguments in shared memory.
Operations check whether other processes have pending operations 
and help them to complete,
obtaining the necessary information from their announcement;
after helping an operation to complete, 
they store a response to be returned later. 
This breaches history independence,
revealing the type and arguments of prior and pending operations, 
as well as the responses of some completed operations. 

Our wait-free, history-independent universal implementation follows a similar approach,
but ensures that announcements and responses are cleared before operations complete, 
to guarantee that forbidden information are not left in shared memory. 
This is done very carefully in order to erase information only after it is 
no longer needed. 

To use the more standard and commonly-available atomic {\CAS},
we implement an abstraction of a \emph{context-aware} variant of 
LLSC~\cite{jayanti_et_al:LIPIcs.DISC.2023.25}, 
which explicitly manages the set of processes that have load-linked this cell as \emph{context}.
This again breaches history independence, as the context reveals information
about prior accesses. 
To erase this information, the implementation clears the context of a 
memory cell using an additional \emph{release} operation, 
added to the interface of context-aware \emph{releasable LLSC} (R-LLSC).
This allows us to obtain a \emph{wait-free state-quiescent HI 
universal implementation from atomic {\CAS}}.

The next section presents the universal implementation using linearizable R-LLSC objects.
Its basic properties are proved in Section~\ref{sec:universal-impl-correct},
including showing that the algorithm is state-quiescent HI, 
when the R-LLSC implementation is HI. 
We then give a lock-free R-LLSC perfect HI implementation 
from atomic {\CAS} (Section~\ref{sec:R-LLSC}), 
and obtain a \emph{wait-free state-quiescent HI universal} implementation from atomic 
{\CAS}, in Section~\ref{sec:universal-impl-wait-free}.}


\remove{ 


When the full state of the object can be stored in a single memory cell, there is a simple lock-free algorithm using {\CAS}~\cite{HerlihySIG90}, where only the current state is stored in the shared memory, and operations modify the state by trying to change it using a {\CAS} operation. This implementation is HI at any point of the execution, meaning it is perfect HI, but it is not wait-free, as an operation may repeatedly fail to execute if other operations change the state before it can succeed in its {\CAS}.
\ns{The CAS can also be replaced with a \emph{load-link/store-conditional} (LLSC) primitive which is composed of a pair of instructions. \emph{Load-link} (LL) returns the current value of the object, and a subsequent \emph{store-conditional} (SC) stores a new value only if the value has not change since the LL.}

Obtaining a \emph{wait-free} universal implementation
is more challenging,
and \ns{our implementation} relies on~\emph{helping},
\ns{where a process helps other processes to finish their operations.}
However, helping seems to go against history independence: to request help, a process typically has to write the operation that it wants to perform in shared memory, which reveals that this operation was or will be performed at some point.
We must carefully manage the helping mechanism to avoid disclosing such forbidden information.


When a process helps other processes it is important to avoid the \emph{ABA problem}, where a process reads the same memory location twice and sees the same value, and is unaware that the value of the memory location was changed between the two reads. If this problem is not avoided, multiple processes may help the same process perform its operation, since from their point of view, the object's state has not changed. This could lead to an operation that was invoked only once being executed more than once, violating the semantics.
One solution to the ABA problem is attaching a unique sequence number to each operation~\cite{HerlihyWaitFreeSynch}, but this results in a non-HI implementation, as the sequence number leaks information about the number of operations and state changes in the past. 

To help avoid the ABA problem,
we first present a universal implementation that 
uses an extended version of a \emph{load-link/store-conditional} (LLSC) object, 
called \emph{releasable LLSC} (R-LLSC).
We then substitute the linearizable R-LLSC objects with a 
lock-free perfect HI implementation from atomic {\CAS}, 
to achieve a \emph{wait-free state-quiescent HI universal implementation from atomic {\CAS}}.
}

\subsection{Universal HI Implementation from Linearizable Releasable LLSC}
\label{sec:universal-impl-llcs}

A context-aware \emph{load-link/store-conditional} (LLSC) object 
over a domain $V$ is defined as follows:
the state of an LLSC object $\mathcal{O}$ is the pair $(\mathcal{O}.val, \mathcal{O}.context)$, 
where $\mathcal{O}.val \in V$ is the value of the object,
and $\mathcal{O}.\mathit{context}$ is
a set of processes. 
The initial state is $(v_0,\emptyset)$, where $v_0 \in V$ is a designated initial value.
\ha{Process $p_i$ can perform the following operations:}
\begin{description}
    \item[${\LL}(\mathcal{O})$:]
    adds $p_i$ to $\mathcal{O}.context$ and returns $\mathcal{O}.val$.
    \item[${\VL}(\mathcal{O})$:] returns $true$ if $p_i \in \mathcal{O}.context$ and \textit{false} otherwise.
    \item[${\SC}(\mathcal{O},new)$:] if $p_i \in \mathcal{O}.context$, sets $\mathcal{O}.val = new$ and $\mathcal{O}.context = \emptyset$, and returns \textit{true}; otherwise, it returns \textit{false}.
    \item[${\Ld}(\mathcal{O})$:] returns $\mathcal{O}.val$ without changing $\mathcal{O}.context$.
    \item[${\St}(\mathcal{O},new)$:]
    sets $\mathcal{O}.val = new$ and $\mathcal{O}.context = \emptyset$ and returns \textit{true}.
\end{description}
\ha{A {\VL}, {\SC} or {\St} is \emph{successful} if it returns \textit{true};
note that {\St} is always successful.
${\Ld}$ and ${\St}$ operations are added to simplify the code and proof.}

\ha{
The universal implementation appears in Algorithm~\ref{alg:universal-construct-R-LLSC}. 
The right side of Lines~\ref{lin:ll-head1},~\ref{lin:ll-announce} 
and~\ref{lin:ll-head2}, marked in blue, 
as well as Lines~\ref{lin:rl-announce} and~\ref{lin:rl-head}, marked in red,
are \ha{only used to} ensure history independence; 
we ignore them for now, and explain their \ha{usage} later.
We assume that the set of possible responses $R$ of the object 
is disjoint from its set of operations $O$, 
i.e., $R \cap O = \emptyset$, and $\bot \notin R \cup O$.

An array $announce[1..n]$ stores information about pending operations,
while $\mathit{head}$ holds the current state of the object, 
along with some auxiliary information,
like the response to the most recently applied operation $o$, 
and the identifier of the process that invoked $o$.
In-between operations, the value of $head$ is $\tup{q,\bot}$, 
where $q$ is the current state of the object.}

\begin{algorithm}[tb]\small
\caption{\small State-quiescent HI universal implementation from R-LLSC : code for process $p_i$}
\label{alg:universal-construct-R-LLSC}
\begin{algorithmic}[1]
    \Statex $head$, R-LLSC variable initialized to $\langle q_0, \bot \rangle$, where $q_0$ is the initial state
    \Statex $announce[n]$, R-LLSC variable array all cells initialized to $\bot$  
    \Statex local $priority_i$, initialized to $i$
    \Statex 
    \Statex \textsc{ApplyReadOnly}$(op\in O)$:
    \Comment{Read-only operations}
    \State $\tup{q, \wildcard} \gets {\Ld}(head)$
    \label{lin:load-head}
    \State $\wildcard, rsp \gets \Delta(q, op)$
    \State \algorithmicreturn{} $rsp$
    \Statex
    \Statex \textsc{Apply}$(op\in O)$:
    \Comment{State-changing operations}
    \State ${\St}(announce[i], op)$
    \label{lin:announce-inv}
    \While{${\Ld}(announce[i]) \notin R$}
    \label{lin:start-loop}
    
        \algrenewcommand{\alglinenumber}[1]{\footnotesize #1L:}
        \State $\tup{q, r} \gets {\LL}(head)$  
        \label{lin:ll-head1}
        \hfill\makebox[.6\textwidth][l]{\textcolor{blue}{|| \quad \labeltext[\ref*{lin:ll-head1}R.1]{{\footnotesize \ref*{lin:ll-head1}R.1:}}{lin:wait-til1} \textbf{wait until} ${\Ld}(announce[i]) \notin R$}}
        \item[]\hfill \makebox[.6\textwidth][l]{\textcolor{blue}{\phantom{||} \quad {\footnotesize \ref*{lin:ll-head1}R.2:} goto Line~\ref{lin:read-rsp}}}
        
        \algrenewcommand{\alglinenumber}[1]{\footnotesize #1:}
        \If{$r = \bot$}
        \Comment{In-between operations}
        \label{lin:head-state-check}
        \State $help \gets {\Ld}(announce[priority_i])$
        \label{lin:read-help-announce}
        \If{$help \in O$} 
        \label{lin:cond-help}
            $apply\mhyphen{}op \gets help$; $j \gets {priority}_i$
            \Comment{Try to apply another process operation}
        \Else
            \If{${\Ld}(announce[i]) \notin O$}
            \label{lin:read-my-Announce}
                continue
                \Comment{Go to the beginning of the loop}
            \EndIf
            \State $apply\mhyphen{}op \gets op$; $j \gets i$ 
            \Comment{Try to apply \ha{your} own operation}
            \label{lin:read-other-Announce}
        \EndIf
        \State $state, rsp \gets \Delta(q, apply\mhyphen{}op)$
        \label{lin:apply}
        \If{${\SC}(head, \tup{state, \tup{rsp,j}}$)}
        \Comment{End of the first stage}
        \label{lin:sc-headB}
             \State $priority_i \gets
             (priority_i + 1) \bmod n$
        \EndIf
    \Else
    \State $\tup{rsp,j} \gets r$

            \algrenewcommand{\alglinenumber}[1]{\footnotesize #1L:}
            \State $a \gets {\LL}(announce[j])$  
            \label{lin:ll-announce}
            \hfill\makebox[.6\textwidth][l]{\textcolor{blue}{|| \quad \labeltext[\ref*{lin:ll-announce}R.1]{{\footnotesize \ref*{lin:ll-announce}R.1:}}{lin:wait-til2} \textbf{wait until} ${\Ld}(announce[i]) \notin R$}}
            \item[]\hfill\makebox[.6\textwidth][l]{\textcolor{blue}{\phantom{||} \quad \labeltext[\ref*{lin:ll-announce}R.2]{{\footnotesize \ref*{lin:ll-announce}R.2:}}{lin:rl-announce-r} ${\RL}(announce[j])$}}
            \item[]\hfill\makebox[.6\textwidth][l]{\textcolor{blue}{\phantom{||} \quad {\footnotesize \ref*{lin:ll-announce}R.3:} goto Line~\ref{lin:read-rsp}}}
            
            \algrenewcommand{\alglinenumber}[1]{\footnotesize #1:}

    \If{${\VL}(head) = true$}
    \label{lin:vl-head}
        \If{$a\in O$}
            ${\SC}(announce[j], rsp)$
            \label{lin:replace-inv-rsp}
            \Comment{End of the second stage}
        \EndIf
        \State ${\SC}(head, \tup{q, \bot})$
        \Comment{End of the third stage}
        \label{lin:sc-headA1}
    \EndIf
    \textcolor{red}{\If{$a = \bot$}
        ${\RL}(announce[j])$
        \label{lin:rl-announce}
    \EndIf}
        \State continue
        \Comment{Go to the beginning of the loop}
    \EndIf
    \EndWhile
    \label{lin:end-loop}
    \State $response \gets {\Ld}(announce[i])$
    \label{lin:read-rsp}

    \algrenewcommand{\alglinenumber}[1]{\footnotesize #1L:}
    \State $\tup{q, r} \gets {\LL}(head)$
    \label{lin:ll-head2}
    \hfill \makebox[.6\textwidth][l]{\textcolor{blue}{|| \quad \labeltext[\ref*{lin:ll-head2}R.1]{{\footnotesize \ref*{lin:ll-head2}R.1:}}{lin:wait-til3} \textbf{wait until} ${\Ld}(head) \neq \tup{\wildcard, \tup{\wildcard, i}}$}}
    \item[] \hfill \makebox[.6\textwidth][l]{\textcolor{blue}{\phantom{||} \quad {\footnotesize \ref*{lin:ll-head2}R.2:}  goto Line~\ref{lin:rl-head}}}

    \algrenewcommand{\alglinenumber}[1]{\footnotesize #1:}
    \If{$r = \tup{\wildcard, i}$}
        ${\SC}(head, \tup{q, \bot})$
        \Comment{Clear response from head before returning}
        \label{lin:sc-headA2}
        \textcolor{red}{\Else{ {${\RL}(head)$}}}
        \label{lin:rl-head}
    \EndIf
    \State ${\St}(announce[i], \bot)$
    \Comment{Clear response from $announce[i]$}
    \label{lin:clear-my-announce}
    \State \algorithmicreturn{} $response$
\end{algorithmic}
\end{algorithm}

\ha{A process invoking a read-only operation calls $\textsc{ApplyReadOnly}$, 
which simply reads the object's state from $head$ and returns 
a response according to the sequential specification of the object. 
This does not change the memory representation of the implementation.

A process $p_i$ invoking a state-changing operation calls $\textsc{Apply}$. 
First, $p_i$ announces the operation by writing its description to $announce[i]$,
and then, $p_i$ repeatedly tries to apply operations 
(either its own operation or operations announced by other processes)
until it identifies that its own operation has been applied.
The choice of which operation to apply 
(Lines~\ref{lin:read-help-announce}-\ref{lin:read-other-Announce}) 
is determined by a local variable $priority_i$,
which is not part of the memory representation.
If there is a pending operation by process $p_j$,
$j = priority_i$, then $p_i$ applies $p_j$'s operation;
otherwise, it applies it own operation. 
Each time $p_i$ successfully changes the state of the object,
it increments $priority_i$ (modulo $n$).
This ensures that all pending operations will eventually help the same process.

Applying an operation $o$, with $\Delta(q,o) = (q', r)$,
consists of three stages, each of which can be performed by any process 
(not just the process that invoked $o$, 
and not necessarily the same process for all three stages).

\emph{In the first stage,}
$head$ is changed from $\tup{q,\bot}$ to $\tup{q', \tup{r, j}}$,
where $p_j$ is the process that invoked operation $o$.
The stage starts when some process $p_i$ reads $\tup{q,\bot}$ 
from $head$ with $\LL(head)$ (Line~\refl{lin:ll-head1}),
and decides which operation to try to apply, 
say $o$ by process $p_j$. 
To do so, $p_i$ performs ${\SC}(head, \tup{q',\tup{r,j}})$ (Line~\ref{lin:sc-headB}).
If the {\SC} is successful, the value of $head$ did not change between 
the ${\LL}$ and the $\SC$ of $p_i$.
This ensures that the chosen operation, read from $announce[j]$
in Line~\ref{lin:read-help-announce} or Line~\ref{lin:read-my-Announce}
after the ${\LL}(head)$, is not applied more than once.

\emph{In the second stage,} the response $r$ is written into $announce[j]$, 
overwriting $o$ itself, to notify the invoking process $p_j$ that 
its operation was performed, and what value $p_j$ should return.
A process $p_i$ that writes the response to $announce[j]$
has read $\tup{q',\tup{r,j}}$ from $head$ with $\LL(head)$
(Line~\refl{lin:ll-head1}) 
and then performs a successful ${\VL}(head)$ in Line~\ref{lin:vl-head}. 
If $p_i$ performs a successful ${\SC}(announce[j],r)$
in Line~\ref{lin:replace-inv-rsp},
then it previously performed a ${\LL}(announce[j])$ 
in Line~\refl{lin:ll-announce}, i.e., 
between ${\LL}(head)$ and ${\VL}(head)$.
This guarantees that the value of $head$ does not change between
${\LL}(head)$ and ${\SC}(announce[j],r)$.

\emph{The third and final stage} 
changes $head$ from $\tup{q',\tup{r,j}}$ to $\tup{q',\bot}$. 
This erases the response $r$ and the process index $j$,
ensuring that forbidden information about the history is not revealed.
The invoking process $p_j$ does not return until its response is cleared
from $head$ (Lines~\refl{lin:ll-head2} and~\ref{lin:sc-headA2}).
This ensures that a successful ${\SC}(announce[j],r)$ 
(Line~\ref{lin:replace-inv-rsp})
writes the right response to the applied operation, 
since it can only occur before the response value is cleared from $head$.
If the operation performs a successful ${\SC}(head,q')$ (Line~\ref{lin:sc-headA1}),
then the previous ${\LL}(head)$ (Line~\refl{lin:ll-head1})
guarantees that the replaced value of $head$ was indeed of the form $\tup{q',r}$, 
where $r\neq \bot$.

Finally, before returning, $p_j$ also clears $announce[j]$.}

\subsubsection*{Achieving history independence.}
Algorithm~\ref{alg:universal-construct-R-LLSC}, 
without the lines shown in red, \emph{is not state-quiescent HI}, 
and in fact it is not even quiescent HI:
although we delete past responses from the $head$ and clear $announce[i]$
before returning, their $context$ fields may reveal 
information about the history even when no operation is pending.
For example, suppose process $p_i$ invokes an operation $o$ and writes it to $announce[i]$, and begins the main loop where it tries to perform operations.
Before $p_i$ can even perform $\LL(head)$ in Line~\refl{lin:ll-head1}, faster processes carry out operation $o$ and all other pending operations, and return.
When $p_i$ does reach Line~\refl{lin:ll-head1} and calls $\LL(head)$, it sees that the system is in-between operations ($head = \tup{q,\bot}$), and it finds no other processes requiring help. 
It thus returns straightaway, leaving its link in the $context$ field of $head$.
This might seem innocuous, but it could, for example, reveal that a counter supporting fetch-and-increment and fetch-and-decrement operations,
whose value is currently zero, was non-zero in the past, because the observer can see
that \emph{some} state-changing operation was performed on it.


To address this problem, we add a release ($\RL$) operation to the LLSC object.
$\RL$ removes a process from the context, and
we use it to ensure that the $context$ component
of each LLSC object in the implementation is empty in a state-quiescent configuration.
Formally, a \emph{releasable LLSC} (R-LLSC) adds the following operation, 
performed by process $p_i$:
\begin{description}
    \item[${\RL}(\mathcal{O})$:] removes $p_i$ from $\mathcal{O}.context$ and returns \textit{true}.
\end{description}
${\RL}$ operations 
are added in Lines~\ref{lin:rl-announce} and~\ref{lin:rl-head} 
of Algorithm~\ref{alg:universal-construct-R-LLSC}, both marked in red.
We show below \ha{a lock-free implementation of} an R-LLSC object from atomic CAS.
\ha{The} implementation is not wait-free, 
as ${\RL}$ operations may interfere with other ongoing operations (including $\LL$). 
\ha{To handle R-LLSC operations that may block 
and obtain a wait-free universal HI implementation,}
we add the code marked in blue, 
in Lines~\ref{lin:ll-head1},~\ref{lin:ll-announce} and~\ref{lin:ll-head2}.
\ha{These lines interleave} steps in which process $p_i$ checks whether 
some other process $p_j$ has already accomplished what $p_i$ was trying to do 
(e.g., $p_j$ applied $p_i$'s operation for it).
The notation $\parallel$ indicates the interleaving of steps between 
the code appearing to its left and to its right, with some unspecified but finite 
number of steps taken on each side before the process switches 
and starts taking steps of the other side.

In Line~\ref{lin:rl-announce-r}, a {\RL} ensures that if $p_i$'s 
operation is performed by another process while $p_i$ itself is trying 
to help a third process $p_j$, then the $\LL(announce[j])$ in Line~\refl{lin:ll-announce} leaves no trace.
We must do this because we do not know whether the $\LL(announce[j])$ 
on the left side has already ``taken effect'' or not 
at the point where the \textbf{wait until} command on the right-hand side is done.



\subsection{Properties of Algorithm~\ref{alg:universal-construct-R-LLSC}}
\label{sec:universal-impl-correct}

\ha{The proof partitions the execution into segments,
with each successful state-change (that is, each successful $\SC(head, \tup{q,\tup{r,i}})$)
beginning a new segment.
We linearize exactly one state-changing operation at the beginning of each such segment, 
and interleave the linearization points of the read-only operations according 
to the segment in which they read the $head$.

Specifically, consider} 
an execution $\alpha$ of Algorithm~\ref{alg:universal-construct-R-LLSC}, 
and fix a linearization of the operations on the R-LLSC objects.
Since we consider the execution $\alpha$ in hindsight, 
this allows us to treat the R-LLSC objects as ``atomic'' and fix a \emph{linearization point}\footnote{
    A linearization point of an operation is a step in the execution, 
    between the operation invocation and response, 
    such that all the linearization points respect the linearization (see~\cite[Chapter 13]{LynchBook}).}
for each operation in the linearization. 
We assume that an operation takes effect exactly at its linearization point.
This allows us to determine the state of the object, according to the linearization points, at any point of the execution. 
For an R-LLSC object $X$, we abuse notation and say $X = v$ to indicate that the value of $X$ is $v$ 
according to the linearization points, and $X.context$ is the $context$ field in $X$'s state according to the linearization points.
A process writes $v$ to an R-LLSC object $X$ if it performs a ${\St}(X,v)$ or a successful ${\SC}(X,v)$, 
and reads value $v$ from R-LLSC object $X$ if it performs a {\Ld} or {\LL} to $X$ that returns $v$. 

Let $op$ be an operation by process $p_i$ in $H(\alpha)$. Denote by $\pi(op) = i$ the invoking process and $I(op)\in O$ the input operation $op$ tries to apply.
We say that an operation $op$ performs some action in $\alpha$ to express that process $\pi(op)$ performs this action during operation $op$. This is well defined since at any point in the execution each process executes at most one pending operation.
Most of the proof considers state-changing operations, and these are simply called operations. 

We say an operation $op$ is \emph{cleared} if the value $op$ writes to $announce[\pi(op)]$ in Line~\ref{lin:announce-inv} is overwritten. The next invariant states what values can overwrite values written in Line~\ref{lin:announce-inv}. (Missing proofs in this section are deferred to Appendix~\ref{app:universal-impl-correct}.)

\begin{restatable}{invariant}{clearedinv}
\label{lem:cleared-inv}
    If an operation $op$ exits the while loop in Lines~\ref{lin:start-loop}--\ref{lin:end-loop} at some point in the execution, then it is cleared by this point and the value $op$ writes to $announce[\pi(op)]$ in Line~\ref{lin:announce-inv} is overwritten in Line~\ref{lin:replace-inv-rsp} with a value from $R$.
\end{restatable}

\begin{restatable}{invariant}{universalvaluepairs}
\label{lem:universal-value-pairs}
    Let $\tup{q_1,r_1}$ and $\tup{q_2,r_2}$ be two consecutive values written to $head$, then
    \begin{enumerate}
        \item $q_1 = q_2$, $r_1\neq \bot$ and $r_2 = \bot$, or

        \item $r_1 = \bot$ and $r_2 \neq \bot$.
    \end{enumerate}
\end{restatable}

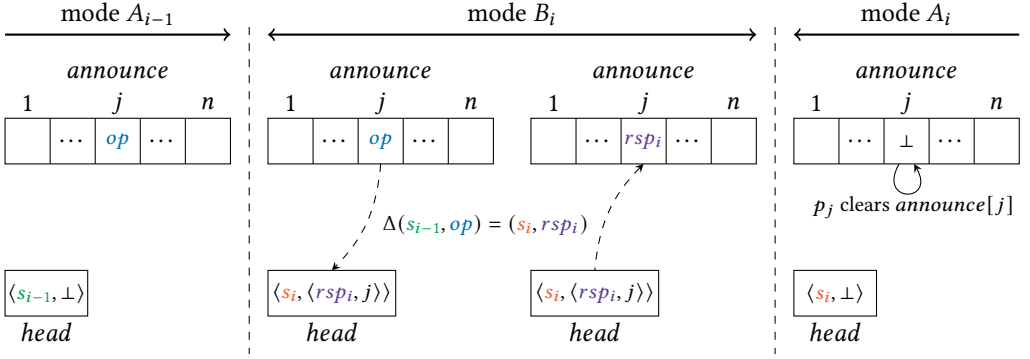
\begin{figure}
    \centering
    \begin{tikzpicture}
        \def \axa {0};
        \def \axb {3.5};
        \def \axc {7};
        \def \axd {10.5};
        \def \hy {-2};

        \draw[thick, ->] (\axa, 1.7) -- ++ (3, 0) node [midway, above] {mode $A_{i-1}$};
        \draw[dashed] (\axb - 0.25 ,1.8) -- ++ (0,-4.3);
        \draw[thick, <->] (\axb, 1.7) -- ++ (6.5, 0) node [midway, above] {mode $B_{i}$};
        \draw[dashed] (\axd - 0.25 ,1.8) -- ++ (0,-4.3);
        \draw[thick, <-] (\axd, 1.7) -- ++ (3, 0) node [midway, above] {mode $A_{i}$};

        \draw[black] (\axa,0) rectangle ++(3,0.6);
        \node at (\axa + 1.5, 1.2) {$announce$};
        \node at (\axa + 0.3,0.8) {1};
        \node at (\axa + 1.5,0.8) {$j$};
        \node at (\axa + 2.7,0.8) {$n$};
        \draw (\axa + 0.6,0) -- ++ (0,0.6);
        \draw (\axa + 1.2,0) -- ++ (0,0.6);
        \draw (\axa + 1.8,0) -- ++ (0,0.6);
        \draw (\axa + 2.4,0) -- ++ (0,0.6);
        \node at (\axa + 0.9,0.3) {\dots};
        \node[col2] at (\axa + 1.5,0.3) {\footnotesize{$op$}};
        \node at (\axa + 2.1,0.3) {\dots};

        \draw[black] (\axa,\hy) rectangle node{\footnotesize{$\tup{\textcolor{col1}{s_{i-1}}, \bot}$}} ++(1.1,0.6) 
        node [midway, below, text height = 0.5cm] {$head$};

        \draw[black] (\axb,0) rectangle ++(3,0.6);
        \node at (\axb + 1.5, 1.2) {$announce$};
        \node at (\axb + 0.3,0.8) {1};
        \node at (\axb + 1.5,0.8) {$j$};
        \node at (\axb + 2.7,0.8) {$n$};
        \draw (\axb + 0.6,0) -- ++ (0,0.6);
        \draw (\axb + 1.2,0) -- ++ (0,0.6);
        \draw (\axb + 1.8,0) -- ++ (0,0.6);
        \draw (\axb + 2.4,0) -- ++ (0,0.6);
        \node at (\axb + 0.9,0.3) {\dots};
        \node[col2] at (\axb + 1.5,0.3) {\footnotesize{$op$}};
        \node at (\axb + 2.1,0.3) {\dots};

        \draw[black] (\axb,\hy) rectangle node{\footnotesize{$\tup{\textcolor{col4}{s_{i}}, \tup{\textcolor{col3}{rsp_i},j}}$}} ++(1.7,0.6) 
        node [midway, below, text height = 0.5cm] {$head$};
        \draw[dashed, -stealth ] (\axb + 1.5, 0) to [bend left=20] (\axb + 0.85,\hy + 0.6);
        \node at (\axb + 2.9, \hy + 1.2){\footnotesize{$\Delta(\textcolor{col1}{s_{i-1}}, \textcolor{col2}{op}) = (\textcolor{col4}{s_i}, \textcolor{col3}{rsp_i})$}};
        

        \draw[black] (\axc,0) rectangle ++ (3,0.6);
        \node at (\axc + 1.5, 1.2) {$announce$};
        \node at (\axc + 0.3,0.8) {1};
        \node at (\axc + 1.5,0.8) {$j$};
        \node at (\axc + 2.7,0.8) {$n$};
        \draw (\axc + 0.6,0) -- ++ (0,0.6);
        \draw (\axc + 1.2,0) -- ++ (0,0.6);
        \draw (\axc + 1.8,0) -- ++ (0,0.6);
        \draw (\axc + 2.4,0) -- ++ (0,0.6);
        \node at (\axc + 0.9,0.3) {\dots};
        \node[col3] at (\axc + 1.5,0.3) {\footnotesize{$rsp_i$}};
        \node at (\axc + 2.1,0.3) {\dots};

        \draw[black] (\axc,\hy) rectangle node{\footnotesize{$\tup{\textcolor{col4}{s_{i}}, \tup{\textcolor{col3}{rsp_i},j}}$}} ++(1.7,0.6) 
        node [midway, below, text height = 0.5cm] {$head$};
        \draw[dashed, stealth -] (\axc + 1.5, 0) to [bend right=20] (\axc + 0.85,\hy + 0.6);

        \draw[black] (\axd,0) rectangle ++(3,0.6);
        \node at (\axd + 1.5, 1.2) {$announce$};
        \node at (\axd + 0.3,0.8) {1};
        \node at (\axd + 1.5,0.8) {$j$};
        \node at (\axd + 2.7,0.8) {$n$};
        \draw (\axd + 0.6,0) -- ++ (0,0.6);
        \draw (\axd + 1.2,0) -- ++ (0,0.6);
        \draw (\axd + 1.8,0) -- ++ (0,0.6);
        \draw (\axd + 2.4,0) -- ++ (0,0.6);
        \node at (\axd + 0.9,0.3) {\dots};
        \node at (\axd + 1.5,0.3) {\footnotesize{$\bot$}};
        \node at (\axd + 2.1,0.3) {\dots};
        \draw[-stealth ] (\axd + 1.4, 0) to[out=-120,in=-60,distance=6mm,swap] (\axd + 1.6, 0) node[below, text height = 0.5cm] {\footnotesize{$p_j$ clears $announce[j]$}};

        \draw[black] (\axd,\hy) rectangle node{\footnotesize{$\tup{\textcolor{col4}{s_{i}}, \bot}$}} ++(1.1,0.6) 
        node [midway, below, text height = 0.5cm] {$head$};
    \end{tikzpicture}
    \caption{Illustrating the transition from mode $A_{i-1}$ to mode $B_i$ and back to mode $A_i$ in Algorithm~\ref{alg:universal-construct-R-LLSC}.}
    \label{fig:mode-transition}
\end{figure}

Invariant~\ref{lem:universal-value-pairs} shows that the algorithm alternates 
between \emph{$A$ modes} and \emph{$B$ modes}.
Specifically, at the beginning of the execution, 
the algorithm is in mode $A_0$. 
The mode changes after each write to $head$ during the execution in the following manner:
If the mode is $A_i$, after the next write to $head$, the algorithm transitions to mode $B_{i+1}$. 
If the mode is $B_i$, after the next write to $head$, the algorithm transitions to mode $A_i$.
We say the algorithm is in mode $A$ if it is in mode $A_i$ for $i\geq 0$, 
and in mode $B$ if it is in mode $B_i$ for $i \geq 1$.
By Invariant~\ref{lem:universal-value-pairs},
since the initial value of $head$ is $\tup{q_0, \bot}$,
if the algorithm is in mode $A_{i-1}$, $i\geq 1$, the value of $head$ is $\tup{q_{i-1},\bot}$ and 
if the algorithm is in mode $B_i$, the value of $head$ is $\tup{q_i,\tup{rsp_i,\wildcard}}$. 
Define $state(0) = q_0$.
For $i\geq 1$, define $state(i) = q_i$, where $head$'s value is equal to $\tup{q_i, \wildcard}$ in modes $B_i$ and $A_i$
and define $response(i) = rsp_i$, where $head$'s value is equal to $\tup{\wildcard, \tup{rsp_i,\wildcard}}$ in mode $B_i$.
(See Figure~\ref{fig:mode-transition}.)

For any $i \geq 1$, consider the successful ${\SC}(head, \tup{s_i,r})$ in  
Line~\ref{lin:sc-headB} by operation $op$ that transitions the algorithm from mode $A_{i-1}$ to mode $B_i$. 
If the condition in Line~\ref{lin:cond-help} holds, let $op^*$ be the operation that writes the value $op$ reads in Line~\ref{lin:read-help-announce}; otherwise, $op^* = op$. We say this transition \emph{applies} $op^*$. 

An ${\LL}(head)$ or $\Ld(head)$ returns in mode $A_i$ or $B_{i}$, $i > 0$, if the last successful ${\SC}(head,\wildcard)$ that precedes the operation transitions the algorithm to this mode. An ${\LL}(head)$ or $\Ld(head)$ returns in mode $A_0$ if no ${\SC}(head,\wildcard)$ precedes the operation.

\begin{restatable}{lemma}{appliedlinearized}
\label{lem:applied=linearized}
\label{lem:lin-inductive-state-rsp}
    Consider a transition from mode $A_{i-1}$ to mode $B_i$, $i\geq 1$, and let $op^*$ be the operation applied by this transition, then
    $op^*$ is applied for the first time and
    $\Delta(state(i-1), I(op^*)) = (state(i), response(i))$.
\end{restatable}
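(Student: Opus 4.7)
The plan is to prove the two parts separately. For the equation $\Delta(state(i-1), I(op^*)) = (state(i), response(i))$, the transition $A_{i-1} \to B_i$ is caused by a successful ${\SC}(head, \tup{s, \tup{r, j}})$ at Line~\ref{lin:sc-headB}, executed by some operation $op$. Since this ${\SC}$ succeeds, $head$ was not written between $op$'s preceding ${\LL}(head)$ at Line~\refl{lin:ll-head1} and the ${\SC}$; but that ${\LL}$ was executed while the algorithm was in mode $A_{i-1}$, so it returned $\tup{state(i-1), \bot}$. The variable $apply\mhyphen{}op$ computed in between equals $I(op^*)$ by the definition of $op^*$ (either $op$'s own input when $op^* = op$, or the value read from $announce[\pi(op^*)]$, in which case $op^*$ is by definition the invocation that wrote it). Line~\ref{lin:apply} then computes $(s, r) = \Delta(state(i-1), I(op^*))$, and these are exactly $state(i)$ and $response(i)$ by the definitions of these quantities.

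The harder part is showing that $op^*$ is applied for the first time. I would proceed by contradiction: suppose $op^*$ is also applied in an earlier transition $A_{k-1} \to B_k$ with $k < i$. Let $t_0$ be the time $op^*$ writes $I(op^*)$ to $announce[\pi(op^*)]$ at Line~\ref{lin:announce-inv}; since a process executes one call to \textsc{Apply} at a time, this is the unique write to $announce[\pi(op^*)]$ by the invocation $op^*$. Let $t_k^r$ and $t_i^r$ be the times of the reads from $announce[\pi(op^*)]$ in the two transitions, and let $t_k^B$ and $t_k^A$ be the linearization times of the transitions $A_{k-1} \to B_k$ and $B_k \to A_k$. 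By the definition of ``applies,'' both reads see $op^*$ as the most recent writer of $announce[\pi(op^*)]$, so the cell contains $I(op^*)$ continuously on the intervals $[t_0, t_k^r]$ and $[t_0, t_i^r]$. Since the applying process of the second transition does ${\LL}(head)$ returning mode $A_{i-1}$'s value, and this ${\LL}$ precedes $t_i^r$, we have $t_0 < t_k^r < t_k^B < t_k^A \leq t_{i-1}^A < t_i^r$. The crux is to exhibit an overwrite of $announce[\pi(op^*)]$ before $t_k^A$, which then also precedes $t_i^r$ and contradicts continuous containment of $I(op^*)$ up to $t_i^r$.

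The transition $B_k \to A_k$ at $t_k^A$ is a successful ${\SC}(head, \tup{\cdot, \bot})$ executed either at Line~\ref{lin:sc-headA2} by $\pi(op^*)$ itself, or at Line~\ref{lin:sc-headA1} by a helper $p_h$. In the first case, $\pi(op^*)$ has exited the while loop at Line~\ref{lin:start-loop}, so Invariant~\ref{lem:cleared-inv} directly gives a prior overwrite via Line~\ref{lin:replace-inv-rsp}. In the second case, $p_h$ has just performed ${\LL}(announce[\pi(op^*)])$ returning some value $a$. If $a \in O$, then $a = I(op^*)$ and $p_h$ attempts ${\SC}(announce[\pi(op^*)], response(k))$, which either succeeds (yielding the required overwrite before $t_k^A$) or fails because another write intervened---and the only possible intervening writes are a response via Line~\ref{lin:replace-inv-rsp} or the ${\St}(\cdot, \bot)$ at Line~\ref{lin:clear-my-announce}, the latter necessarily following a prior response by Invariant~\ref{lem:cleared-inv}. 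If $a \notin O$, the overwrite already occurred before $p_h$'s ${\LL}$. In every case, an overwrite precedes $t_k^A$, yielding the desired contradiction.

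The main obstacle will be the ABA-flavored reasoning: $\pi(op^*)$ may re-invoke an operation with input $I(op^*)$ in the future, so the argument must track \emph{which invocation} wrote the announced value rather than merely which input is present. This is handled by identifying $op^*$ as the unique invocation whose ${\St}$ at Line~\ref{lin:announce-inv} is the most recent write to $announce[\pi(op^*)]$ at the read time of the transition, and by using Invariant~\ref{lem:cleared-inv} to establish the alternating structure of writes to $announce[\pi(op^*)]$ (between a value in $O$, a response in $R$, and $\bot$).
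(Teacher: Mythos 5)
Your proof is correct and follows essentially the same route as the paper: the $\Delta$-equation argument is identical, and your ``exhibit an overwrite of $announce[\pi(op^*)]$ before $t_k^A$'' step is precisely the paper's auxiliary lemma stating that the operation applied by the transition to mode $B_k$ is already cleared by the time the algorithm transitions to mode $A_k$, proved by the same case analysis on whether that transition occurs at Line~\ref{lin:sc-headA1} (helper overwrites via Line~\ref{lin:replace-inv-rsp}, or its failed $\SC$ witnesses an intervening overwrite) or at Line~\ref{lin:sc-headA2} (Invariant~\ref{lem:cleared-inv}). The only difference is presentational: the paper factors this step out as a standalone lemma, whereas you inline it with an explicit timeline.
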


Lemma~\ref{lem:applied=linearized} shows that any 
operation is applied by at most one transition from mode $A$ to mode $B$. 
We say that a transition from mode $A_{i-1}$ to mode $B_i$, $i\geq 1$, \emph{linearizes} the operation it applies.

\begin{restatable}{lemma}{returnhelper}
\label{lem:return-helper}
Let $op^*$ be the operation linearized by the transition from mode $A_{i-1}$ to mode $B_i$, $i\geq 1$, then:

    \begin{enumerate}          
         \item Only operation $op^*$ is cleared in mode $B_i$ and no operation is cleared in mode $A_i$.
        
        \item If operation $op^*$ returns, then it returns in mode $A_{j}$ or mode $B_{j+1}$ only for $j\geq i$.
        \end{enumerate}
\end{restatable}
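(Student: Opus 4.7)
My plan is to handle the two parts of the lemma separately, using throughout the R-LLSC semantics that any successful $\SC$ or $\St$ on an object clears its context, thereby invalidating all outstanding $\LL$s on that object.

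For Part 1 I first show that any clearing in mode $B_i$ is of $op^*$. By Invariant~\ref{lem:cleared-inv}, clearing can only happen at Line~\ref{lin:replace-inv-rsp}. A process reaching that line has just passed $\VL(head)$, so its preceding $\LL(head)$ must have been performed in the current $B$ mode (the $\SC$ entering $B_i$ cleared any earlier context), and hence returned $j = \pi(op^*)$ and $rsp = response(i)$. The value of $announce[\pi(op^*)]$ at the time of the successful $\SC$ is still $I(op^*)$: $op^*$ wrote this at Line~\ref{lin:announce-inv}; by Lemma~\ref{lem:applied=linearized}, $op^*$ is being applied for the first time, so $I(op^*)$ had not been overwritten when the transition to $B_i$ occurred and $\pi(op^*)$ is still trapped inside its main loop; and once the first successful $\SC$ at Line~\ref{lin:replace-inv-rsp} fires, it clears the context of $announce[\pi(op^*)]$, ruling out any second distinct clearing in $B_i$. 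For the ``no clearing in $A_i$'' part, reaching Line~\ref{lin:replace-inv-rsp} requires entering the $r \neq \bot$ branch at Line~\ref{lin:head-state-check} and then passing $\VL(head)$; but in $A_i$ every $\LL(head)$ returns $r = \bot$, and an $\LL$ performed in an earlier mode has its $head$-context invalidated by intervening $\SC$s, so the $\VL$ fails.

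For Part 2 I trace $\pi(op^*)$'s code after the while loop exits, which by Invariant~\ref{lem:cleared-inv} and Part 1 occurs only after $op^*$ is cleared inside $B_i$. The process then executes $\LL(head)$ at Line~\refl{lin:ll-head2}, one of Lines~\ref{lin:sc-headA2}/\ref{lin:rl-head}, Line~\ref{lin:clear-my-announce}, and finally returns. The goal is to show that Line~\ref{lin:clear-my-announce} fires only in a mode that is at least $A_i$. If $\pi(op^*)$'s $\LL$ reads the $B_i$ value, it takes the $\SC$ branch, and either its own $\SC$ succeeds---which is the transition to $A_i$---or a competing $\SC$ on $head$ has already changed it; that competing $\SC$ must be Line~\ref{lin:sc-headA1} (Line~\ref{lin:sc-headB} cannot succeed in a $B$ mode, and Line~\ref{lin:sc-headA2} is only executed by $\pi(op^*)$ itself when in mode $B_i$), and so it again transitions to $A_i$. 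If instead $\pi(op^*)$'s $\LL$ reads a value from a later mode, we are already past $B_i$; the only delicate subcase is reading from $B_{j'}$ with $j' > i$, which cannot carry the invoking index $\pi(op^*)$, because $\pi(op^*)$ has not yet returned from $op^*$ and so cannot have started a fresh invocation. Thus the $r = \tup{\wildcard, i}$ test fails and the harmless $\RL$ branch is taken.

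I expect the main obstacle to be disentangling the two independent contexts of $head$ and $announce[\pi(op^*)]$---particularly arguing that a process paused between its $\LL$ and its subsequent $\VL$ or $\SC$ on $head$ is evicted by a competing successful operation---together with the single-pending-operation observation that lets us dismiss the $B_{j'}$ subcase in Part 2.
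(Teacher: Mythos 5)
Your overall strategy is the paper's: analyze which writes at Line~\ref{lin:replace-inv-rsp} can succeed, using the $\LL(head)$/$\VL(head)$ guard and the fact that successful $\SC$/$\St$ operations reset contexts; and for Part~2, trace $\pi(op^*)$ through Line~\refl{lin:ll-head2} and Lines~\ref{lin:sc-headA2}/\ref{lin:rl-head}. Your Part~2 matches the paper's argument and is fine. The difference is that the paper proves the lemma by induction on $i$, strengthening the hypothesis with a third property (``if the $\LL(head)$ at Line~\refl{lin:ll-head1} returns in mode $B_i$ and the same iteration writes to $announce[j]$ at Line~\ref{lin:replace-inv-rsp}, that write replaces the value $op^*$ wrote at Line~\ref{lin:announce-inv}''), whereas you attempt a direct argument. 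Two steps of your Part~1 genuinely need that inductive scaffolding and are currently gaps.

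First, you claim $announce[\pi(op^*)]$ still holds $I(op^*)$ throughout mode $B_i$ up to the first clearing ``by Lemma~\ref{lem:applied=linearized}, $op^*$ is being applied for the first time.'' That lemma is about \emph{linearization}, not about \emph{clearing}: it does not by itself exclude that some laggard helper, holding a stale $\LL$ on $announce[\pi(op^*)]$ from an earlier $B$ mode in which a \emph{previous} operation of $\pi(op^*)$ was linearized, overwrites $announce[\pi(op^*)]$ before or during $B_i$. Excluding this is exactly what Part~1 of the lemma for all $j<i$ (i.e., the induction hypothesis) buys you. Second, and more importantly, a successful $\SC(announce[j],\cdot)$ at Line~\ref{lin:replace-inv-rsp} can execute in a strictly later mode than the one in which its guarding $\LL(head)$ and $\VL(head)$ took place: the $\VL$ only certifies the mode \emph{at the time of the $\VL$}, and the mode may advance between the $\VL$ and the $\SC$ on $announce[j]$. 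So your argument for ``no clearing in $A_i$'' (``in $A_i$ every $\LL(head)$ returns $r=\bot$ \dots so the $\VL$ fails'') does not rule out a helper that passed its $\VL$ in mode $B_{i''}$, $i''\le i$, and only fires its $\SC(announce[j])$ while the algorithm is in $A_i$; likewise your ``hence $j=\pi(op^*)$'' for clearings occurring temporally in $B_i$. Closing this requires showing that the \emph{first} successful clearing of $op^{*}_{i''}$ happens in mode $B_{i''}$ and resets $announce[j].context$, evicting all such laggards --- which in turn uses Lemma~\ref{lem:cleared-linearized-inv} together with the already-established instances of the lemma for $i''<i$. The paper's last paragraph of the proof does precisely this bookkeeping; you should restructure Part~1 as an induction on $i$ (or explicitly carry the paper's third property) rather than argue it in one shot.
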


Define the linearization function $h_{uc}$ according to the transitions from mode $A_{i-1}$ to mode $B_i$, $i\geq 1$, which we call the $i$-th mode transition; first, order linearized operations in $H(\alpha)$ such that $op_1$ precedes $op_2$ in $h_{uc}(\alpha)$ if and only if 
$op_1$ is linearized by the $i$-th mode transition and $op_2$ is linearized by the $j$-th mode transition, such that $i< j$.
A read-only operation ${op}_r$ \emph{reads from} the $i$-th mode transition if the ${\Ld}(head)$ in Line~\ref{lin:load-head} returns in mode $B_i$ or $A_i$.
Consider the read-only operations in $H(\alpha)$ in the order they are invoked in $\alpha$. 
A read-only operation $op_r$ in $H(\alpha)$, which reads from the $i$-th mode transition,
is placed after the operation ${op}_w$ linearized by the $i$-th transition, 
and after all previous read-only operations that also read from the $i$-th mode transition.

 
\begin{lemma}
\label{lem:uc-seq-spec}
    $h_{uc}(\alpha)$ is a linearization of $\alpha$
    and $\seqstate{h_{uc}(\alpha)} = q$ such that $head = \tup{q,\wildcard}$.
\end{lemma}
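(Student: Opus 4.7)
The strategy is to verify each of the three linearizability conditions for $h_{uc}(\alpha)$ separately, while simultaneously proving the state claim by induction on the mode transitions in $\alpha$. The previous lemmas do most of the heavy lifting: Lemma~\ref{lem:applied=linearized} gives the sequential-specification property step-by-step, Lemma~\ref{lem:return-helper} controls which operations are cleared and when they may return, and Invariant~\ref{lem:cleared-inv} tells us that every completed state-changing operation must have been cleared, hence linearized by some mode transition.

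\textbf{Sequential specification and state equality.} I would prove by induction on $i \geq 0$ that after the first $i$ linearized state-changing operations plus all read-only operations placed before them in $h_{uc}(\alpha)$, the abstract state equals $state(i)$ and each of those operations returns the response it returns in $\alpha$. The base case $i=0$ is immediate from $head = \tup{q_0,\bot}$ initially. For the step, the $i$-th mode transition $A_{i-1}\to B_i$ linearizes an operation $op^*$ with $\Delta(state(i-1), I(op^*)) = (state(i), response(i))$ by Lemma~\ref{lem:applied=linearized}, so appending $op^*$ yields abstract state $state(i)$ and the correct response $response(i)$, which by Invariant~\ref{lem:cleared-inv} (applied via Line~\ref{lin:replace-inv-rsp}) is exactly the value $op^*$ will return. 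Any read-only operation placed directly after performs $\Ld(head)$ in mode $B_i$ or $A_i$, reading $state(i)$, and returns $\Delta(state(i), op)$'s response per \textsc{ApplyReadOnly}. Taking $i$ to be the index of the last mode transition in $\alpha$ yields $\seqstate{h_{uc}(\alpha)} = state(i) = q$, where $head = \tup{q, \wildcard}$ at the end of $\alpha$ (using Invariant~\ref{lem:universal-value-pairs} to handle the case where the final $head$ is in mode $B$, in which case its first coordinate still equals $state(i)$).

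\textbf{Completion and real-time order.} To show $h_{uc}(\alpha)$ is a permutation of a completion of $H(\alpha)$, I would argue that every completed state-changing operation $op$ in $H(\alpha)$ appears in $h_{uc}(\alpha)$: since $op$ exits the while loop, Invariant~\ref{lem:cleared-inv} says $op$ is cleared, and by the first part of Lemma~\ref{lem:return-helper}, the only operation cleared in mode $B_k$ is the one linearized by the $k$-th transition, so $op$ is linearized. Read-only operations are trivially included. For real-time order, I would do a case analysis on the types of two non-overlapping operations $op_1$ (returns before) and $op_2$ (invoked after). The key fact for state-changing $op_1$ linearized at mode $i_1$ is Lemma~\ref{lem:return-helper}(2): $op_1$ can only return in mode $A_j$ or $B_{j+1}$ for $j \geq i_1$, so any subsequently invoked operation reads $head$ (or is linearized) at mode index strictly greater than $i_1$ in the case of state-changing, and at least $i_1$ in the case of read-only (which is enough, since read-only operations at mode $i_1$ are placed after the linearized $op_1$). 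For state-changing $op_2$ linearized at mode $i_2$, its linearizing $\SC(head,\cdot)$ happens strictly after $op_1$'s return, which lower-bounds $i_2$. The subcase of two read-only operations is immediate from the ordering by invocation among those reading at the same mode.

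\textbf{Main obstacle.} The delicate point is the real-time ordering when one of the operations is a state-changing operation that is \emph{helped}: its invocation and its linearizing transition may be separated by many steps of another process. I would need to carefully use the fact that the transition that linearizes $op^*$ reads $I(op^*)$ from $announce[\pi(op^*)]$, and the only way that value appears there is via $op^*$'s Line~\ref{lin:announce-inv}, so the linearization point lies strictly after $op^*$'s invocation; combined with the clearing of $announce[\pi(op^*)]$ before $op^*$ returns (Line~\ref{lin:clear-my-announce}), this forces the linearization to fall within $op^*$'s real-time interval. The bookkeeping across the three stages of an operation, and verifying that a read-only operation cannot ``see'' the state of a later mode transition than any state-changing operation that precedes it in real-time, requires carefully chaining these temporal facts, which is where the bulk of the proof will go.
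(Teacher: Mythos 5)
Your proposal is correct and follows essentially the same route as the paper's proof: completion via Invariant~\ref{lem:cleared-inv} and Lemma~\ref{lem:return-helper} (only linearized operations can return), the sequential specification and state claim by induction on mode transitions using Lemma~\ref{lem:applied=linearized}, and real-time order from the fact that each operation is linearized while pending. Your treatment of the real-time order is more explicit than the paper's (which compresses it into the observation that operations are linearized within their intervals), but the underlying argument and supporting lemmas are the same.
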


\begin{proof}
    First, we need to show that $h_{uc}(\alpha)$ includes all completed operations.
    By construction, all completed read-only operations are included in $h_{uc}(\alpha)$ and if a read-only operation reads from the $i$-th mode transition, it is pending in mode $A_i$ or mode $B_i$.

    By Invariant~\ref{lem:cleared-inv}, if an operation returns it must be cleared, and by Lemma~\ref{lem:return-helper}, only linearized operations can be cleared. This shows that only linearized operations return and these operations are included in $h_{uc}(\alpha)$.
    By Lemma~\ref{lem:return-helper}, an operation linearized by the transition from mode $A_{i-1}$ to mode $B_i$, $i\geq 1$, can only return in a mode that follows $B_i$, thus, an operation is linearized while it is pending.
    It is left to show that $h_{uc}(\alpha)$ is in the sequential specification of the abstract object. Since read-only operations return the correct response according to the sequential specification of the abstract object and the state returned from ${\Ld}(head)$, it is enough to show this only for the state-changing operations.
    
    The proof is by induction on the transitions from mode $A_{i-1}$ to mode $B_i$, $i\geq 1$.
    Note that between any two consecutive transitions from mode $A$ to $B$, the first component (i.e., the
    object's state) in the value of $head$ does not change.

    {\bf Base case:} $i=0$, $h_{uc}(\alpha)$ includes only read-only operations that read the value of $head$ in mode $A_0$. Trivially, $h_{uc}(\alpha)$ is in the sequential specification. In addition, $\seqstate{h_{uc}(\alpha)} = q_0$ 
    and in the algorithm initialization we have that $head = \tup{q_0,\bot}$.

    {\bf Induction step:}
    Consider the transition from mode $A_{i-1}$ to mode $B_i$, $i\geq 1$, and assume the induction holds for any $j < i$. 
    Let $\alpha'$ be the prefix of $\alpha$ that contains the first $i-1$ mode transitions from $A$ to $B$.
    Let $op$ be the operation linearized by the transition from mode $A_{i-1}$ to mode $B_i$. 
    The induction hypothesis implies that $h_{uc}(\alpha')$ is in the sequential specification and $\seqstate{h_{uc}(\alpha')} = state(i-1)$.
    by Lemma~\ref{lem:return-helper}, the value $op$ writes in Line~\ref{lin:announce-inv} is replaced with $response(i)$. Thus, $op$ must read this value in Line~\ref{lin:read-rsp} and return it.
    By Lemma~\ref{lem:lin-inductive-state-rsp}, $\Delta(state(i-1), I(op)) = (state(i), response(i))$. This implies that $h_{uc}(\alpha)$ is also in the sequential specification and $\seqstate{h_{uc}(\alpha)} = state(i)$.
\end{proof}

\paragraph{State-quiescent HI}
The next two lemmas show that for every linearization of the operations on the R-LLSC objects, 
the states of the R-LLSC objects in Algorithm~\ref{alg:universal-construct-R-LLSC} 
provide a canonical representation for every higher-level state in a state-quiescent configuration.

\begin{restatable}{lemma}{quiescentannouncebot}
\label{lem:quiescent-announce-bot}
    If process $p_i$ has no pending state-changing operations, then $announce[i] = \bot$.
\end{restatable}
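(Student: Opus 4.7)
My plan is to trace the possible writes to $announce[i]$ and argue that the value of $announce[i]$ equals $\bot$ in every configuration where $p_i$ has no pending state-changing operation. Inspecting Algorithm~\ref{alg:universal-construct-R-LLSC}, the only steps that modify the \emph{value} (as opposed to the context) of $announce[i]$ are: (a) $\St(announce[i], op)$ by $p_i$ at Line~\ref{lin:announce-inv}, writing $op \in O$ at the very first memory step of $\textsc{Apply}$; (b) a successful $\SC(announce[j], rsp)$ at Line~\ref{lin:replace-inv-rsp} with $j=i$, executed by some helper $p_k$ and writing a value in $R$; and (c) $\St(announce[i], \bot)$ by $p_i$ at Line~\ref{lin:clear-my-announce}, the last memory step before $p_i$ returns. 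The $\RL$ operations in Lines~\ref{lin:rl-announce-r} and~\ref{lin:rl-announce} modify only the context.

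Steps (a) and (c) fit naturally: (a) coincides with $p_i$ becoming pending, and (c) restores $announce[i] = \bot$ just before $p_i$'s response event. The task is to rule out step (b) in configurations where $p_i$ is non-pending. I would prove this by induction on execution prefixes, maintaining the invariant that whenever $p_i$ is non-pending, $announce[i] = \bot$; the base case is immediate from initialization. For the inductive step, suppose a helper $p_k$ successfully executes $\SC(announce[i], rsp)$ at some step $s$. For this $\SC$ to succeed, $p_k$ must have remained in $announce[i].context$ since its preceding $\LL(announce[i])$ in Line~\ref{lin:ll-announce} (call this step $L$), meaning no $\St$ or successful $\SC$ on $announce[i]$ occurred between $L$ and $s$. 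If $p_i$ was non-pending at $L$, the inductive hypothesis gives $announce[i] = \bot$ at $L$, so $p_k$ reads $a = \bot$ and the guard $a \in O$ at Line~\ref{lin:replace-inv-rsp} fails (since $\bot \notin O$); hence $p_k$ never invokes the $\SC$, a contradiction. Otherwise $p_i$ is pending at $L$; but then for $p_i$ to be non-pending at $s$, $p_i$ must complete between $L$ and $s$, which requires executing $\St(announce[i], \bot)$ at Line~\ref{lin:clear-my-announce}. This $\St$ clears the context of $announce[i]$, again contradicting $p_k$'s continued presence in it.

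The main obstacle, and the real content of the argument, is correctly invoking the releasable LL/SC semantics---in particular, the fact that $\St$ unconditionally clears the context, so that a stale helper's $\LL$ from $p_i$'s previous pending operation cannot ``survive'' across $p_i$'s clean-up step in Line~\ref{lin:clear-my-announce} to produce a spurious write during $p_i$'s non-pending interval. Once this is made precise, the case split above closes the induction and yields the lemma.
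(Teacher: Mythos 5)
Your overall strategy is the same as the paper's: enumerate the writes to $announce[i]$, observe that only Line~\ref{lin:replace-inv-rsp} is executed by a process other than $p_i$, and use the R-LLSC semantics to show that such a write cannot land while $p_i$ is non-pending. Your Case 1 is fine. The problem is Case 2. You argue that if $p_i$ is pending at $p_k$'s ${\LL}$ (step $L$) and non-pending at the successful ${\SC}$ (step $s$), then the ${\St}(announce[i],\bot)$ of Line~\ref{lin:clear-my-announce} falls between $L$ and $s$ and resets the context, killing the ${\SC}$. But only the \emph{response event} of $p_i$'s operation is guaranteed to lie between $L$ and $s$; the ${\St}$ of Line~\ref{lin:clear-my-announce} is an earlier step of that operation and may precede $L$ (the operation is still pending after executing Line~\ref{lin:clear-my-announce}, until it actually returns). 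In that sub-case the ${\St}$ does not evict $p_k$ from the context, and your contradiction evaporates. What saves you there is a value argument, not a context argument: at $L$ the register already holds $\bot$, so $a=\bot$ and the guard $a\in O$ fails. But to know that $announce[i]=\bot$ at $L$ you need an invariant covering the interval from the clearing ${\St}$ (or initialization) up to $p_i$'s next execution of Line~\ref{lin:announce-inv}; your stated invariant only covers non-pending configurations and says nothing at $L$, where $p_i$ is still pending.

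The paper's proof closes exactly this hole by phrasing the key observation about Line~\ref{lin:replace-inv-rsp} differently: a successful ${\SC}$ there must overwrite the very value its preceding ${\LL}$ returned (the context was not reset in between, hence the value did not change), and that value is in $O$ by the guard; therefore no helper's write can ever overwrite $\bot$, regardless of when the ${\LL}$ occurred relative to $p_i$'s pending interval. With that formulation, the $\bot$ written at initialization or at Line~\ref{lin:clear-my-announce} persists until $p_i$ itself executes Line~\ref{lin:announce-inv}, which is the first step of a new pending operation, and the lemma follows without any case split on $L$. Your argument is repairable either by adopting this formulation or by strengthening your induction invariant accordingly, but as written the inductive step does not go through in the sub-case where the clearing ${\St}$ precedes the helper's ${\LL}$.
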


\begin{lemma}
\label{lem:RLLSC-quiescent-state}
    If an execution $\alpha$ ends in a state-quiescent configuration, 
    then the context of every R-LLSC variable $head, announce[1],\dots, announce[n]$,  
    is equal to $\emptyset$.
\end{lemma}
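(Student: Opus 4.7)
The plan is to show that in the final state-quiescent configuration $C$ of $\alpha$, no process appears in any R-LLSC context. The strategy hinges on the observation that contexts of $head, announce[1], \ldots, announce[n]$ are modified only by operations executed inside $\textsc{Apply}$ ($\textsc{ApplyReadOnly}$ uses only $\Ld$, which leaves contexts unchanged). Since $C$ is state-quiescent, no process is currently inside $\textsc{Apply}$, so every context modification happened inside an $\textsc{Apply}$ call that has already returned. My job is therefore to prove that either that $\textsc{Apply}$ cleaned up its own contribution before returning, or that a subsequent operation cleared the context.

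For each $announce[j]$, I would first apply Lemma~\ref{lem:quiescent-announce-bot} to get $announce[j] = \bot$ in $C$. Since $\bot$ is written to $announce[j]$ only by $p_j$'s $\St(announce[j], \bot)$ in Line~\ref{lin:clear-my-announce} (or by initialization), this last $\St$ already cleared $announce[j].\mathit{context}$. It then suffices to argue that any $\LL(announce[j])$ performed after that event is paired with a $\RL(announce[j])$ before its $\textsc{Apply}$ returns. Between that $\St$ and $C$, any further $\St(announce[j], op)$ or successful $\SC(announce[j], rsp)$ would leave a non-$\bot$ value, so every such $\LL$ must observe $a = \bot$, whereupon either the blue branch (Line~\ref{lin:rl-announce-r}) or the red conditional (Line~\ref{lin:rl-announce}) executes $\RL(announce[j])$ and removes the caller.

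For $head$, I would use Invariant~\ref{lem:universal-value-pairs} and the modes it induces to conclude $head = \tup{q, \bot}$ in $C$: if $head$ were in some mode $B_i$, the $\SC(head, \tup{q, \bot})$ needed to return to mode $A$ is part of $p_i$'s $\textsc{Apply}$, which would still be pending, contradicting state-quiescence. Hence the last write to $head$ is either the initial write or an $\SC(head, \tup{q, \bot})$ in Lines~\ref{lin:sc-headA1} or \ref{lin:sc-headA2}, both of which clear $head.\mathit{context}$. Then I would walk through the control flow of $\textsc{Apply}$ to show that every $\LL(head)$ in Lines~\ref{lin:ll-head1} or \ref{lin:ll-head2} is eventually followed (before the call returns) by an $\SC(head)$ or $\RL(head)$ that removes the caller: regardless of the path taken inside the loop, the post-loop cleanup always executes one of $\SC(head, \tup{q, \bot})$ in Line~\ref{lin:sc-headA2} or $\RL(head)$ in Line~\ref{lin:rl-head} (reached either via the fall-through \textbf{else} or via the blue \textbf{goto}), neutralizing any leftover $\LL$s from earlier iterations.

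I expect the delicate point to be the $announce[j]$ case, since $\textsc{Apply}$ may exit a loop iteration with the caller still in $announce[j].\mathit{context}$ when $\VL(head)$ is false or $a \in R$ (neither the $\SC$ in Line~\ref{lin:replace-inv-rsp} nor the conditional $\RL$ in Line~\ref{lin:rl-announce} fires). The argument there rests on the fact that such a residual entry must ultimately be wiped out by a later successful $\SC(announce[j], rsp)$ or by $p_j$'s $\St(announce[j], \bot)$ before $C$, and that once $announce[j]$ has been brought back to $\bot$, any new $\LL$ by a helper sees $a = \bot$ and is cleaned by the red $\RL$. Ruling out a subtle interleaving in which a helper adds itself via $\LL(announce[j])$ after $p_j$'s last $\St$ and then fails to clean up before returning is the main technical obligation, and this is exactly where the $a = \bot$ test in Line~\ref{lin:rl-announce} and the blue $\RL$ in Line~\ref{lin:rl-announce-r} together close the gap.
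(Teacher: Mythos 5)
Your proof is correct and rests on the same two pillars as the paper's: (i) the post-loop code after Line~\ref{lin:ll-head2} always ends each process's interaction with $head$ in an $\SC$ or an $\RL$ (on either exit of the parallel construct), and (ii) Lemma~\ref{lem:quiescent-announce-bot} forces $announce[j]=\bot$ at the end, which drives the cleanup of $announce[j].\mathit{context}$. Where you differ is in how you organize the $announce[j]$ case, and your organization is arguably cleaner. The paper does a branch-by-branch case analysis on the value $a$ returned by the $\LL(announce[j])$ and on whether $\VL(head)$ succeeds; for the branches where no $\RL$ or $\SC$ fires ($a\in R$, or $\VL$ false with $a\neq\bot$) it appeals \emph{forward} to a future $\St(announce[j],\bot)$ guaranteed by Lemma~\ref{lem:quiescent-announce-bot}, and it even needs a small contradiction argument (via Lemma~\ref{lem:return-helper}) to rule out the combination ``$\VL(head)$ true and $a=\bot$.'' You instead anchor on the \emph{last} $\St(announce[j],\bot)$ and split every $\LL(announce[j])$ into ``linearized before it'' (its context contribution is wiped by that $\St$) versus ``linearized after it'' (it necessarily observes $a=\bot$, so the $\RL$ in Line~\ref{lin:rl-announce} or Line~\ref{lin:rl-announce-r} fires). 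This collapses all the $a\neq\bot$ branches into one case, makes the $\SC$ in Line~\ref{lin:replace-inv-rsp} irrelevant to the argument, and removes the need for the contradiction step, since the red conditional sits outside the $\VL(head)$ guard. The only blemish is your justification that $C$ is in an $A$ mode (the returning $\SC$ need not be performed by the owner $p_i$; the paper gets this from Lemma~\ref{lem:return-helper}), but that sub-claim is not needed for the lemma as stated, which concerns only the contexts.
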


\begin{proof}
    Since read-only operations do not change the state of base objects, we only consider state-changing operations.
    
    Before returning, an operation $op$ performs Line~\ref{lin:ll-head2}.
    If $op$ finishes performing Line~\ref{lin:ll-head2} on the left-hand side, then the ${\LL}(head)$ in Line~\refl{lin:ll-head2} returns.
    If the condition in Line~\ref{lin:sc-headA2} holds,
    $op$ performs an ${\SC}(head,\wildcard)$. Otherwise, $op$ performs an ${\RL}(head)$ in Line~\ref{lin:rl-head}.
    If $op$ finishes performing
    Line~\ref{lin:ll-head2} on the right-hand side, $op$ jumps to Line~\ref{lin:rl-head} and performs an ${\RL}(head)$.
    Thus, if there is no pending state-changing operation by process $p_i$, then $p_i \notin head.context$.

    We show that for any ${\LL}(announce[j])$, $1\leq j \leq n$, by process $p_i$ there is a later $\RL$ or $\SC$ to $announce[j]$ by $p_i$ or a $\St$ to $announce[j]$ by a different process that resets $announce[j].context$ and removes $p_i$ from it.
    This implies that if there is no pending state-changing operation by process $p_i$, then $p_i \notin announce[j].context$.
    Consider operation $op$ that performs Line~\ref{lin:ll-announce}. 
    If $op$ finishes performing Line~\ref{lin:ll-announce} on the right-hand side, $op$ performs an ${\RL}(announce[j])$ in Line~\ref{lin:rl-announce-r}.
    If $op$ finishes performing Line~\ref{lin:ll-announce} on the left-hand side, then the ${\LL}(announce[j])$ in Line~\refl{lin:ll-announce} returns value $a$.
    If the condition in Line~\ref{lin:vl-head} doesn't hold and $a = \bot$, then $op$ performs an ${\RL}(announce[j])$ in Line~\ref{lin:rl-announce}. Otherwise, if $a\neq \bot$, by Lemma~\ref{lem:quiescent-announce-bot} and since the last configuration is state-quiescent, there is a ${\St}(announce[j],\bot)$ before the execution reaches a state-quiescent configuration.
    
    If the condition in Line~\ref{lin:vl-head} holds, the algorithm is in mode $B_i$ for some $i\geq 1$ when $op$ performs the ${\VL}(head)$ in Line~\ref{lin:vl-head}. If $a\in O$, then $op$ performs an ${\SC}(announce[j],\wildcard)$ in Line~\ref{lin:replace-inv-rsp}. Let $op^*$ be the operation linearized by the transition from mode $A_{i-1}$ to mode $B_i$, necessarily $\pi(op^*) = j$. Assume $a = \bot$, by Lemma~\ref{lem:return-helper} $op^*$ is cleared in state $B_i$, therefore, $op^*$ must write $\bot$ to $announce[i]$ in state $B_i$. However, this contradicts that $op^*$ cannot return in state $B_i$. Thus, $a\neq \bot$, and by Lemma~\ref{lem:quiescent-announce-bot} and since the last configuration is state-quiescent, there is a ${\St}(announce[j],\bot)$ before the execution reaches a state-quiescent configuration.
\end{proof}

\subsection{Lock-Free Perfect-HI R-LLSC Object from Atomic {\CAS}}
\label{sec:R-LLSC}

\begin{algorithm}[tb]\small
\caption{\small Lock-free perfect HI R-LLSC object from {\CAS} : code for process $p_i$}
\label{alg:hi-llsc-from-cas}
\raggedright 
     \quad\,\, $X$: {\CAS} variable initialized to $\parens*{v_0,0,\dots,0}$ \\
     \quad
\setlength{\multicolsep}{0.0pt}
\begin{multicols}{2}
\begin{algorithmic}[1]
    \Statex ${\LL}(\mathcal{O})$:
        \State $cur \gets \textsc{Read}(X)$
        \label{lin:ll-read}
        \State $new \gets cur$; $new.context[i] \gets 1$
        \While{!${\CAS}\parens{X, cur, new}$}
        \label{lin:ll-cas}
            \State $cur \gets \textsc{Read}(X)$
            \State $new \gets cur$; $new.context[i] \gets 1$
        \EndWhile
        \State \algorithmicreturn{} $cur.val$
    \Statex
    \Statex ${\SC}(\mathcal{O}, v)$:
    \State $cur \gets \textsc{Read}(X)$
    \label{lin:sc-read1}
    \While{$cur.context[i] = 1$}
    \label{lin:sc-valid-cond}
        \If{${\CAS}(X, cur, \parens{v,0,\dots,0})$}
        \label{lin:sc-cas}
            \algorithmicreturn{} \textit{true}
        \EndIf
        \State $cur \gets \textsc{Read}(X)$
        \label{lin:sc-read2}
    \EndWhile
    \State \algorithmicreturn{} \textit{false}
    \Statex
    \Statex ${\VL}(\mathcal{O})$:
    \State $cur \gets \textsc{Read}(X)$
    \label{lin:vl-read}
    \State \algorithmicreturn{} $cur.context[i]$
    \columnbreak
    \Statex ${\RL}(\mathcal{O})$:
    \State $cur \gets \textsc{Read}(X)$
    \label{lin:rl-read1}
    \State $new \gets cur$; $new.context[i] \gets 0$
    \While{$cur.context[i] = 1$}
    \label{lin:rl-valid-cond}
        \If{${\CAS}(X, cur, new)$} 
        \algorithmicreturn{} \textit{true}
        \label{lin:rl-cas}
        \EndIf
        \State $cur \gets \textsc{Read}(X)$
        \label{lin:rl-read2}
        \State $new \gets cur$; $new.context[i] \gets 0$
    \EndWhile
    \State \algorithmicreturn{} \textit{true}
    \Statex
    \Statex ${\Ld}(\mathcal{O})$:
    \State $cur \gets \textsc{Read}(X)$
    \label{lin:rllscf-read}
    \State \algorithmicreturn{} $cur.val$
    \Statex
    \Statex ${\St}(\mathcal{O}, v)$:
    \State $\textsc{Write}(X,\parens{v,0,\dots,0})$
    \label{lin:rllscf-write}
    \State \algorithmicreturn{} \textit{true}
\end{algorithmic}
\end{multicols}
\end{algorithm}

The implementation of an R-LLSC object using a single atomic
{\CAS} object is based on~\cite{IsraeliRappoportPODC94}, 
and its code appears in Algorithm~\ref{alg:hi-llsc-from-cas}.
The state of the R-LLSC object $\mathcal{O}$ is stored in the {\CAS} object 
in the format $(v, c_1,\ldots,c_n) \in V\times \braces{0,1}^n$,
where $v = \mathcal{O}.val$ is its value, 
and each bit $c_i$ indicates whether or not $p_i \in \mathcal{O}.context$.
Denote $x.val = v$ and $x.context[i] = c_i$.
The implementation is perfect HI, 
because the mapping from abstract state to memory representation is unique, 
and no additional information is stored.

The operations {\Ld} and {\VL} are read-only; to implement them, 
we simply read $X$ and return the appropriate response.
A {\St} operation writes into the {\CAS} a new value with an empty context, 
regardless of the current state of the objects.
Finally, the {\LL}, {\RL} and {\SC} operations are implemented by reading $X$ 
and then trying to update it using a $\CAS$ operation, 
but this is not guaranteed to succeed; 
hence, these operations are only lock-free, not wait-free.

Fix an execution $\alpha$ of Algorithm~\ref{alg:hi-llsc-from-cas}.
We define {linearization points} for operations in $H(\alpha)$ as steps in the execution $\alpha$,
as follows. 
Note that operations on the CAS object $X$ can be linearization points,
since the CAS object is atomic and the invocation and response occur in the same step. 
\begin{itemize}
    \item Let $ll$ be an ${\LL}(\mathcal{O})$ operation in $H(\alpha)$. If operation $ll$ performs a successful {\CAS} in Line~\ref{lin:ll-cas}, then $lin(ll)$ is defined to be this successful {\CAS}.
    Otherwise, $lin(ll)$ is undefined.

    \item Let $vl$ be a ${\VL}(\mathcal{O})$ operation in $H(\alpha)$, $lin(vl)$ is defined to be the $\textsc{Read}(X)$ in Line~\ref{lin:vl-read} in $vl$.
    Otherwise, $lin(vl)$ is undefined.

     \item Let $rl$ be a ${\RL}(\mathcal{O})$ operation in $H(\alpha)$. If operation $rl$ performs a successful {\CAS} in Line~\ref{lin:rl-cas}, then $lin(rl)$ is defined to be this successful {\CAS}. If $rl$ performs a $\textsc{Read}(X)$ in Line~\ref{lin:rl-read1} or Line~\ref{lin:rl-read2} that returns value $v$ such that $v.context[\pi(rl)] = 0$, then $lin(rl)$ is defined to be this \textsc{Read}. 
     Otherwise, $lin(rl)$ is undefined.

     \item Let $sc$ be a ${\SC}(\mathcal{O},\wildcard)$ operation in $H(\alpha)$. If operation $sc$ performs a successful {\CAS} in Line~\ref{lin:sc-cas}, then $lin(sc)$ is defined to be this successful {\CAS}. 
     If $sc$ performs a $\textsc{Read}(X)$ in Line~\ref{lin:sc-read1} or Line~\ref{lin:sc-read2} that returns value $v$ such that $v.context[\pi(sc)] = 0$, then $lin(sc)$ is defined to be this \textsc{Read}. 
     Otherwise, $lin(sc)$ is undefined.

      \item Let $l$ be a ${\Ld}(\mathcal{O})$ operation in $H(\alpha)$, $lin(l)$ is defined to be the $\textsc{Read}(X)$ in Line~\ref{lin:rllscf-read} in $l$. 
      Otherwise, $lin(l)$ is undefined.
      
       \item Let $s$ be a ${\St}(\mathcal{O}, \wildcard)$ operation in $H(\alpha)$, $lin(s)$ is defined to be the ${\St}(X, \wildcard)$ in Line~\ref{lin:rllscf-write} in $s$.
       Otherwise, $lin(s)$ is undefined.
\end{itemize}

Define the linearization function $h_{llsc}$ according to these linearization points as follows; define $h_{llsc}(\alpha)$ to be the sequential history that consists of the operations $op$ in $H(\alpha)$ such that $lin(op)$ is defined, and order the operations such that $op_1$ precedes $op_2$ in $h_{llsc}(\alpha)$ if and only if $lin(op_1)$ precedes $lin(op_2)$ in $\alpha$.

Clearly, the {\VL}, {\Ld} and {\St} are wait-free and {\LL}, {\SC} and {\RL} are lock-free. 
We have the next theorem, proved in Appendix~\ref{app:R-LLSC}:

\begin{theorem}
\label{thm:RLLSCF-from-CAS}
    Algorithm~\ref{alg:hi-llsc-from-cas} is a lock-free linearizable perfect HI 
    implementation of a R-LLSC object from atomic {\CAS}.
\end{theorem}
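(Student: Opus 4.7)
My plan is to prove the three parts of the theorem separately: perfect history independence, linearizability, and lock-freedom. Perfect HI is the easiest: the only shared memory is the single \CAS{} cell $X$, whose state is a pair $(v, c_1,\ldots,c_n) \in V \times \{0,1\}^n$ that directly encodes the abstract R-LLSC state $(\mathcal{O}.val, \mathcal{O}.context)$. The mapping from abstract state to memory representation is a fixed bijection (no auxiliary fields, no sequence numbers, no history-dependent padding), so I just need to verify that whenever the abstract R-LLSC state is some $(v, S)$, the \CAS{} cell holds $(v, \mathbf{1}_S)$. This is an invariant I will maintain as part of the linearizability proof.

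The core of the proof is linearizability. My plan is to establish the following invariant by induction on steps of $\alpha$: immediately after every step, $X.\mathit{val}$ equals the value written by the most recently linearized $\LL$/$\SC$/$\St$ (according to $h_{llsc}$, so far), and $X.\mathit{context}[i] = 1$ iff the most recent $h_{llsc}$-linearized $\LL$/$\SC$/$\St$/$\RL$ affecting $p_i$'s membership is an $\LL$ by $p_i$. Concretely, each of the four update primitives on $X$ (successful $\CAS$ in the $\LL$, $\SC$, $\RL$ code, or the $\Write$ in the $\St$ code) is the sole linearization point of the enclosing high-level operation and atomically effects exactly what the sequential specification demands (e.g.\ the $\LL$'s successful $\CAS$ flips $X.\mathit{context}[i]$ from $0$ to $1$ and leaves $X.\mathit{val}$ untouched, exactly matching $\LL$'s sequential behavior). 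Once this invariant is in place, I verify each operation type:
\begin{itemize}
    \item $\LL$: the successful $\CAS$ in Line~\ref{lin:ll-cas} is a single atomic transition from $(v,\mathit{c})$ to $(v, \mathit{c}[i{:=}1])$; $\LL$ returns $v$, matching the spec.
    \item $\SC$: if a $\CAS$ in Line~\ref{lin:sc-cas} succeeds, the invariant at the preceding $\textsc{Read}$ guarantees $p_i \in \mathcal{O}.\mathit{context}$, so the abstract $\SC$ is legal and returns true; if the linearization point is a $\textsc{Read}$ with $\mathit{cur}.\mathit{context}[i]=0$, by the invariant $p_i \notin \mathcal{O}.\mathit{context}$ at that step, so returning false matches the spec.
    \item $\RL$: symmetric to $\SC$, with the successful $\CAS$ atomically clearing only bit $i$, and the fallback $\textsc{Read}$ linearization point corresponding to the case where some other operation has already removed $p_i$ from the context (so $\RL$ has nothing to do and may return true).
    \item $\VL$, $\Ld$, $\St$: immediate.
\end{itemize}
Real-time order follows because every linearization point is a step taken by the invoking process (or a $\textsc{Read}$ it performs), which by construction lies strictly between the invocation and response of its enclosing high-level operation. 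I also need to argue that every completed operation has a well-defined linearization point: for $\SC$ and $\RL$ this is the subtlety---the operation either succeeds in a $\CAS$ or exits the while loop because a $\textsc{Read}$ sees $\mathit{context}[i]=0$, and in both cases a linearization point is assigned.

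Lock-freedom reduces to the observation that if any invocation of $\LL$, $\SC$, or $\RL$ performs an unsuccessful $\CAS$ on $X$, then some other process must have performed a successful update to $X$ between its preceding $\textsc{Read}$ and its $\CAS$; hence in every infinite suffix in which some $\LL$/$\SC$/$\RL$ remains pending and takes steps, infinitely many successful high-level operations are completed by other processes. The $\VL$, $\Ld$, $\St$ operations are straight-line and hence wait-free.

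The main obstacle I anticipate is the care needed for the $\SC$ and $\RL$ cases whose linearization point is a plain $\textsc{Read}$ rather than a successful $\CAS$. In particular, to conclude that the return value (false for $\SC$, true for $\RL$) is consistent with the sequential spec, the invariant must be sharp enough to guarantee that at the moment of that $\textsc{Read}$, $p_i$ is truly absent from the abstract context. This in turn requires me to argue that the last $h_{llsc}$-linearized operation affecting $p_i$'s membership before the $\textsc{Read}$ left $p_i$ out of $\mathcal{O}.\mathit{context}$---which I get by tracking that the only way $X.\mathit{context}[i]$ transitions from $1$ to $0$ is via a successful $\CAS$ from an $\SC$, $\RL$, or $\Write$ from $\St$ by any process, each of which is itself a linearization point of a corresponding high-level operation. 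Once this coupling between bit transitions on $X$ and linearization points is established, the rest of the argument is routine.
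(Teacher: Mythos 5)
Your proposal is correct and follows essentially the same route as the paper: the same linearization points (successful \CAS{} for $\LL$/$\SC$/$\RL$, the fallback $\textsc{Read}$ with $\mathit{context}[i]=0$ for $\SC$/$\RL$, the $\textsc{Read}$/$\textsc{Write}$ for $\VL$/$\Ld$/$\St$), the same induction coupling the abstract R-LLSC state with the contents of the \CAS{} cell (the paper's Lemma on $\seqstate{h_{llsc}(\alpha)} = \seqstate{\mem{\alpha}}$), and the same observations for perfect HI and lock-freedom. No gaps.
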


We say that {\LL}, {\RL}, {\SC} and {\St} are \emph{context changing}, 
while {\SC} and {\St} are \emph{context resetting}.
We have the following stronger progress property for the lock-free operations 
which follows from the code.

\begin{lemma}
\label{lem:rllscf-lock-free}
    \ns{
    Let $op$ be a {\LL}, {\SC}, or {\RL} operation that is pending in execution $\alpha$. In any extension of $\alpha$ where the process that invoked $op$ takes infinitely many steps without $op$ returning, infinitely many context-changing operations return.}
\end{lemma}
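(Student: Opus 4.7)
The plan is to read off the claim directly from the structure of the code in Algorithm~\ref{alg:hi-llsc-from-cas}, by matching each failed CAS attempt of the stuck operation with some successful modification of $X$ performed by another process.

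First, I would observe that {\LL}, {\SC}, and {\RL} all share a common shape: a $\textsc{Read}(X)$ followed by a \textbf{while}-loop whose body contains exactly one CAS on $X$ (Lines~\ref{lin:ll-cas}, \ref{lin:sc-cas}, \ref{lin:rl-cas}). In all three, a successful CAS causes the operation to return immediately. For {\SC} and {\RL} the loop can additionally exit (and the operation return) when the guarding condition becomes false, i.e., when the most recent $\textsc{Read}(X)$ returns some $v$ with $v.context[i]=0$. So assume $op$ is pending throughout an extension $\beta$ of $\alpha$ in which $\pi(op)$ takes infinitely many steps; since $op$ does not return, every iteration of its loop must perform a CAS that fails, and this must happen infinitely often.

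The next step would be the central observation: each failed CAS in some iteration means that the value of $X$ at the CAS step differs from the value read in the matching $\textsc{Read}(X)$ in the same iteration (Line~\ref{lin:ll-read}, \ref{lin:ll-cas} on re-read, \ref{lin:sc-read1}/\ref{lin:sc-read2}, or \ref{lin:rl-read1}/\ref{lin:rl-read2}). Consequently, between that Read and that CAS some other step must have modified the $\CAS$ cell $X$. Inspecting the code, the only steps that modify $X$ are the successful CASes in Lines~\ref{lin:ll-cas}, \ref{lin:sc-cas}, \ref{lin:rl-cas}, each of which immediately causes a {\LL}, {\SC}, or {\RL} to return, and the $\textsc{Write}(X,\ldots)$ in Line~\ref{lin:rllscf-write}, which is the sole action of a {\St} that then returns. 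All four possibilities are context-changing operations that return.

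To conclude, I would package these observations: associate to each of the infinitely many failed CAS attempts of $op$ the write-to-$X$ step inside the interval between its Read and its CAS, pick the earliest such write per interval, and note that this gives infinitely many distinct writes to $X$ (since the intervals of distinct iterations of $op$ are disjoint by program order at $\pi(op)$). Each such write is the response step of a context-changing operation, so infinitely many context-changing operations return in $\beta$. I do not anticipate a real obstacle: the argument is essentially a careful code inspection, and the only subtlety is being explicit that the ``returns value with $context[i]=0$'' exit for {\SC}/{\RL} is also excluded by the assumption that $op$ does not return, so failed CAS truly is the only way the loop body can repeat.
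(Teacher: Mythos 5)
Your proof is correct and takes essentially the same route the paper intends: the paper merely asserts that the lemma ``follows from the code,'' and your argument is exactly the code inspection being alluded to --- each of the infinitely many failed CAS attempts of the stuck operation is witnessed by a modification of $X$ in a disjoint interval, and every modification of $X$ is the final step of a context-changing operation that then returns. The one subtlety you flag (ruling out the $context[i]=0$ exit for {\SC}/{\RL}) is handled correctly by the assumption that $op$ never returns.
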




When progress is concerned,
we cannot rely on the progress of Algorithm~\ref{alg:hi-llsc-from-cas} as a black box, 
because the $\LL$, $\RL$ and $\SC$ operations are not by themselves wait-free.
Still, we can rely on the \emph{interactions} among the R-LLSC operations 
to ensure that the way they are used in Algorithm~\ref{alg:universal-construct-R-LLSC} is wait-free.
The $\SC$ and $\St$ operations ``help'' $\RL$ and $\SC$ operations,
in the sense that a successful $\SC$ or $\St$ operation clears the $context$,
causing all pending $\RL$ and $\SC$ operations to complete: 
$\RL$ operations return because the process is indeed no longer in the $context$, 
and $\SC$ operations return because they have failed. 
This is formalized in the next lemma, proved in Appendix~\ref{app:R-LLSC}:

\begin{restatable}{lemma}{scrlfinish}
\label{lem:sc-rl-finish}
    Let $op$ be an {\RL} or {\SC} operation
    that is pending in execution $\alpha$,
    and suppose that in $\alpha$,
    a context-resetting
    operation is invoked after $op$,
    and returns \textit{true} before
    $op$ returns.
    Then in any extension of $\alpha$,
    $op$ returns within a finite number of steps by the process that invoked it. 
\end{restatable}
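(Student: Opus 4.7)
The plan is to exploit the observation that the only primitive in Algorithm~\ref{alg:hi-llsc-from-cas} that can flip $X.context[i]$ from $0$ to $1$, where $i = \pi(op)$, is $p_i$'s own $\LL$ operation (at the successful $\CAS$ in Line~\ref{lin:ll-cas}). Since $p_i$ is currently executing the pending operation $op$, which is either an $\RL$ or an $\SC$ rather than an $\LL$, and each process has at most one pending operation at a time, $p_i$ cannot perform any $\LL$ until $op$ returns. Combined with the fact that every successful $\SC$ or $\St$ writes a value of the form $(v, 0, \ldots, 0)$ into $X$ and hence zeroes out the entire context, this yields the invariant: from the step at which the hypothesized context-resetting operation is linearized until $op$ returns, $X.context[i]$ is continuously $0$. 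I would establish this by a case analysis over the four context-changing primitives $\LL$, $\SC$, $\RL$, and $\St$ in Algorithm~\ref{alg:hi-llsc-from-cas}.

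Given the invariant, the termination argument is a short local inspection. In any extension of $\alpha$, each subsequent step by $p_i$ while $op$ is still pending is either a $\textsc{Read}(X)$ (at Line~\ref{lin:rl-read1},~\ref{lin:rl-read2},~\ref{lin:sc-read1}, or~\ref{lin:sc-read2}) or a $\CAS$ (at Line~\ref{lin:rl-cas} or~\ref{lin:sc-cas}). A $\textsc{Read}(X)$ returns a value $cur$ with $cur.context[i] = 0$ by the invariant, so the while-guard evaluates to false and $op$ exits the loop and returns. A $\CAS$ inside the loop is attempted with a local $cur$ for which $cur.context[i] = 1$ (otherwise $p_i$ would not be executing the loop body), while the shared state has $X.context[i] = 0$, so $cur \neq X$ and the $\CAS$ fails; $p_i$ then performs its next $\textsc{Read}$ and terminates as above. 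Hence $op$ returns within at most two additional steps by $p_i$.

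The main obstacle, and the step I would treat most carefully, is verifying the monotonicity invariant---in particular, checking that no successful $\CAS$ performed by another process in any of the procedures $\LL$, $\SC$, $\RL$, or $\St$ can raise $X.context[i]$ back to $1$. Other processes' $\LL$s only set their own context bits, while successful $\SC$ and $\St$ operations overwrite the context with all zeros, and $\RL$ operations only clear their invoker's bit; none can set $p_i$'s bit. Once that is pinned down, the conclusion that $op$ returns after at most one failing $\CAS$ and one $\textsc{Read}$ (plus the trivial return) follows immediately, delivering the progress guarantee claimed by the lemma.
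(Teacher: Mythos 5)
Your proposal is correct and follows essentially the same route as the paper's proof: a successful context-resetting operation zeroes the entire context, only $p_i$'s own $\LL$ can re-set bit $i$ and $p_i$ cannot invoke an $\LL$ while $op$ is pending, so $X.context[\pi(op)]$ stays $0$ until $op$ returns, after which $op$'s loop exits within a constant number of its own steps. Your additional observation that the pending $\CAS$ must in fact fail (rather than possibly succeed) is a minor sharpening of the paper's phrasing, not a different argument.
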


\subsection{Wait-Free State-Quiescent HI Universal Implementation from Atomic CAS}
\label{sec:universal-impl-wait-free}

We now combine Algorithm~\ref{alg:universal-construct-R-LLSC} with an R-LLSC implementation
described in Section~\ref{sec:R-LLSC},
to get a wait-free state-quiescent HI universal implementation, despite that the R-LLSC implementation is only lock-free.

Since $\LL$, $\RL$ and $\SC$ are lock-free, 
if a process tries to modify $head$,
some process will eventually succeed in modifying $head$.
By Lemma~\ref{lem:sc-rl-finish}, 
this allows other pending $\SC$ and $\RL$ operations to complete.
This property does not hold for a {\LL} operation, which may never return, 
but this is handled by the invoking algorithm, which interleaves steps where 
it checks if the operation was performed by a different process, as explained above.
The priority-based helping mechanism ensures that every pending 
operation is eventually applied.
Thus, the wait conditions in Lines~\refl{lin:ll-head1},~\refl{lin:ll-announce} and~\refl{lin:ll-head2} 
eventually become false, releasing operations that might be stuck in an $\LL$ operation.

Consider an {infinite} execution $\alpha$ of Algorithm~\ref{alg:universal-construct-R-LLSC} with the implemented R-LLSC objects, and fix the linearization of the operations on the R-LLSC objects described in Section~\ref{sec:R-LLSC}.
The next lemma shows that if processes invoke infinitely many state-changing operations, then infinitely many operations are linearized.

\begin{lemma}
\label{lem:uc-global-progress}
    For any $i\geq 0$, the algorithm transitions from mode $A_{i}$ to mode $B_{i+1}$ or from mode $B_{i+1}$ to mode $A_{i+1}$ in a finite number of steps by processes taking steps in state-changing operations.
\end{lemma}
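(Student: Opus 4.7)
The plan is to prove the lemma by contradiction. Suppose instead that the algorithm stays in some fixed mode $M \in \set{A_i, B_{i+1}}$ throughout an infinite suffix $\beta$ of $\alpha$ in which processes take infinitely many state-changing steps. Since no mode transition occurs in $\beta$, no $\SC(head, \wildcard)$ succeeds in $\beta$. Because $\St(head, \wildcard)$ is never called and main-loop iterations never invoke $\RL(head)$ (the only $\RL(head)$ appears in Line~\ref{lin:rl-head} of the exit phase), no process that completes an $\LL(head)$ in $\beta$ while inside a main-loop iteration can be removed from $head.context$ before its subsequent operations on $head$ in that same iteration.

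Next, I would identify a process $p$ with a state-changing operation that remains pending throughout $\beta$. Such a $p$ must exist: responses are written to $announce[\wildcard]$ only via Line~\ref{lin:replace-inv-rsp}, which sits inside a branch guarded by a successful $\VL(head)$ after an $\LL(head)$ in a mode-$B$ configuration; since no mode transition occurs, no response is ever written, no operation returns via Line~\ref{lin:clear-my-announce}, and the infinitely many state-changing steps must therefore belong to operations stuck in the main loop. Moreover, $announce[\pi(p)]$ stays in $O$ throughout $\beta$, so the wait-until escape hatch at Line~\refl{lin:ll-head1} never fires for $p$, and $p$ keeps taking steps of $\LL(head)$. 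By Lemma~\ref{lem:rllscf-lock-free}, either $p$'s $\LL(head)$ returns in $\beta$, or infinitely many context-changing operations on $head$ return. In the latter case, since every returning $\SC(head, \wildcard)$ or $\RL(head)$ is preceded by a returning $\LL(head)$ by the same process in the same iteration, infinitely many $\LL(head)$ calls in the main loop must themselves return during $\beta$.

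Finally, I would pick a main-loop iteration of some process $p'$ in $\beta$ whose $\LL(head)$ returns, and argue that $p'$ then forces a mode transition. If $M = A_i$, the $\LL$ returns $\tup{q_i, \bot}$; $p'$ enters the $r = \bot$ branch, and an analogous application of Lemma~\ref{lem:rllscf-lock-free} (together with the wait-until never firing) ensures $p'$ reaches the $\SC(head, \tup{state, \tup{rsp, j}})$ in Line~\ref{lin:sc-headB}. Between the $\LL$ and this $\SC$, no operation removes $p'$ from $head.context$, so the $\SC$ succeeds and triggers an $A_i \to B_{i+1}$ transition, a contradiction. If $M = B_{i+1}$, the $\LL$ returns $\tup{q_{i+1}, \tup{rsp_{i+1}, k}}$ for some $k$; $p'$ enters the $r \neq \bot$ branch, completes $\LL(announce[k])$ and $\VL(head)$ (the latter succeeds because $p'$ remains in $head.context$), and then executes $\SC(head, \tup{q_{i+1}, \bot})$ at Line~\ref{lin:sc-headA1}, which succeeds for the same reason---triggering a $B_{i+1} \to A_{i+1}$ transition, again contradicting our assumption.

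The main obstacle is the interaction between the lock-free R-LLSC operations and the wait-until escape hatches: one must simultaneously argue that $p$'s wait-until cannot fire (since no response can be written without a mode transition) and that infinitely many $\LL(head)$ calls in the main loop nevertheless return (by propagating Lemma~\ref{lem:rllscf-lock-free} through the dependency that every $\SC(head)$ or $\RL(head)$ return is preceded by an $\LL(head)$ return). Once this scaffolding is in place, the core contradiction---that $p'$'s in-context $\SC(head, \wildcard)$, or $\VL(head)$ followed by $\SC(head, \tup{q, \bot})$, must succeed---follows directly from the context-preservation observation established at the start.
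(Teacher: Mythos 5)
Your proposal follows the right contradiction setup, but the final step contains a genuine gap. You argue that because no $\SC(head,\wildcard)$ succeeds in $\beta$, a process $p'$ whose $\LL(head)$ returns is never removed from $head.context$, and therefore its subsequent $\SC(head,\wildcard)$ at Line~\ref{lin:sc-headB} (or Line~\ref{lin:sc-headA1}) \emph{succeeds}. This conflates ``cannot return \textit{false}'' with ``returns \textit{true} in finitely many steps.'' In the R-LLSC implementation (Algorithm~\ref{alg:hi-llsc-from-cas}), an $\SC$ whose caller stays in the context never returns \textit{false}, but its internal $\CAS$ can fail forever: every concurrent $\LL(head)$ by another process changes the value of $X$ (by adding itself to the context) without evicting $p'$, so $p'$'s $\SC$ can loop indefinitely without returning. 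The same problem affects your claim in the $B_{i+1}$ case that $p'$ ``completes $\LL(announce[k])$'' --- $\LL$ is only lock-free. Establishing that \emph{some} of these lock-free calls terminate is exactly the content of the lemma, so you cannot assume it. The paper's proof deliberately avoids asserting that any particular $\SC$ succeeds: it observes that if an $\SC(head,\wildcard)$ following an $\LL(head)$ that returned in the stuck mode returns \emph{at all} (with either response), a mode transition must have occurred --- a failed $\SC$ means the caller was evicted from the context, which can only happen via a successful $\SC(head,\wildcard)$ by someone else. Hence no such $\SC$ returns, each process completes at most one more $\LL(head)$ before getting stuck downstream, so only finitely many context-changing operations on $head$ return in the suffix; this contradicts Lemma~\ref{lem:rllscf-lock-free} applied to a process stuck in an operation on $head$. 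Your counting observation in the middle paragraph (infinitely many $\LL(head)$ returns) actually puts you one step away from this argument, but you then abandon it in favor of the unjustified ``the $\SC$ succeeds.''

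A secondary inaccuracy: you claim that since no mode transition occurs, ``no response is ever written.'' This is false when the stuck mode is $B_{i+1}$: Line~\ref{lin:replace-inv-rsp} can execute there after an $\LL(head)$ returning in $B_{i+1}$ and a successful $\VL(head)$, and by Lemma~\ref{lem:return-helper} exactly one operation (the one linearized by the transition into $B_{i+1}$) can be cleared and return. Your conclusion that some operation stays pending forever survives (at most one operation is cleared in the suffix), but the justification as stated is wrong, and the paper's proof has to account for this one escaping process when bounding the number of $\RL(head)$ and exit-phase steps.
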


\begin{proof}
    Assume the algorithm is in mode $A_i$ or mode $B_{i+1}$ and never transitions to mode $B_{i+1}$ or mode $A_{i+1}$, respectively, 
    despite processes taking an infinite number of steps in state-changing operations.
    By Lemma~\ref{lem:return-helper}, no new operations are cleared in mode $A_i$ and exactly one pending operation $op^*$ is cleared in mode $B_{i+1}$,
    namely, the operation linearized by the transition to mode $B_{i+1}$.
    A cleared operation eventually reaches Line~\ref{lin:read-rsp}, as the waiting conditions in Line~\ref{lin:start-loop}, Line~\ref{lin:wait-til1} and Line~\ref{lin:wait-til2} do not hold. 
    If the algorithm is in mode $A_i$, $head = \tup{\wildcard, \bot}$, and the waiting condition in Line~\ref{lin:wait-til3} does not hold for any cleared operation. If the algorithm is in state $B_{i+1}$, the waiting condition in Line~\ref{lin:wait-til3} does not hold for any cleared operation except $op^*$.
    Therefore, eventually,
    all processes taking steps in state-changing operation are inside the loop in Lines~\ref{lin:start-loop}--\ref{lin:end-loop}, except maybe one process that is stuck in Line~\ref{lin:ll-head2}. Let $\alpha'$ be this infinite suffix of $\alpha$.
    
    
    A transition to mode $B_{i+1}$ occurs if an operation performs an ${\LL}(head)$ in Line~\refl{lin:ll-head1} that returns in mode $A_i$ and afterwards a successful ${\SC}(head,\wildcard)$ in Line~\ref{lin:sc-headA1}.
    A transition to mode $A_{i+1}$ occurs if an operation performs an ${\LL}(head)$ in Line~\refl{lin:ll-head1} or Line~\refl{lin:ll-head2} that returns in mode $B_{i+}$ and afterwards a successful ${\SC}(head,\wildcard)$ in Line~\ref{lin:sc-headA1} or Line~\ref{lin:sc-headA2}, respectively. 
    Once an ${\LL}(head)$ returns in mode $A_i$ or $B_i$ in $\alpha'$ and the process performs an ${\SC}(head,\wildcard)$, any response of the {\SC} indicates a state transition. If it succeeds, then this {\SC} changes the value of $head$, and if it  
    fails, the value of $head$ is changed by a different {\SC}, between the {\LL} and the failed {\SC}.

    Since the algorithm mode is stuck in $A_i$ or $B_i$ in $\alpha'$, 
    any ${\LL}(head)$ that starts in mode $A_i$ or $B_i$, respectively, 
    returns this mode. 
    This implies that only a finite number of ${\LL}(head)$ and ${\SC}(head, \wildcard)$ operations 
    return in $\alpha'$. Since {\Ld} and {\VL} are not context-changing, 
    this contradicts Lemma~\ref{lem:rllscf-lock-free}, 
    which states that an infinite number of context-changing operations on $head$ return in $\alpha'$.
\end{proof}




Finally, we discuss history independence.
At a state-quiescent configuration, the states of the R-LLSC objects are uniquely defined, 
according to the state reached by the sequence of operations applied during the execution.
Note that the state includes both the \textit{val} and \textit{context} part of the object.
Since the R-LLSC implementation is perfect HI,  by a simple composition, 
this state translates to a unique memory representation.
The implementation of the R-LLSC objects provides the strongest form of history independence, and for our need, a weaker state-quiescent HI implementation also suffices.
The next theorem
concludes this section by putting all the pieces together.

\begin{restatable}{theorem}{universalconstructRLLSC}
\label{thm:universal-construct-RLLSC}
Algorithm~\ref{alg:universal-construct-R-LLSC}, 
with the R-LLSC objects implemented by Algorithm~\ref{alg:hi-llsc-from-cas}, 
is a linearizable wait-free state-quiescent HI universal implementation.
\end{restatable}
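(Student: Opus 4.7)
The plan is to establish the three properties in turn: linearizability, state-quiescent HI, and wait-freedom, each building on the lemmas already in place.

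For linearizability, I would first invoke Theorem~\ref{thm:RLLSCF-from-CAS}, which says the R-LLSC implementation from atomic CAS is linearizable, so every execution of the composite algorithm corresponds to an execution of Algorithm~\ref{alg:universal-construct-R-LLSC} with atomic R-LLSC objects, via the R-LLSC linearization. Then I apply Lemma~\ref{lem:uc-seq-spec} directly to conclude that $h_{uc}(\alpha)$ is a linearization matching the sequential specification of the abstract object, including all completed operations and respecting real-time order (since the linearization points of state-changing operations are the successful $\SC(head,\cdot)$ transitions, which lie between invocation and response, and read-only operations are linearized immediately after the transition whose state they read).

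For state-quiescent HI, I would exhibit a canonical representation for every reachable abstract state $q$: set $\can{q}$ to be the memory where every R-LLSC object has empty context, $head.val = \langle q, \bot\rangle$, and $announce[i] = \bot$ for every $i$. At any state-quiescent configuration of an execution $\alpha$ with $\seqstate{h_{uc}(\alpha)} = q$, Lemma~\ref{lem:RLLSC-quiescent-state} gives empty contexts, Lemma~\ref{lem:quiescent-announce-bot} gives $announce[i] = \bot$, and Invariant~\ref{lem:universal-value-pairs} combined with Lemma~\ref{lem:uc-seq-spec} shows that $head = \langle q, \bot\rangle$ (since state-quiescence and Lemma~\ref{lem:return-helper} preclude being in a $B$ mode with a pending state-changing operation responsible for the response field). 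Finally, since Algorithm~\ref{alg:hi-llsc-from-cas} is perfect HI (Theorem~\ref{thm:RLLSCF-from-CAS}), the abstract R-LLSC state determines the underlying CAS representation uniquely, and perfect HI composes so that the memory representation is determined by the tuple of R-LLSC states, yielding a unique $\can{q}$.

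The main obstacle is wait-freedom, because Algorithm~\ref{alg:hi-llsc-from-cas} is only lock-free, and the $\LL$ calls in Algorithm~\ref{alg:universal-construct-R-LLSC} can in principle stall indefinitely. I would argue by contradiction: suppose process $p_i$ invokes a state-changing operation $op$ that never returns (read-only operations are immediate since $\Ld$ is wait-free). Since $p_i$ keeps taking steps, any time it sits inside an $\LL$ on $head$ or $announce$, the interleaved \textbf{wait until} branch in Lines~\refl{lin:ll-head1},~\refl{lin:ll-announce} and~\refl{lin:ll-head2} keeps reading the relevant memory cell, so $p_i$ cannot be stuck on an $\LL$ once the waiting condition becomes false. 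Hence across the infinite suffix, $p_i$ performs infinitely many R-LLSC operations; combined with Lemma~\ref{lem:sc-rl-finish} and the lock-freedom guarantee of Lemma~\ref{lem:rllscf-lock-free}, this (together with other processes' activity) supplies infinitely many successful context-resetting operations on $head$, and by Lemma~\ref{lem:uc-global-progress} the algorithm undergoes infinitely many mode transitions. Because $priority_j$ is incremented modulo $n$ at every successful $\SC(head,\cdot)$ by process $p_j$, within a bounded number of mode transitions some process's $priority$ equals $i$, so the operation $op$ announced in $announce[i]$ is selected in Line~\ref{lin:read-help-announce} and applied by the next successful $\SC$ on $head$; Invariant~\ref{lem:cleared-inv} and Lemma~\ref{lem:return-helper} then show $op$ is cleared, its response is written into $announce[i]$, and the third stage clears the response from $head$, after which all wait-until conditions at $p_i$ become false and $p_i$ reaches Line~\ref{lin:clear-my-announce} and returns, contradicting the assumption. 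The delicate point I expect to have to spell out carefully is that the priority mechanism really does target $p_i$ within $O(n)$ mode transitions even though different helpers may be advancing different local $priority$ counters, which is handled by noting that the global mode transitions are a well-defined totally-ordered sequence and each one increments exactly the $priority$ of the process that executed the successful $\SC$ on $head$.
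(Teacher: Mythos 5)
Your decomposition into linearizability, state-quiescent HI, and wait-freedom matches the paper's proof, and the first two parts are essentially identical to the paper's argument: linearizability is Theorem~\ref{thm:RLLSCF-from-CAS} composed with Lemma~\ref{lem:uc-seq-spec}, and the canonical representation $\can{q}$ with empty contexts, $head=\tup{q,\bot}$ and $announce[i]=\bot$ is obtained exactly from Lemmas~\ref{lem:quiescent-announce-bot}, \ref{lem:RLLSC-quiescent-state}, \ref{lem:return-helper} and \ref{lem:uc-seq-spec}, plus perfect HI of the R-LLSC layer. Those parts are fine.

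The gap is in the wait-freedom argument, at the step ``some process's priority equals $i$, so the operation $op$ announced in $announce[i]$ is selected in Line~\ref{lin:read-help-announce} and applied by the next successful $\SC$ on $head$.'' This does not follow: the process whose priority equals $\pi(op)$ merely \emph{attempts} an $\SC(head,\cdot)$ on $op$'s behalf, and that $\SC$ can fail because a different process, with a different priority, wins the race and applies a different operation. Your proposed fix --- that mode transitions are totally ordered and each increments exactly one priority counter --- bounds how fast each individual counter advances but still does not show that any $\SC$ applying $op$ ever \emph{succeeds}, which is what your contradiction (``$op$ is cleared and returns'') requires. The paper closes this differently: it derives a contradiction with Lemma~\ref{lem:uc-global-progress} rather than with the assumption that $op$ never returns. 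Concretely, if $op$ is never cleared then a process whose priority has reached $\pi(op)$ can never again perform a successful $\SC$ in Line~\ref{lin:sc-headB} (a success would linearize $op$, and by Lemma~\ref{lem:return-helper} and Claim~\ref{lem:uc-clear-finite-return} $op$ would then be cleared and return); hence that priority counter is pinned at $\pi(op)$ forever. Since every successful $\SC$ on $head$ advances the priority of the process performing it, after finitely many mode transitions \emph{all} pending operations have priority $\pi(op)$, every subsequent read in Line~\ref{lin:read-help-announce} from mode $A$ returns $op$'s descriptor, so no further $\SC$ in Line~\ref{lin:sc-headB} can succeed, and the algorithm is stuck in some mode $A_i$ while $op$ takes infinitely many steps --- contradicting Lemma~\ref{lem:uc-global-progress}. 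You need this ``all priorities converge and mode transitions cease'' argument (or an equivalent one) to replace the unjustified claim that the helper with the right priority eventually wins an $\SC$.
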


\begin{proof}
    Since Algorithm~\ref{alg:hi-llsc-from-cas} is linearizable (Theorem~\ref{thm:RLLSCF-from-CAS}), 
    Lemma~\ref{lem:uc-seq-spec} implies that Algorithm~\ref{alg:universal-construct-R-LLSC} 
    is a linearizable universal implementation.

    To show wait-freedom, first note that a read-only operation returns after a finite number of steps, 
    since {\Ld} is wait-free.
    
    Consider a state-changing operation $op$ and assume it never returns, despite taking infinitely many steps. 
    \ns{The next claim, proved in Appendix~\ref{app:universal-impl-wait-free}, implies that $op$ is never cleared:
    \begin{restatable}{claim}{ucclearfinitereturn}
    \label{lem:uc-clear-finite-return}
        Operation $op$ returns in a finite number of steps by the process that invoked $op$ after $op$ is cleared.
    \end{restatable}}
    After a successful transition from mode $A_i$ to mode $B_{i+1}$, $i\geq 0$, due to a ${\SC}(head, \wildcard)$ by operation $op'$, the value of $priority_{\pi(op')}$ is increased by 1 modulo $n$.
    For every process $p_j$, $1\leq j \leq n$, after at most $n$ consecutive mode transitions by this process since the write to $announce[\pi(op)]$ in Line~\ref{lin:announce-inv} by $op$, $priority_j = \pi(op)$. 
    Since $op$ is state changing, Lemma~\ref{lem:uc-global-progress} implies that 
    after a finite number of steps by $op$, all pending operations $op'$ have $priority_{\pi(op')} = \pi(op)$ 
    and the algorithm is in state $A_i$, $i\geq 0$.
    Since $op$ is never cleared, when $op'$ performs Line~\ref{lin:read-help-announce} 
    in state $A_i$ it must read the input operation of $op$. 
    Hence, the next ${\SC}(head,\wildcard)$ by $op'$ in Line~\ref{lin:sc-headB} tries 
    to linearize $op$. Since by Lemma~\ref{lem:return-helper}, 
    if $op$ is linearized it must also be cleared after a finite number of steps by $op$, the {\SC} fails.
    However, this implies the algorithm never transitions from state $A_i$ to $B_{i+1}$, 
    despite $op$ taking infinitely many steps, in contradiction to Lemma~\ref{lem:uc-global-progress}.

    To show that Algorithm~\ref{alg:universal-construct-R-LLSC} is state-quiescent HI, 
    consider execution $\alpha$ that ends in a state-quiescent configuration.
    By Lemma~\ref{lem:quiescent-announce-bot}, the value of $announce[i] = \bot$ for all $1\leq i \leq n$.
    By Lemma~\ref{lem:return-helper}, for any $i\geq 1$, the operation linearized by the transition from state $A_{i-1}$ to $B_i$ is still pending in state $B_i$. Hence, $\alpha$ must end in state $A$ and by Lemma~\ref{lem:uc-seq-spec}, $head = \tup{q, \bot}$, where $\seqstate{h_{uc}(\alpha)} = q$.
    By Lemma~\ref{lem:RLLSC-quiescent-state}, any R-LLSC variable $X$ has $X.context = \emptyset$ and only contains information about the $val$ part.
    By Theorem~\ref{thm:RLLSCF-from-CAS}, the R-LLSC implementation is perfect HI and the memory representation associated with each object is in a canonical state at the final configuration. This implies that the complete memory representation is also in a canonical state.
    Since this holds for any execution $\alpha$ that ends in a state-quiescent configuration such that $\seqstate{h_{uc}(\alpha)} = q$, this shows that Algorithm~\ref{alg:universal-construct-R-LLSC}, together with Algorithm~\ref{alg:hi-llsc-from-cas}, is a state-quiescent HI universal implementation.
\end{proof}
\section{Discussion}

This paper introduces the notion of history independence for concurrent data structures,
explores various ways to define it, 
and derives possibility and impossibility results.
We gave two main algorithmic results:
a wait-free multi-valued register, 
and a universal implementation of arbitrary objects. 
Interestingly, both implementations follow a similar recipe:
starting with a history-independent lock-free implementation, 
helping is introduced to achieve wait-freedom.
However, helping tends to leak information about the history of the object, 
so we introduce mechanisms to clear it. 

Our results open up a range of research avenues, 
exploring history-independent object implementations and other notions of history independence.
\emph{Randomization} is of particular importance,
as it is a tool frequently used to achieve both algorithmic efficiency 
and history independence.
When randomization is introduced, 
the distinction between weak and strong history independence becomes meaningful.
We note that randomization will not help circumvent the impossibility result from Section~\ref{sec:gen-lb},
if we require \emph{strong} history independence:
by a result of~\cite{HartlineHoMo02,HartlineHoMo05},
in any strongly history-independent implementation of a reversible object, 
the canonical memory representation needs to be fixed up-front, 
and our impossibility proof then goes through.
However, this does not rule out \emph{weakly} history-independent implementations.
We remark that even coming up with a meaningful definition for 
history independence in randomized concurrent implementations is non-trivial, 
because randomization can affect the number of steps an operation takes, 
making it challenging to define a probability distribution over the memory 
states at the points where the observer is allowed to observe the memory.


\begin{acks}
Hagit Attiya is partially supported by the Israel Science Foundation (grant number 22/1425).
Rotem Oshman is funded by NSF-BSF Grant No. 2022699.
Michael Bender and Martin Farach-Colton are funded by NSF Grants CCF-2106999, CCF-2118620, CNS-1938180, CCF-2118832, CCF-2106827, CNS-1938709, CCF-2247577.
\end{acks}


\bibliographystyle{abbrv}
\bibliography{cite}

\appendix
\section{Proof of Algorithm~\ref{alg:lock-free-op-quiescent} (Lock-Free State-Quiescent HI Multi-Valued Register from Binary Registers)}
\label{app:swsr lock-free register proofs}

Consider a finite execution $\alpha$ of a SWSR register implementation, 
that ends in a configuration with no pending operation by the writer, 
and contains a sequence of $\textsc{Write}(v_1), \ldots, \textsc{Write}(v_k)$ operations, 
$k\geq 0$ (where by abuse of notation, $k=0$ stands for execution with no \textsc{Write} operation). 
Since there is a single writer, for any linearization of $\alpha$, 
this sequence of \textsc{Write} operations must be linearized in order,
as they do not overlap. Thus, let $\seqstate{\alpha} = v_k$, since for any linearization $H$ of $\alpha$, 
$\seqstate{H} = v_k$. 
Note that since we consider a general SWSR register implementation, $\alpha$ can be an execution of both Algorithm~\ref{alg:lock-free-op-quiescent} and Algorithm~\ref{alg:wait-free-quiescent}.



\lockfreelinearization*

\begin{proof}
     Let $\alpha$ be a finite execution of Algorithm~\ref{alg:lock-free-op-quiescent} that ends in a state-quiescent configuration, and denote $v = \seqstate{\alpha}$.
     It is easy to verify from the writer's code that  $A[v] = 1$ and for every other index $j \neq v$, $A[j] = 0$ in $\mem{\alpha}$. Since every value $1\leq v \leq K$ has a unique memory representation $\can{v}$ that the memory of a state-quiescent configuration is equal to, this implies that the algorithm is state-quiescent HI.
    
    By construction, $H$ is in the sequential specification of the register and includes all completed operations in $H(\alpha)$.
    It remains to show that the linearization respects the real-time order of non-overlapping operations.
    The order between two \textsc{Write} operations respects the real-time order by the construction of $H$, and a \textsc{Read} operation cannot be placed after a \textsc{Write} operation that follows it,
    again by definition of $H$.
    We need only rule out the following two cases.
    
    Write before read: assume a $\textsc{Write}(v)$ operation $W$ returns before a \textsc{Read} operation $R$ begins, but $R$ is placed before $W$ in the linearization. Then, by the construction of $H$, $R$ reads from a \textsc{Write} operation $W'$ that precedes $W$,
    and $R$ is linearized after $W'$ but before $W$.
    But this is impossible:
    in $W$, the writer writes $1$ to $A[v]$ and $0$ to every other index in $A$, and therefore, when $W$ returns, the write of the value $1$ by any operation that precedes $W$ is overwritten (either by 0 or by 1), so $R$ cannot read from $W'$.

    Read before read: assume a \textsc{Read} operation $R_1$ returns before a \textsc{Read} operation $R_2$ begins, but $R_2$ is placed before $R_1$ in the linearization.
    Since $\Read$
    operations are placed after the $\Write$
    operation from which they read,
    the operations $R_1$ and $R_2$
    read from $\Write$ operations $W_1 = \Write(v_1)$ and $W_2 = \Write(v_2)$, respectively,
    such that $W_2$ precedes $W_1$,
    and $R_2$ is placed between $W_2$
    and $W_1$, whereas $R_1$ is placed after $W_1$.
    There are three cases:
    
    {\bf (1) $v_1 = v_2$:}
            when $W_1$ writes 1 to $A[v_1]$ it overwrites $W_2$'s write of 1 to $A[v_2]$.
            Since $R_1$ reads from $W_1$, $W_1$'s write of 1 to $A[v_1]$ precedes $R_1$'s last read of $A[v_1]$.
    But $R_2$ begins after $R_1$ returns,
    and therefore it cannot read from $W_2$,
    as $A[v_1]$ has been overwritten by $W_1$
    before $R_2$ begins.
    
    {\bf (2) $v_1 > v_2$:}
    as above, since $R_1$ reads from $W_1$,
    $W_1$'s write of 1 to $A[v_1]$ precedes $R_1$'s last read of $A[v_1]$. 
    In addition, since $R_2$ reads from $W_2$, $W_2$'s write of 1 to $A[v_2]$ is not overwritten until $R_2$'s last read of $A[v_2]$,
    and in particular,
    until $R_1$ returns (because $R_1$
    returns before $R_2$ begins). However, since $W_2$'s write of 1 to $A[v_2]$ precedes $W_1$'s write of 1 to $A[v_1]$, this implies that $R_1$ reads 1 from $A[v_2]$ during its downward scan, 
    which would cause $R_1$
    to return a value no greater than $v_2$, contradicting our assumption that it returns $v_1$.
    
    {\bf (3) $v_1 < v_2$:} since $R_2$ returns $v_2 > v_1$, $R_2$'s first read of $A[v_1]$ must return 0.
    Recall, however, that
    $W_1$ writes $1$ to $A[v_1]$ before $R_1$ returns (as $R_1$
    reads that $1$),
    and $R_1$ returns before $R_2$ begins.
    Since $R_2$ does not read 1 from $A[v_1]$
    when it scans up for the last time,
    there must be an intervening write $W$
    that occurs after $W_1$ returns and before $R_2$ reads $A[v_1]$
    as it scans up, and $W$ overwrites $W_1$'s write of $1$ to $A[v_1]$. In particular, this implies that $W_1$ returns before 
    $R_2$'s first read of $A[v_2]$, and before returning, $W_1$ overwrites $W_2$'s write of 1 to $A[v_2]$ (as it overwrites all array locations), in contradiction to $R_2$ reading from $W_2$.
\end{proof}

\section{Proof of Algorithm~\ref{alg:wait-free-quiescent} (Wait-Free Quiescent HI Multi-Valued Register from Binary Registers)}
\label{app:swsr wait-free register proofs}

Fix an execution $\alpha$ of Algorithm~\ref{alg:wait-free-quiescent}. 
The \emph{reads from} relation for low-level reads and writes naturally extends also for the $B$ and $\flag$ array.

The next lemma shows that if a $\textsc{TryRead}$ returns $\bot$, then there is a concurrent \textsc{Write} operation that writes to $A$ for the first time \emph{after} the $\textsc{TryRead}$ begins and clears $A$ in an upward direction \emph{before} the $\textsc{TryRead}$ returns.
Intuitively, since there is always at least one index in $A$ that is equal to 1, \textsc{TryRead} returns $\bot$ if the 1-value in $A$ moves in the opposite direction to the direction \textsc{TryRead} reads $A$. This means that \textsc{TryRead} misses
a concurrent \textsc{Write}, which clears $A$ in an upwards direction.

\begin{lemma}
\label{lem:failed-try-read}
    If a $\textsc{TryRead}$ returns $\bot$, then there is a \textsc{Write} operation $W$ that performs Line~\ref{lin:1toA} after the $\textsc{TryRead}$ starts and reaches the loop in Line~\ref{lin:upward-clear} before the $\textsc{TryRead}$ returns.
\end{lemma}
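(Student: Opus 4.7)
The plan is to argue by contradiction. First I would establish the invariant that at every point of the execution, $A$ contains at least one entry equal to 1: this follows by induction on the writer's steps, since each $\textsc{Write}(v)$ sets $A[v]\gets 1$ in Line~\ref{lin:1toA} before clearing any other entry, and never touches $A[v]$ thereafter, so the old 1 is preserved at least until the new 1 at $v$ is installed.

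Now fix a $\textsc{TryRead}$ $\tau$ that returns $\bot$, and let $t_0, t_1, \ldots, t_K, t_\bot$ denote the times when $\tau$ starts, reads $A[1],\ldots,A[K]$, and returns. Suppose for contradiction that no $\textsc{Write}$ satisfies both conditions of the lemma: then every $\textsc{Write}$ overlapping $\tau$ either has its Line~\ref{lin:1toA} at or before $t_0$, or does not reach Line~\ref{lin:upward-clear} by $t_\bot$. Because writes are sequential, at most one overlapping $\textsc{Write}$ can fall into each of these two categories; denote them $W^- = \textsc{Write}(u^-)$ and $W^+ = \textsc{Write}(u^+)$ respectively (possibly $W^- = W^+$, or either absent). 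Thus the set of overlapping writes is a subset of $\{W^-, W^+\}$.

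It then remains to exhibit an index $j$ with $A[j]=1$ at time $t_j$, contradicting that $\tau$ returns $\bot$. If $W^+$ is absent, then $A[u^-]$ (or, if $W^-$ is also absent, the value written by the most recently completed $\textsc{Write}$) is never cleared during $\tau$, since no other $\textsc{Write}$ modifies $A$ and $W^-$ does not clear $A[u^-]$; so $\tau$ reads 1 at that index. If $W^+$ is present, the crucial facts are that $W^+$ never clears $A[u^+]$ and that $W^+$'s downward loop can remove an old 1 at position $u$ only if $u < u^+$. A short case split on the relative order of $u^-$ and $u^+$ and on whether $\tau$'s read of $A[u^-]$ precedes $W^+$'s Line~\ref{lin:1toA} then shows that either $A[u^-]$ remains 1 when $\tau$ reads it, or else $W^+$'s Line~\ref{lin:1toA} has executed before $t_{u^-}$, in which case $A[u^+]=1$ is still present when $\tau$ later reads $A[u^+]$.

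The main obstacle will be managing this case analysis cleanly, since the reader scans left-to-right while the writer's downward clearing proceeds right-to-left, so the two are effectively racing across the scan window. The underlying intuition is that for the unique 1 to ``escape'' the reader's scan it must be displaced from a position ahead of the reader to one behind it, and effecting such a displacement within a single $\textsc{Write}$ requires both setting a new 1 (Line~\ref{lin:1toA}) and entering the upward-clearing loop to remove the old one---precisely the two events the lemma asserts.
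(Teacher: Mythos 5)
Your proposal is correct and follows essentially the same route as the paper's proof: both argue by contradiction that the predecessor write's 1-entry can only be erased by the downward-clearing loop (Line~\ref{lin:downward-clear}) of a \textsc{Write} that performed Line~\ref{lin:1toA} after the scan began, which forces that write's own 1 to sit at a strictly higher index that the upward scan must subsequently encounter. The only difference is bookkeeping: you bound the overlapping writes to at most two via the two-category argument, whereas the paper first shows some overlapping \textsc{Write} performs Line~\ref{lin:1toA} after the \textsc{TryRead} starts and then takes the first such write, which under the negation of the lemma coincides with your $W^+$.
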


\begin{proof}
There is a \textsc{Write} operation that overlaps $\textsc{TryRead}$,
since otherwise, there is always at least one index that is equal to 1 in $A$.
Since only \textsc{Write} operations write to $A$, 
$\textsc{TryRead}$ reads 1 from some index in $A$ and cannot return $\bot$.
    
    Assume all \textsc{Write} operations concurrent to $\textsc{TryRead}$ perform Line~\ref{lin:1toA} before $\textsc{TryRead}$ starts, 
    and let $W$ be the last such overlapping $\textsc{Write}(v)$ operation. 
    Since $W$ is the last write operation that overlaps $\textsc{TryRead}$, 
    the write of 1 to $A[v]$ in Line~\ref{lin:1toA} in $W$ is not overwritten until $\textsc{TryRead}$ returns. 
    Thus, $\textsc{TryRead}$ reads 1 from $A[v]$, contradicting the fact that it returns $\bot$.

    So there is a \textsc{Write} operation that performs Line~\ref{lin:1toA} after $\textsc{TryRead}$ starts, 
    let $W$ be the first such $\textsc{Write}(v_1)$ operation and assume $W$ does not reach the loop in
    Line~\ref{lin:upward-clear} before $\textsc{TryRead}$ returns.
    Let $v_2$ be the input of the $\textsc{Write}$ operation that precedes $W$, 
    or the initial value if there is no such operation. 
    Since $W$ is the first to perform Line~\ref{lin:1toA} after $\textsc{TryRead}$ starts, 
    the value of $A[v_2]$ is 1 when $\textsc{TryRead}$ starts. 
    Since $W$ does not reach the loop in Line~\ref{lin:upward-clear}, 
    it must write 0 to $A[v_2]$ in the loop in Line~\ref{lin:downward-clear} 
    before $\textsc{TryRead}$ first reads $A[v_2]$. 
    Then $v_2 < v_1$, however, this implies that $\textsc{TryRead}$ reads $A[v_1]$ 
    after $W$ performs Line~\ref{lin:1toA} and before $W$ returns, 
    thus it reads 1 from $A[v_1]$, contradicting that $\textsc{TryRead}$ returns $\bot$.
\end{proof}

\begin{figure}[!tb]
    \centering
    \begin{tikzpicture}
        \draw[|-|,solid,thick] (0,0) -- (12,0) node [midway, below, text height = 0.4cm] {$\textsc{Read}$};
        \node at (-.5,0) {$r$};
        \draw[stealth-stealth,thick] (0.5,0) -- ++ (2,0) node [midway, above] {$\textsc{TryRead}$};
        \fill[gray, opacity=0.3] (0.5,-0.1) rectangle ++(2,0.2);
        \draw[thick] (2.5,0.1) -- ++ (0,-0.2);
        \node at (2.9,0.3) {\footnotesize{returns $\bot$}};
        \draw[stealth-stealth,thick] (3.5,0) -- ++ (6.5,0) node [midway, above] {$\textsc{TryRead}$};
        \fill[gray, opacity=0.3] (3.5,-0.1) rectangle ++(6.5,0.2);
        \draw[thick] (10,0.1) -- ++ (0,-0.2);
        \node at (10,-0.3) {\footnotesize{returns $\bot$}};
        \draw[thick] (11.5,0.1) -- ++ (0,-0.2);
        \node at (11.5,0.3) {\footnotesize{reads $B$}};

        \draw[|-|,solid,thick] (0,2) -- ++ (3,0) node [midway, below] {$\textsc{Write}$};
        \draw[thick] (2.5,2.1) -- ++ (0,-0.2);
        \node at (2.2,2.5) {\footnotesize{\shortstack{clears $A$ in \\ upwards direction}}};
        \draw[|-|,solid,thick] (3.5,2) -- ++ (6,0) node [midway, below] {$\textsc{Write}$};
        \node at (-.5,2) {$w$};
        \draw[thick] (4.5,2.1) -- ++ (0,-0.2);
        \node at (4.5,2.3) {\footnotesize{$\exists j , B[j]=1$}};
        \draw[thick] (8,2.1) -- ++ (0,-0.2);
        \node at (8,2.3) {\footnotesize{writes 1 to $A$}};
        
        \draw[dotted, thick] (4.5,2) -- (11.5,0);
        \draw[<->,dashed,thick] (0.5,3) -- ++ (11,0) node [midway, above] {\footnotesize{$\flag[1] = 1$ and $\flag[2] = 0$}};
    \end{tikzpicture}
    \caption{Illustrating the proof of Lemma~\ref{lem:val-not-bot}}
    \label{fig:lem:val-not-bot}
\end{figure}
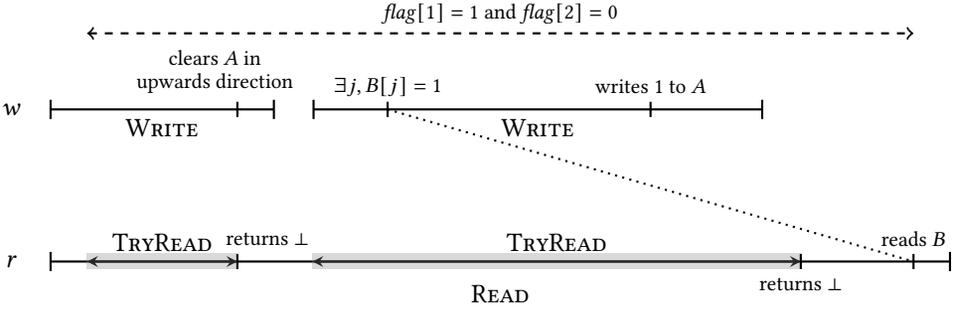

Lemma~\ref{lem:failed-try-read} implies that if two \textsc{TryRead} in Line~\ref{lin:try-read} return $\bot$ in a \textsc{Read} operation $R$, then there is a $\textsc{Write}$ operation, which overlaps the second \textsc{TryRead}, which, if $B$ has no index equal to 1, writes to $B$ before $R$ starts reading $B$. This is illustrated in Figure~\ref{fig:lem:val-not-bot}, and allows us to show that when a \textsc{Read} operation returns, $val \neq \bot$ and it returns a valid value.

\valnotbot*

\begin{proof}
    If one of the two $\textsc{TryRead}$ in Line~\ref{lin:try-read} returns a value different than $\bot$, the lemma holds.
    Otherwise, $R$ performs two $\textsc{TryRead}$ calls that both return $\bot$.
    By Lemma~\ref{lem:failed-try-read}, 
    there is a \textsc{Write} operation $W_1$ that performs Line~\ref{lin:1toA} after the first $\textsc{TryRead}$ starts and reaches the loop in Line~\ref{lin:upward-clear} before the first $\textsc{TryRead}$ returns. 
    Similarly, there is a \textsc{Write} operation $W_2$ that performs Line~\ref{lin:1toA} after the second $\textsc{TryRead}$ starts and reaches the loop in Line~\ref{lin:upward-clear} before the second $\textsc{TryRead}$ returns.
    By the real-time order, $W_1$ precedes $W_2$ and $W_2$ performs Line~\ref{lin:empty-B-cond} after the first $\textsc{TryRead}$ begins and Line~\ref{lin:1toA} before the second $\textsc{TryRead}$ returns.
    
    If the condition in Line~\ref{lin:empty-B-cond} holds, and $W_2$ does not find an index with value 1 in $B$, $W_2$ reads 1 from $\flag[1]$ in Line~\ref{lin:write-1-B-cond} and writes 1 to $B[\lastval]$ in Line~\ref{lin:write-1-B}. Then $W_2$ reads 0 from $\flag[2]$ and 1 from $\flag[1]$ in Line~\ref{lin:writer-clear-B-cond}, and $W_2$ does not write 0 to $B[\lastval]$.
    Since any \textsc{Write} operation that follows $W$ cannot write to $B$ until a write of 0 to $B[\lastval]$ occurs, $R$ reads 1 from $B[\lastval]$ in Line~\ref{lin:reader-read-B} and sets $val = \lastval$.

     If the condition in Line~\ref{lin:empty-B-cond} does not hold, and $W_2$ reads 1 from $B[j]$, then $W_2$ does not perform any write to the $B$ array. This is also true for any \textsc{Write} operation that follows $W$, until a write of 0 to $B[\lastval]$ occurs. 
     Hence, $R$ reads 1 from $B[j]$ in Line~\ref{lin:reader-read-B} and sets $val = j$.
\end{proof}

The next lemma shows that if the writer writes 1 to an index in $B$ in a \textsc{Write} operation, either this write is overwritten by the writer or by an overlapping \textsc{Read} operation.
This allows us to show two properties:
(a) if the reader returns a value read from $B$, it was written by an overlapping \textsc{Write} operation (Lemma~\ref{lem:reads-B-overlaps}), and 
(b) if a reachable configuration is quiescent, all indices in $B$ are equal to 0 in it (Lemma~\ref{lem:reg-quiescent-HI}). The first property helps to prove that the algorithm is linearizable, and the second property shows that the algorithm is quiescent HI.

\begin{restatable}{lemma}{overwriteB}
\label{lem:overwriteB}
    Consider a write of 1 to $B[j]$, $1 \leq j \leq K$, in Line~\ref{lin:write-1-B} by a \textsc{Write} operation W. Then, the write to $B[j]$ is overwritten with a write of $0$ to $B[j]$ by $W$ or by a \textsc{Read} operation $R$ that overlaps $W$, before $W$ and the last \textsc{Read} operation that overlaps $W$ returns.    
\end{restatable}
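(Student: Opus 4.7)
The plan is to case-split on the outcome of the writer's check in Line~\ref{lin:writer-clear-B-cond}. Introduce the following notation: let $t_0$ be the time at which $W$ writes $1$ to $B[j]$ in Line~\ref{lin:write-1-B}, and let $t_1 < t_2$ be the times at which $W$ performs its reads of $\flag[2]$ and $\flag[1]$, respectively, inside Line~\ref{lin:writer-clear-B-cond}. In the easy case the check succeeds and $W$ itself writes $0$ to $B[j]$ in Line~\ref{lin:writer-clear-B}, which it does before $W$ returns, giving the conclusion directly.

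The substantive case is when $W$ observes $\flag[2]=0$ at $t_1$ and $\flag[1]=1$ at $t_2$. The plan is to exhibit a single $\Read$ operation $R$ that overlaps $W$ and whose clear-$B$ loop in Line~\ref{lin:reader-clear-B-cond} performs the required overwrite. Since the (single) reader writes $1$ to $\flag[1]$ only in Line~\ref{lin:raise-flag-1} and resets it only in Line~\ref{lin:clear-flag1}, the observation $\flag[1]=1$ at $t_2$ pinpoints a $\Read$ operation $R$ that has executed Line~\ref{lin:raise-flag-1} but not yet Line~\ref{lin:clear-flag1}. In particular $R$ is pending at $t_2$, and so is $W$, so the two operations overlap.

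The heart of the argument will be to show that $R$ has not yet executed Line~\ref{lin:raise-flag-2} by time $t_1$. The reasoning I will use is that $\flag[2]$ is raised only in Line~\ref{lin:raise-flag-2} and cleared only in Line~\ref{lin:clear-flag2}, and in the reader's code Line~\ref{lin:clear-flag2} comes after Line~\ref{lin:clear-flag1}; since $R$ has not executed Line~\ref{lin:clear-flag1} by $t_2>t_1$, it has not executed Line~\ref{lin:clear-flag2} by $t_1$ either, so if $R$ had already executed Line~\ref{lin:raise-flag-2} by $t_1$ then $\flag[2]$ would still be $1$ at $t_1$, contradicting $W$'s read. Hence $R$ executes Line~\ref{lin:raise-flag-2} at some time after $t_1$, and then, following the reader's code, reaches the clear-$B$ loop in Line~\ref{lin:reader-clear-B-cond}; in particular it writes $0$ to $B[j]$ at some time $t'>t_1>t_0$, overwriting $W$'s $1$.

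To close out, I will observe that $t'$ is strictly before $R$ returns, and since the system has a single reader, any reader overlapping $W$ that comes after $R$ can only begin once $R$ has returned; therefore the last reader overlapping $W$ returns no earlier than $R$ does. This places the overwrite at $t'$ before the return of both $W$ and the last reader overlapping $W$, as required. The only slightly delicate step is the $\flag[2]$ monotonicity argument used to locate $R$'s execution of Line~\ref{lin:raise-flag-2} after $t_1$; everything else follows directly by inspection of the code.
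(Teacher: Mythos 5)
Your proof is correct and follows essentially the same route as the paper's: the same case split on the writer's check in Line~\ref{lin:writer-clear-B-cond}, the same identification of $R$ as the read whose raise of $\flag[1]$ the writer reads, the same contradiction (via the ordering of Lines~\ref{lin:clear-flag1} and~\ref{lin:clear-flag2}) showing that $R$ raises $\flag[2]$ only after the writer's read of $\flag[2]$, and the same conclusion via $R$'s clear-$B$ loop. The only cosmetic difference is that you run the contradiction by arguing $\flag[2]$ would still be $1$ at $t_1$, whereas the paper argues $R$ would have already cleared $\flag[1]$ by $t_2$; these are the same argument.
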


\begin{proof}
    If the condition in Line~\ref{lin:writer-clear-B-cond} holds, $W$ writes 0 to $B[j]$ in Line~\ref{lin:writer-clear-B} after the write of 1 to $B[j]$ in Line~\ref{lin:write-1-B} and the lemma holds. 
    Otherwise, $W$ reads 0 from $\flag[2]$ and then 1 from $\flag[1]$ in Line~\ref{lin:writer-clear-B-cond}. Let $R$ be the \textsc{Read} operation that writes the 1 to $\flag[1]$ in Line~\ref{lin:raise-flag-1} that the read by $W$ in Line~\ref{lin:writer-clear-B-cond} reads from (see Figure~\ref{fig:lem:overwriteB}).

    Assume $R$ performs Line~\ref{lin:raise-flag-2} and writes 1 to $\flag[2]$ before $W$ reads $\flag[2]$ in Line~\ref{lin:writer-clear-B-cond}. Then, since $W$ reads 0 from $\flag[2]$, $R$ performs Line~\ref{lin:clear-flag2} before $W$
    performs Line~\ref{lin:writer-clear-B-cond}. This implies that $R$ performs Line~\ref{lin:clear-flag1} and write 0 to $\flag[1]$ before $W$ 
    reads $\flag[1]$ in Line~\ref{lin:writer-clear-B-cond}, however, this contradicts that the read in Line~\ref{lin:writer-clear-B-cond} reads the write of 1 in Line~\ref{lin:raise-flag-1} by $R$.
    So $R$ overlaps $W$ and performs Line~\ref{lin:raise-flag-2} after the read of $\flag[2]$ by $W$ in Line~\ref{lin:writer-clear-B-cond}, which happens after the write of 1 to $B[j]$ in Line~\ref{lin:write-1-B}.
    
    By Line~\ref{lin:empty-B-cond}, any \textsc{Write} operation that follows $W$, writes to an index in $B$ only after the write of $W$ to $B[j]$ is overwritten with the value 0.
    Therefore, when $R$ reaches the for loop in Line~\ref{lin:reader-clear-B-cond} before returning, it must overwrite $B[j]$ with the value 0. 
\end{proof}

\begin{figure}[!tb]
    \centering
    \begin{subfigure}{\textwidth}
    \begin{tikzpicture}
        \draw[|-|,solid,thick] (0,0) -- (12,0) node [midway, below, text height = 0.5cm] {$\textsc{Read}$};
        \node at (-.5,0) {$r$};
        \draw[thick] (2,0.1) -- ++ (0,-0.2);
        \node at (2,0.3) {\footnotesize{writes 1 to $\flag[2]$}};
        \draw[thick] (4.5,0.1) -- ++ (0,-0.2);
        \node at (4.5,0.3) {\footnotesize{writes 0 to $\flag[1]$}};
        \draw[thick] (7,0.1) -- ++ (0,-0.2);
        \node at (7.5,-0.3) {\footnotesize{writes 0 to $\flag[2]$}};
        \draw[|-|,solid,thick] (0,1.3) -- ++ (12,0) node [midway, below, text height = 0.4cm] {$\textsc{Write}$};
        \node at (-.5,1.3) {$w$};
        \draw[thick] (1,1.4) -- ++ (0,-0.2);
        \node at (1,1.6) {\footnotesize{writes 1 to $B[j]$}};
        \draw[thick] (8,1.4) -- ++ (0,-0.2);
        \node at (8,1.6) {\footnotesize{reads $\flag[2] = 0$}};
        \draw[thick] (10.5,1.4) -- ++ (0,-0.2);
        \node at (10.5,1.6) {\footnotesize{reads $\flag[1] = 1$}};
        \draw[dotted, thick] (8,1.3) -- (7.,0);
        \draw[dotted, thick, red] (10.5,1.3) -- (4.5,0);
    \end{tikzpicture}
    \caption{$W$ reads $\flag[2]$ before $R$ writes 0 to $\flag[2]$, otherwise, we get a contradiction.}
    \label{fig:lem:overwriteBa}
    \vspace{2mm}
    \end{subfigure}
    \begin{subfigure}{\textwidth}
    \begin{tikzpicture}
        \draw[|-|,solid,thick] (0,0) -- (12,0) node [midway, below, text height = 0.5cm] {$\textsc{Read}$};
        \node at (-.5,0) {$r$};
        \draw[thick] (1,0.1) -- ++ (0,-0.2);
        \node at (1,0.3) {\footnotesize{writes 1 to $\flag[1]$}};
        \draw[thick] (6,0.1) -- ++ (0,-0.2);
        \node at (5.5,0.3) {\footnotesize{writes 1 to $\flag[1]$}};
        \draw[thick] (8,0.1) -- ++ (0,-0.2);
        \node at (8,-0.3) {\footnotesize{writes 1 to $\flag[2]$}};
        \draw[thick] (11,0.1) -- ++ (0,-0.2);
        \node at (11,0.3) {\footnotesize{writes 0 to $B[j]$}};
        \draw[|-|,solid,thick] (0,1.3) -- ++ (12,0) node [midway, below, text height = 0.4cm] {$\textsc{Write}$};
        \node at (-.5,1.3) {$w$};
        \draw[thick] (2,1.4) -- ++ (0,-0.2);
        \node at (2,1.6) {\footnotesize{writes 1 to $B[j]$}};
        \draw[thick] (4.5,1.4) -- ++ (0,-0.2);
        \node at (4.5,1.6) {\footnotesize{reads $\flag[2] = 0$}};
        \draw[thick] (7,1.4) -- ++ (0,-0.2);
        \node at (7,1.6) {\footnotesize{reads $\flag[1] = 1$}};
        \draw[dotted, thick] (7,1.3) -- (6,0);
        \draw[dotted, thick] (2,1.3) -- (11,0);
    \end{tikzpicture}
    \caption{$R$ writes 0 to $B[j]$ if $W$ does no do so.}
    \label{fig:lem:overwriteBb}
    \end{subfigure}
    \caption{Illustrating the proof of Lemma~\ref{lem:overwriteB}}
    \label{fig:lem:overwriteB}
\end{figure}
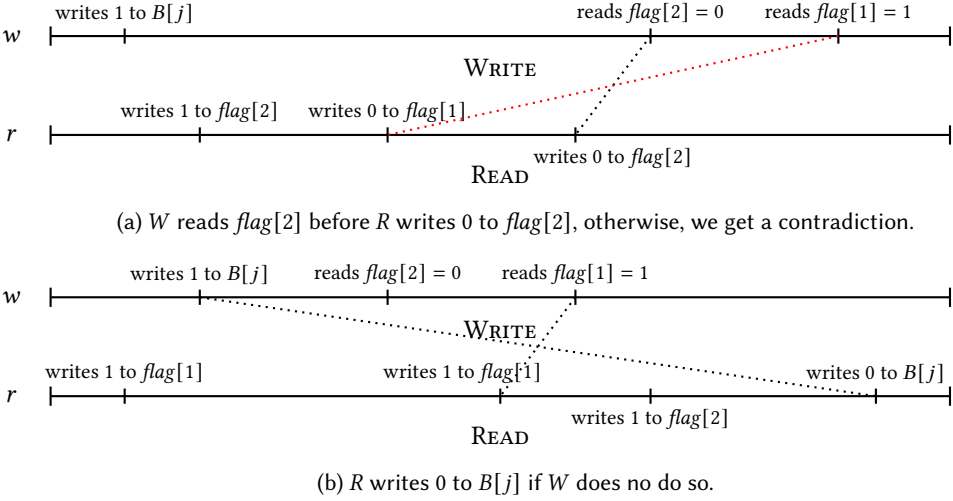

\readsboverlaps*


\begin{proof}
If $W$ does not overlap $R$, then $W$ precedes $R$. 
This implies that $W$ cannot overwrite its write of 1 to $B[j]$ with 0. 
Since there is a single reader, this contradicts Lemma~\ref{lem:overwriteB}, 
as the write of 1 to $B[j]$ by $W$ is not overwritten by $W$ 
or before the last \textsc{Read} operation that overlaps $W$ returns.
\end{proof}

\begin{lemma}
\label{lem:reg-quiescent-HI}
    If a configuration $C$ reachable by an execution of Algorithm~\ref{alg:wait-free-quiescent} is quiescent, then the value of $B[j]$ is 0 in $\mem{C}$ for all $1\leq j \leq K$.
\end{lemma}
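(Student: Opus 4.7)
The plan is to reduce the lemma to Lemma~\ref{lem:overwriteB} by tracking the last write to each entry of $B$ in the execution $\alpha$ leading to $C$. The only operations that write to $B$ are the writer in Line~\ref{lin:write-1-B} (writing the value $1$) and the writer/reader in Lines~\ref{lin:writer-clear-B} and~\ref{lin:reader-clear-B-cond} (writing the value $0$). Since the initial value of every cell $B[j]$ is $0$, if no write to $B[j]$ has occurred in $\alpha$, then $B[j]=0$ in $\mem{C}$ and we are done.

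Otherwise, fix an index $j \in \{1,\ldots,K\}$, and consider the last low-level write to $B[j]$ in $\alpha$. I would argue by contradiction: suppose this last write writes the value $1$. Then it must be performed in Line~\ref{lin:write-1-B} by some $\textsc{Write}$ operation $W$. Since $C$ is quiescent, $W$ has returned by $C$, and every $\textsc{Read}$ operation that overlaps $W$ has also returned by $C$ (there is only a single reader, so ``every overlapping $\textsc{Read}$'' is a well-defined finite set of completed operations).

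By Lemma~\ref{lem:overwriteB}, this write of $1$ to $B[j]$ is overwritten by a write of $0$ to $B[j]$---performed either by $W$ itself in Line~\ref{lin:writer-clear-B} or by an overlapping $\textsc{Read}$ in Line~\ref{lin:reader-clear-B-cond}---before both $W$ and the last $\textsc{Read}$ overlapping $W$ return. Since both events occur before $C$, the write of $0$ to $B[j]$ also occurs before $C$ and strictly after the alleged last write of $1$, contradicting our choice of the latter as the last write to $B[j]$. Hence the last write to $B[j]$ in $\alpha$ is a write of $0$, and $B[j]=0$ in $\mem{C}$ for every $j$.

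The proof is essentially routine once Lemma~\ref{lem:overwriteB} is in hand; the only subtle point is observing that quiescence of $C$ ensures both $W$ \emph{and} the last overlapping $\textsc{Read}$ have returned by $C$, which is exactly the temporal guarantee provided by Lemma~\ref{lem:overwriteB}.
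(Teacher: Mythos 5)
Your proof is correct and follows essentially the same route as the paper's: both arguments observe that every write of $1$ to $B[j]$ occurs in Line~\ref{lin:write-1-B}, invoke Lemma~\ref{lem:overwriteB} together with the quiescence of $C$ (which guarantees that $W$ and all overlapping $\textsc{Read}$s have returned) to conclude that each such write is overwritten with $0$ before $C$ is reached. Your framing via the last write to $B[j]$ and a contradiction is just a slightly more explicit packaging of the same argument.
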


\begin{proof}
    At initialization, $B[j] = 0$ for all $1\leq j \leq K$. 
    A write of 1 to $B[j]$ in the execution leading to $C$ happens in Line~\ref{lin:write-1-B}  by a \textsc{Write} operation $W$.
    Since $C$ is quiescent, $W$ and any \textsc{Read} operation that overlaps $W$ returns by the time the execution reaches $C$.
    By Lemma~\ref{lem:overwriteB}, the write of 1 to $B[j]$ is overwritten with value 0 by the time the execution reaches configuration $C$.
\end{proof}

\waitfreeqhireg*

\begin{proof}[Proof of history independence]
    Consider a finite execution $\alpha$ that ends with a quiescent configuration and let $v = \seqstate{\alpha}$.
    Since only the reader writes to $\flag[1]$ and $\flag[2]$, and before a \textsc{Read} operation returns it writes 0 to $\flag[1]$ in Line~\ref{lin:clear-flag1} and 0 $\flag[2]$ in Line~\ref{lin:clear-flag2}, $\flag[1]$ and $\flag[2]$ are both equal 0 in $\mem{\alpha}$.
    Also, $A[v] = 1$ and for any other index $j\neq v$, $A[v] = 0$ in $\mem{\alpha}$.

    At initialization, $B[j] = 0$ for all $1\leq j \leq K$. 
    A write of 1 to $B[j]$ in the execution happens in Line~\ref{lin:write-1-B}  by a \textsc{Write} operation $W$.
    Since the execution ends in a quiescent configuration, $W$ and any \textsc{Read} operation that overlaps $W$ return by the time the execution ends. Hence,
    by Lemma~\ref{lem:overwriteB}, the write of 1 to $B[j]$ is overwritten with value 0 by the time the execution ends. This implies that $B[j] = 0$ in $\mem{\alpha}$, for all $1\leq j \leq K$.

     By Lemma~\ref{lem:reg-quiescent-HI} ,$B[j] = 0$ in $\mem{\alpha}$, for all $1\leq j \leq K$.
     
    Since this holds for any execution $\alpha'$ that ends in a quiescent configuration such that $\seqstate{\alpha'} = v$, this concludes the theorem.
\end{proof}

\section{Impossibility of Wait-Free, State-Quiescent HI  Queue Implementation}
\label{app:queue-impossibility}

Consider a wait-free state-quiescent implementation of a queue 
from $m \geq 1$ base objects $\obj_1,\ldots, \obj_m$.
For each base object $\obj_i$,
let $Q_i$ be the state space of $\obj_i$;
we assume that $|Q_i| \leq t$.
This is the only assumption we make about the base objects.
Since each base object has at most $t$ states and there are $t+1$ representative states, for every object $\obj_\ell$, $1\leq \ell \leq m$, there are two representative states $q_i \neq q_j$ such that $\can{q_i}[\ell] = \can{q_j}[\ell]$.

We consider executions with two processes:
\begin{itemize}
    \item A ``reader'' process $r$, which executes a single $\peek$ operation, and
    \item A ``changer'' process $c$, which repeatedly invokes $\enq$ and $\deq$ operations. 
\end{itemize}

The executions that we construct have the following form:
\begin{equation*}
    \alpha_{{i_0},{i_1}\ldots,{i_k}} = S(i_0,i_1), r_1,
    S(i_1,i_2), r_2, \ldots, S(i_{k-1},i_k), r_k
\end{equation*}
where $i_0 = 0$ and $S(i_j, i_{j+1})$ is a sequence of at most two operations
executed by the changer process during which the reader process takes no steps,
and $r_i$ is a single step by the reader process.
The reader
executes a single $\peek$
operation that is invoked immediately
after the first sequence of
operations complete, and we will argue that the reader never returns.

In any linearization of $\alpha_{{i_0},{i_1}\ldots,{i_k}}$,
the operations' sequence $S(i_0,i_1),\ldots,S(i_{k-1},i_k)$
must be linearized in order,
as they do not overlap.
Furthermore, $\peek$
operation carried out by the reader process is not state-changing.
Thus, since each sequence $S(i_j, i_{j+1})$ takes the object from state $q_{i_j}$ to state $q_{i_{j+1}}$,
the linearization of $\alpha_{{i_0},{i_1}\ldots,{i_k}}$ ends
with the object in state $q_{i_k}$,
and we abuse the terminology by saying that the execution ``ends at state $q_{i_k}$''.

We say that execution $\alpha_{{i_0},{i_1}\ldots,{i_k}}$
\emph{avoids} index $i$, $0\leq i \leq t$,
if $i_j \neq i$ for every $1\leq j \leq k$.

\begin{lemma}
    If an execution $\alpha_{{i_0},{i_1}\ldots,{i_k}}$ avoids index $i$, $0\leq i \leq t$,
    then the $\peek$ operation cannot return response
    value $r_{i}$ at any point in $\alpha_{{i_0},{i_1}\ldots,{i_k}}$.
    \label{lemma:exclusive_returns_app}
\end{lemma}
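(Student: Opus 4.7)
The plan is to mirror the argument of Lemma~\ref{lemma:exclusive_returns}, but with an extra case to handle the fact that a sequence $S(i_{j-1},i_j)$ may now consist of \emph{two} operations, so the $\peek$ can be linearized in the middle of such a sequence as well as between sequences. Fix any execution $\alpha_{i_0,\ldots,i_k}$ that avoids index $i$ and any linearization $h$ of it. The changer's operations come from the concatenation $S(i_0,i_1),\ldots,S(i_{k-1},i_k)$, which are all issued by one process and are pairwise non-overlapping; so $h$ must list them in this real-time order. Moreover, because $\peek$ is invoked only after $S(i_0,i_1)=\enq(i_1)$ completes, $h$ must place the $\peek$ somewhere after this first $\enq(i_1)$.

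Next I would enumerate the possible positions of the $\peek$ in $h$ and read off its response. If $\peek$ is placed immediately after a completed sequence $S(i_{j-1},i_j)$ for some $j\geq 1$, then by construction of $S$ the object is in state $q_{i_j}$, so $\peek$ returns $r_{i_j}$. Otherwise, $\peek$ is placed strictly inside some two-operation sequence $S(i_{j-1},i_j)$, i.e.\ between its $\enq(i_j)$ and its $\deq()$; this can only occur when $i_{j-1},i_j\neq 0$, hence $j\geq 2$ (since $i_0=0$). At that point the queue holds $\{i_{j-1},i_j\}$ with $i_{j-1}$ at the front, so $\peek$ returns $r_{i_{j-1}}$.

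I would then close the argument by observing that in both cases the response is $r_{i_\ell}$ for some $\ell\in\{1,\dots,k\}$. Because $\alpha_{i_0,\ldots,i_k}$ avoids $i$ we have $i_\ell\neq i$, so the $\peek$'s response is not $r_i$. The case where the $\peek$ does not return in $\alpha_{i_0,\ldots,i_k}$ at all is vacuous for the claim.

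The only spot I would be careful about is the ``middle of a sequence'' case: I need to justify that the front of the queue in state $\{i_{j-1},i_j\}$ is $i_{j-1}$ (not $i_j$), which follows because $S$ was applied from the representative state $q_{i_{j-1}}=\{i_{j-1}\}$ and $\enq$ appends at the tail. Beyond this small check, the proof is a routine adaptation of Lemma~\ref{lemma:exclusive_returns}, so I do not anticipate a real obstacle; the main conceptual point is simply that the two-operation sequences do not give the adversary any extra response value, because the front of the queue is preserved by an $\enq$.
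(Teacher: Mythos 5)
Your proposal is correct and follows essentially the same route as the paper's proof: linearize the changer's sequences in real-time order, place the $\peek$ either at a sequence boundary (response $r_{i_j}$, $j\geq 1$) or inside a two-operation sequence between its $\enq$ and $\deq$ (response $r_{i_{j-1}}$, with $j-1\geq 1$ since such sequences require $i_{j-1}\neq 0$), and conclude that every possible response is $r_{i_\ell}$ for some $\ell\in\{1,\dots,k\}$ with $i_\ell\neq i$. Your explicit observation that the mid-sequence case forces $j\geq 2$ is in fact a slightly more careful handling of the $i_0=0$ boundary than the paper's own wording.
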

\begin{proof}
    Fix an execution $\alpha_{{i_0},{i_1}\ldots,{i_k}}$
    that avoids index $i$,
    and recall that in any linearization,
    the operations' sequence
    $S(i_0,i_1),\ldots,S(i_{k-1},i_k)$
    must be linearized in-order,
    as they are non-overlapping operations by the same process.
    The $\peek$ operation cannot be linearized before
    the operations sequence $S(i_0,i_1)$,
    because it is only invoked after the last operation in the sequence completes.
    Thus, the $\peek$ operation 
    either does not return in $\alpha_{{i_0},{i_1}\ldots,{i_k}}$,
    or it is linearized after some operation in sequence
    $S(i_j, i_{j+1})$ where $j \geq 0$.
    In the latter case, we show that the value returned by the $\peek$ is either $r_{i_j}$ or $r_{i_{j+1}}$ and as $\alpha_{{i_0},{i_1}\ldots,{i_k}}$ avoids $i$, $r_{i_j},r_{i_{j+1}} \neq r_{i}$.
    If $S(i_j, i_{j+1})$ contains a single operation, 
    which takes the object from state $q_{i_j}$ to state $q_{i_{j+1}}$, the $\peek$ operation returns response $r_{i_{j+1}}$.
    If $S(i_j, i_{j+1})$ contains two operations, since  the $\peek$ operation returns response $r_{i_j}$ from the state resulting from applying $\enq(i_{j+1})$ from state $q_{i_j}$, if the $\peek$ operation is linearized after the {\enq} operation, it returns response $r_{i_j}$. Otherwise, if it is linearized after the $\deq$ operation, it returns response $r_{i_{j+1}}$.
    Therefore in $\alpha_{{i_0},{i_1}\ldots,{i_k}}$ the
    $\peek$ operation either does not return,
    or returns a value different than $r_{i}$.
\end{proof}

Using the fact that each base object has at most $t$
possible states, we can construct 
$t+1$ arbitrarily long executions
that the reader cannot distinguish from one another,
such that each response $r_i$ is avoided by one of the $t$ executions.
The construction is inductive, with the step extending the executions
captured by the following lemma:
\begin{lemma}
    Fix $k \geq 0$,
    and suppose we are given $t$
    executions of the form
    $\alpha_i = \alpha_{{i_0},{i_1}\ldots,{i_k}}$
    for $i = 0,\ldots,t$,
    such that $\alpha_1 \localind{r} \ldots \localind{r} \alpha_t$,
    and each $\alpha_i$ avoids index $i$.
    Then we can extend
    each $\alpha_i$
    into an execution 
    $\alpha_i' = \alpha_{{i_0},{i_1}\ldots,{i_{k+1}}}$
    that also avoids $i$,
    such that $\alpha_1' \localind{r} \ldots \localind{r} \alpha_t'$.
    \label{lemma:exclusive_executions_app}
\end{lemma}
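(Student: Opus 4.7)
The plan is to mirror the argument for Lemma~\ref{lemma:exclusive_executions} from the class $\calC_t$ setting, adapting the pigeonhole and the state-transition gadget to the queue's slightly weaker reachability structure. First I would observe that each $\alpha_i$ ends in a state-quiescent configuration: the changer has no pending operation (it has just completed the last full sequence $S(i_{k-1}^{i}, i_k^{i})$), and the reader's only pending operation is its single $\peek$, which is read-only. By state-quiescent HI applied to the linearization function $h$, the memory at the end of $\alpha_i$ therefore equals the canonical representation $\can{q_{i_k^{i}}}$.

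Next, since $\alpha_1 \localind{r} \cdots \localind{r} \alpha_t$ (and trivially $\alpha_0$ too, if we are indexing over all $t+1$ executions), the reader is in the same local state at the end of every $\alpha_i$, so its next step accesses the same base object $\obj_\ell$ in all of them. Now I would apply the pigeonhole principle: since $|Q_\ell| \leq t$ but there are $t+1$ representative states $q_0, \ldots, q_t$, there exist distinct indices $a, b \in \{0,\ldots,t\}$ with $\can{q_a}[\ell] = \can{q_b}[\ell]$. For each $i \in \{0, \ldots, t\}$, I then choose $i_{k+1}^{i} \in \{a,b\}$ so that $i_{k+1}^{i} \neq i$; this is always possible because $a \neq b$, so at least one of them differs from $i$.

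I then extend each $\alpha_i$ by scheduling the changer to run the entire sequence $S(i_k^{i}, i_{k+1}^{i})$ (one or two operations, depending on whether an endpoint is $0$), followed by a single step of the reader. This yields an execution of the form $\alpha_{i_0^{i}, \ldots, i_{k+1}^{i}}$. Avoidance of index $i$ is preserved by construction of $i_{k+1}^{i}$. Moreover, after the sequence completes the changer is again idle, so the new configuration is state-quiescent, and state-quiescent HI forces the memory to equal $\can{q_{i_{k+1}^{i}}}$. Since $i_{k+1}^{i} \in \{a,b\}$ for every $i$ and $\can{q_a}[\ell] = \can{q_b}[\ell]$, the value of $\obj_\ell$ read by the reader is the same in all $t+1$ extended executions, so the reader transitions into the same local state in each, preserving indistinguishability.

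The main subtlety, and the one point I would emphasize in the write-up, is that it is essential to let the changer complete the \emph{entire} sequence $S(i_k^{i}, i_{k+1}^{i})$ before the reader takes its step: when $S$ consists of two operations, the intermediate configuration has a pending $\Changeop$ and is not state-quiescent, so state-quiescent HI gives no guarantee about the memory there, and we could not use the canonical-representation pigeonhole to force indistinguishability. Scheduling the reader strictly after the full sequence circumvents this, and is exactly why it was convenient to define the gadget $S(\cdot, \cdot)$ to move between representative states without passing through any \emph{other} representative in terms of the response of $\peek$.
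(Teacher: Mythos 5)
Your proof is correct and follows essentially the same route as the paper's: pigeonhole on the $t+1$ representative canonical values of the base object $\obj_\ell$ accessed in the reader's common next step, choose $i_{k+1}^i\in\{a,b\}$ avoiding $i$, and append the full gadget $S(i_k^i,i_{k+1}^i)$ before the single reader step. Your explicit remark that the reader must be scheduled only after the entire (possibly two-operation) sequence completes—because the intermediate configuration is not state-quiescent—is a point the paper leaves implicit in the definition of $\alpha_{i_0,\ldots,i_k}$, but it is the same argument.
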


\begin{proof}
    Let $\alpha_i = \alpha_{{i_0},{i_1}\ldots,{i_k}}$
    for $i = 0,\ldots,t$ be executions
    satisfying the conditions of the lemma,
    and let us construct extensions $\alpha_i' = \alpha_{{i_0},{i_1}\ldots,{i_{k+1}}}$ for each $i = 0,\ldots,t$.
    By assumption, the reader is in the same local state
    at the end of all executions $\alpha_i$ for $0 \leq i \leq t$,
    and so its next step is the same in all of them.
    Our goal is to 
    choose a next index $i_{k+1}$
    for each $i = 0,\ldots,t$,
    and extend each $\alpha_i = \alpha_{{i_0},{i_1}\ldots,{i_k}}$
    into $\alpha_i' = \alpha_{{i_0},{i_1}\ldots,{i_{k+1}}}$
    by appending the operations sequence $S(i_k, i_{k+1})$,
    followed by a single step of the reader.
    We must do so in a way that both continues to avoid
    $i$, and maintains indistinguishability to the reader.

    Let $\obj_{\ell}$
    be the base object
    accessed by the reader in its next step
    in all $t$ executions.
    Because $\obj_{\ell}$
    has only $t$
    possible memory states
    and there are $t+1$ states $q_0, \ldots, q_{t}$,
    there must exist two distinct states $q_j, q_{j'}$, $j \neq j'$,
    such that
    $\can{q_j}[\ell] = \can{q_{j'}}[\ell]$.
    For every $0 \leq i \leq t$,
    there is an index $i_{k+1} \in \set{j,j'}$
    such that $i_{k+1} \neq i$:
    if $i \notin \set{j, j'}$
    then we choose between $j$ and $j'$
    arbitrarily, and if $i = j$ or $i = j'$
    then we choose $i_{k+1} = j'$
    or $i_{k+1} = j$, respectively.
    
    We extend each $\alpha_i = \alpha_{{i_0},{i_1}\ldots,{i_k}}$
    into $\alpha_i' = \alpha_{{i_0},{i_1}\ldots,{i_{k+1}}}$
    by appending
    a complete $S(i_k, i_{k+1})$
    operation sequence, followed by a single
    step of the reader.
    The resulting execution $\alpha_i'$
    still avoids $i$,
    as we had for every $1\leq j \leq k$, $i_j \neq i$,
    and the new index also satisfies $i_{k+1} \neq i$.
    Moreover,
    when the reader takes its step,
    it observes the same state for the base
    object $\obj_{\ell}$ that it accesses in all executions,
    as all of them end in either state $q_j$
    or state $q_{j'}$,
    and $\can{q_j}[\ell] = \can{q_{j'}}[\ell]$.
    Therefore, the reader cannot distinguish
    the new executions from one another.
\end{proof}

We can force the $\peek$ operation to never return, to obtain the main result of this section:

\queue*

\begin{proof}
    We construct $t+1$
    arbitrarily long executions, in each of which a $\peek$
    operation takes infinitely many steps but never returns.
    The construction uses Lemma~\ref{lemma:exclusive_executions_app}
    inductively:
    we begin with empty executions,
    $\alpha_0^0 = \ldots = \alpha_{t}^0 = \alpha_{i_0}$.
    These executions trivially satisfy the conditions of Lemma~\ref{lemma:exclusive_executions},
    as each $\alpha_i^0$ avoids $i$ (technically,
    it avoids \emph{all} indices $j$),
    and furthermore,
    since the reader has yet to take a single step
    in any of them,
    so it is in the same local state in all executions.
    We repeatedly apply Lemma~\ref{lemma:exclusive_executions}
    to extend these executions,
    obtaining for each $k \geq 0$
    a collection of
    $t+1$ executions
    $\alpha_i^k = \alpha_{i_0,i_1,\ldots,i_k}$, $0\leq i \leq t$,
    such that each $\alpha_i^k$ avoids $i$,
    and the reader cannot distinguish the executions from one another.

    Suppose for the sake of contradiction
    that the reader returns a value
    $r$ at some point in $\alpha_i^k$.
    Then it returns the same value $r$
    at some point in each execution $\alpha_i^k$
    for each $i = 0,\ldots,t$,
    as it cannot distinguish these executions,
    and its local state encodes
    all the steps it has taken,
    including whether it has 
    returned a value,
    and if so, what value.
    By Lemma~\ref{lemma:exclusive_returns_app},
    for each $i = 0,\ldots,t$,
    since execution $\alpha_i^k$
    avoids $i$,
    we must have $r \neq r_i$.
    Since the response space $R = \set{r_0,\ldots, r_{t}}$, this means that there is no value
    that the reader can return, a contradiction.

    Continuing on in this way,
    we can construct arbitrarily long executions,
    with the reader taking more and more steps
    (since in an execution $\alpha_{i_0,i_1,\ldots,i_k}$
    the reader takes $k$ steps)
    but never returning.
    This contradicts the wait-freedom of the implementation.    
\end{proof}

\section{A History-Independent Universal Implementation}

\subsection{Properties of Algorithm~\ref{alg:universal-construct-R-LLSC}}
\label{app:universal-impl-correct}

\clearedinv*

\begin{proof}
    The value of $announce[\pi(op)]$ can change in Line~\ref{lin:announce-inv}, Line~\ref{lin:replace-inv-rsp} or Line~\ref{lin:clear-my-announce}. The first step of operation $op$ is to write the input operation descriptor in $announce[\pi(op)]$. Right before $op$ returns, it writes $\bot$ to $announc[i]$ in Line~\ref{lin:clear-my-announce}. 
    By the condition in Line~\ref{lin:start-loop}, Line~\ref{lin:wait-til1} and Line~\ref{lin:wait-til2}, $op$ exits the while loop in Lines~\ref{lin:start-loop}--\ref{lin:end-loop} only if it reads a response value from $R$ from
    $announce[\pi(op)]$. Since a read in Line~\ref{lin:start-loop}, Line~\ref{lin:wait-til1} or Line~\ref{lin:wait-til2} follows Line~\ref{lin:announce-inv}, the value $op$ writes to $announce[\pi(op)]$ in Line~\ref{lin:announce-inv} is overwritten by another operation.
    This can only happen in Line~\ref{lin:replace-inv-rsp}, where a process writes a value from $R$ to $announce[\pi(op)]$.
\end{proof}

\universalvaluepairs*

\begin{proof}
    The value of $head$ can be changed only due to a successful {\SC} in Line~\ref{lin:sc-headA1}, Line~\ref{lin:sc-headB}, or Line~\ref{lin:sc-headA2}.
    Consider a successful ${\SC}(head, \tup{q_2,r_2})$, performed in one of these lines by operation $op$, which replaces the value of $head$ from $\tup{q_1,r_1}$ to $\tup{q_2,r_2}$. That is, the previous write to $head$ before this {\SC} writes the value $\tup{q_1,r_1}$.
    Before a successful {\SC} by process $p_i$, there must be a previous {\LL} by the same process that returns the current value replaced in the {\SC}.
    If the successful {\SC} happens in Line~\ref{lin:sc-headA1} or Line~\ref{lin:sc-headB}, then the preceding {\LL} happens in Line~\refl{lin:ll-head1} and it returns the value $\tup{s_1,r_1}$. 
    If the {\SC} happens in Line~\ref{lin:sc-headB}, then the condition in Line~\ref{lin:head-state-check} holds, implying that $r_1 = \bot$ and by the code, $r_2 \neq \bot$.
    If the {\SC} happens in Line~\ref{lin:sc-headA1}, then the condition in Line~\ref{lin:head-state-check} does not hold, implying that $r_1 \neq \bot$.
    By the code, $s_1 = s_2$ and $r_2 = \bot$.

     If the {\SC} happens in Line~\ref{lin:sc-headA2}, the preceding {\LL} happens in Line~\refl{lin:ll-head2}, and it returns the value $\tup{s_1,r_1}$. By the code, the condition in Line~\ref{lin:sc-headA2} holds and $r_1 \neq \bot$. In addition,
     $s_1 = s_2$ and $r_2 = \bot$.
\end{proof}

\begin{lemma}
\label{lem:cleared-linearized-inv}
    Consider transition from mode $A_{i-1}$ to mode $B_i$, $i\geq 1$, and let $op^*$ be the operation applied by this transition, then when the algorithm transitions to mode $A_i$ operation $op^*$ is cleared.
\end{lemma}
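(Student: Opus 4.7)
The proof proceeds by case analysis on which code line performs the successful \SC\ on $head$ that triggers the transition from mode $B_i$ to mode $A_i$. By Invariant~\ref{lem:universal-value-pairs}, this successful \SC\ replaces $\tup{state(i), \tup{response(i), j}}$ (where $j=\pi(op^*)$) by $\tup{state(i), \bot}$. Inspecting the code, the only lines that can perform such an \SC\ on $head$ are Line~\ref{lin:sc-headA1} and Line~\ref{lin:sc-headA2}.

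For the Line~\ref{lin:sc-headA2} case, I will argue that the process performing the \SC\ must be $p_j$. A process $p_k$ reaches Line~\ref{lin:sc-headA2} only when its preceding ${\LL}(head)$ in Line~\refl{lin:ll-head2} returned a value of the form $\tup{\wildcard, \tup{\wildcard, k}}$; in mode $B_i$ the head carries response tag $j$, forcing $k=j$. But $p_j$ reaches Line~\refl{lin:ll-head2} only after exiting the while loop in Lines~\ref{lin:start-loop}--\ref{lin:end-loop}, and by Invariant~\ref{lem:cleared-inv} this implies that $op^*$ has already been cleared by a successful \SC\ in Line~\ref{lin:replace-inv-rsp}.

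For the Line~\ref{lin:sc-headA1} case, let $P$ be the process performing the \SC. Tracing backwards through the code, $P$ previously executed ${\LL}(head)$ in Line~\refl{lin:ll-head1} returning $\tup{state(i), \tup{response(i), j}}$, then ${\LL}(announce[j])$ in Line~\refl{lin:ll-announce} obtaining some value $a$, and then passed a successful ${\VL}(head)$ in Line~\ref{lin:vl-head}. The key observation is the following enumeration of writes to $announce[j]$ during the lifetime of $op^*$: the initial ${\St}(announce[j], I(op^*))$ by $p_j$ in Line~\ref{lin:announce-inv}; any successful ${\SC}(announce[j], \wildcard)$ in Line~\ref{lin:replace-inv-rsp}, which writes a value from $R$ and hence clears $op^*$; and the final ${\St}(announce[j], \bot)$ by $p_j$ in Line~\ref{lin:clear-my-announce}, which is reached only after $op^*$ has returned and is therefore already cleared. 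If $a \in O$, then necessarily $a = I(op^*)$, and $P$ attempts ${\SC}(announce[j], rsp)$ in Line~\ref{lin:replace-inv-rsp}; either this \SC\ succeeds, directly clearing $op^*$, or it fails because some other successful write to $announce[j]$ intervened between $P$'s \LL\ and its \SC, which by the enumeration forces $op^*$ to already be cleared. If instead $a \notin O$, then $a \in R \cup \set{\bot}$, and the same enumeration shows $op^*$ was already cleared before $P$'s ${\LL}(announce[j])$.

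The main subtle step is establishing the enumeration of writes to $announce[j]$ during $op^*$'s lifetime, in particular the facts that the only process executing Line~\ref{lin:announce-inv} or Line~\ref{lin:clear-my-announce} on $announce[j]$ is $p_j$, and that $p_j$ does not begin a new \textsc{Apply} call while $op^*$ is pending. Combining these with Invariant~\ref{lem:cleared-inv} yields that any write to $announce[j]$ other than the initial Line~\ref{lin:announce-inv} write either directly clears $op^*$ or occurs only after $op^*$ is already cleared, from which both cases of the main argument follow.
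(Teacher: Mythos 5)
Your proof is correct and follows essentially the same route as the paper's: a case split on whether the $B_i\to A_i$ transition is caused by the successful {\SC} in Line~\ref{lin:sc-headA1} or in Line~\ref{lin:sc-headA2}, handling the latter via Invariant~\ref{lem:cleared-inv} and the former by analyzing the ${\LL}(announce[j])$/${\SC}(announce[j])$ pair to show that $op^*$'s announcement is either overwritten by that {\SC} (or by whatever write made it fail) or was already overwritten beforehand. One minor slip worth fixing: the ${\St}(announce[j],\bot)$ in Line~\ref{lin:clear-my-announce} is executed by $p_j$ just \emph{before} $op^*$ returns, not after; the correct justification that $op^*$ is already cleared at that point is Invariant~\ref{lem:cleared-inv} (exiting the loop implies cleared), which is the justification you use elsewhere anyway, so the argument stands.
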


\begin{proof}
    The transition from mode $B_i$ to mode $A_i$ is due to a successful {\SC} to $head$ by operation $op$, either in Line~\ref{lin:sc-headA1} or in Line~\ref{lin:sc-headA2}.
    Before the successful {\SC} by process $p_i$ that transitions the algorithm to mode $B_i$, $p_i$ reads in Line~\ref{lin:read-help-announce} or Line~\ref{lin:read-my-Announce} the value $op^*$ writes in Line~\ref{lin:announce-inv}, thus, this write to $announce[\pi(op^*)]$ happens before the transition from mode $A_{i-1}$ to mode $B_i$, which happens before the transition from mode $B_i$ to mode $A_i$.
    
    Assume the successful {\SC} that transitions the algorithm from mode $B_i$ to mode $A_i$ happens in Line~\ref{lin:sc-headA1}, and let $a$ be the value the ${\LL}(announce[j])$ in Line~\refl{lin:ll-announce} returns. 
    Since the {\SC} is successful, the ${\LL}(head)$ in Line~\refl{lin:ll-head1} returns in mode $B_i$, and the algorithm stays in mode $B_i$ up until the successful {\SC}. Since $op$ performs ${\LL}(announce[j])$ after the ${\LL}(head)$ and before the {\SC}, the ${\LL}(announce[j])$ returns in mode $B_i$ and either $a$ is the value $op^*$ writes in Line~\ref{lin:announce-inv}, or a newer value written after this write.
    If $a\notin O$, then $op^*$ is cleared before the {\LL} returns.
    Otherwise, $op$ performs an ${\SC}$ to $announce[j]$ in Line~\ref{lin:replace-inv-rsp}.
    Both outcomes of the {\SC} imply that the value $a$ is overwritten with another value.
    If the {\SC} is successful then it overwrites the value $a$, otherwise, if it fails, another {\SC} or {\St} that follows the {\LL} in Line~\refl{lin:ll-announce} and precedes the {\SC} overwrites the value $a$.
    Hence, when $op$ performs Line~\ref{lin:sc-headA1}, $op^*$ must be cleared.

    If the successful {\SC} happens in Line~\ref{lin:sc-headA2}, then by the condition in Line~\ref{lin:sc-headA2}, $op = op^*$.
    Since $op^*$ exits the loop in Lines~\ref{lin:start-loop}--\ref{lin:end-loop}, by Invariant~\ref{lem:cleared-inv}, $op^*$ is cleared before the successful {\SC} is performed.
\end{proof}

\appliedlinearized*

\begin{proof}
    The transition from mode $A_{i-1}$ to mode $B_i$ is due to a successful {\SC} to $head$ by operation $op$ in Line~\ref{lin:sc-headB}. The preceding ${\LL}(head)$ by $op$ happens in line~\refl{lin:ll-head1} and it returns $\tup{q,\bot}$. 
    This {\LL} returns in mode $A_{i-1}$ and
    the algorithm stays in mode $A_{i-1}$ up until the successful {\SC}, thus, $state(i-1) = q$.
    $op$ writes to $head$ the tuple $\tup{q_i, \tup{{rsp}_i, \wildcard}}$ in the successful {\SC} in Line~\ref{lin:sc-headB}, where, by Line~\ref{lin:apply}, $\Delta(state(i-1),I(op^*)) = (q_i, {rsp}_i)$.

    If $op^*$ was applied by an earlier transition from mode $A_{j-1}$ to mode $B_j$ for $j < i$,
    then, by Lemma~\ref{lem:cleared-linearized-inv}, $op^*$ is cleared before the transition to mode $A_j$. This contradicts that $op$ reads in Line~\ref{lin:read-help-announce} or Line~\ref{lin:read-my-Announce} the value $op^*$ writes in Line~\ref{lin:announce-inv}, since this read happens in mode $A_{i-1}$.
    \end{proof}

    \returnhelper*

\begin{proof}
    The proof is by induction on $i$.
    To prove the two properties in the lemma statement, we add an additional property to the induction hypothesis:
    \begin{enumerate}
    \setcounter{enumi}{2}
        \item If ${\LL}(head)$ in Line~\refl{lin:ll-head1}, by operation $op$, returns in mode $B_i$ and $op$ writes to $announce[j]$ in Line~\ref{lin:replace-inv-rsp} at the same iteration, then this write must replace the value $op^*$ writes in Line~\ref{lin:announce-inv}.
    \end{enumerate}

    {\bf Base case:} In mode $A_0$, no operation is linearized yet.
    No operation is cleared by the transition from mode $A_0$ to more $B_1$, 
    since for a process to perform Line~\ref{lin:replace-inv-rsp},
    it must first perform $\LL(head)$ in Line~\ref{lin:ll-head1} that returns in a B mode. 
    By Invariant~\ref{lem:cleared-inv}, this also implies no operation can return yet.
    
    {\bf Induction step:}
    Consider the transition from mode $A_{i-1}$ to mode $B_i$, $i\geq 1$, and assume the induction hypothesis holds for any $j < i$.
    If operation $op^*$ returns, by Invariant~\ref{lem:cleared-inv}, it must be cleared beforehand in Line~\ref{lin:replace-inv-rsp} by some operation $op$. Operation $op$ reads $head$ in Line~\refl{lin:ll-head1} and then writes to $announce[j]$ in Line~\ref{lin:replace-inv-rsp}. When the algorithm is in mode $A_{i-1}$, the induction hypothesis implies that a successful {\SC} in Line~\ref{lin:replace-inv-rsp} does not replace the value $op^*$ writes in Line~\ref{lin:announce-inv}.
    Hence, $op^*$ is not cleared when the algorithm transitions to mode $B_i$.
    By Lemma~\ref{lem:cleared-linearized-inv}, $op^*$ is cleared before the transition to mode $A_i$. Hence, $op^*$ is cleared in mode $B_i$.

    Before returning, $op^*$ performs Line~\ref{lin:ll-head2}.
    If $op^*$ finishes performing Line~\ref{lin:ll-head2} on the left-hand side, then the  ${\LL}(head)$ in Line~\refl{lin:ll-head2} returns.
    If the algorithm is in mode $B_i$ when the ${\LL}(head)$ returns, 
    it returns a tuple that contains $\pi(op^*)$. Then, in Line~\ref{lin:sc-headA2}, $op^*$ attempts a mode transition. If the {\SC} to $head$ succeeds, then necessarily a mode transition occurs. Otherwise, a different {\SC} to $head$ by a different operation must succeed between the ${\LL}(head)$ and failed ${\SC}$ operation, and this must transition the algorithm mode.
    If $op^*$ finishes performing Line~\ref{lin:ll-head2} on the right-hand side, $op^*$ performs a ${\Ld}(head)$
    in Line~\ref{lin:wait-til3} that returns a tuple that doesn't contain $\pi(op^*)$. Thus, the algorithm transitions from mode $B_i$ before the {\Ld} returns.

    Consider an operation $op$ that performs an ${\LL}(head)$ in Line~\refl{lin:ll-head1} that returns in mode $B_i$ and writes to $announce[j]$ in Line~\ref{lin:replace-inv-rsp} at the same iteration.
    Before the successful {\SC} to $announce[j]$ in Line~\ref{lin:replace-inv-rsp}, $op$ performs an ${\LL}(head)$ in Line~\refl{lin:ll-head1} and a successful ${\VL}(head)$ in Line~\ref{lin:vl-head}. 
    This guarantees that there was no mode change between the {\LL} in Line~\refl{lin:ll-head1} and the {\VL} in Line~\ref{lin:vl-head} and the algorithm is in mode $B_i$ between the two operations.
    We showed that in mode $B_i$ $op^*$ is not cleared yet and by Invariant~\ref{lem:cleared-inv}, is still pending.
    Between the ${\LL}(head)$ and ${\VL}(head)$, $op$ performs an ${\LL}(announce[j])$ in Line~\refl{lin:ll-announce} that precedes the successful {\SC} to $announce[j]$, which returns a value from $O$. Since there can only be one pending operation by each process, the {\LL} in Line~\refl{lin:ll-announce} returns the value $op^*$ writes in Line~\ref{lin:announce-inv}. This implies that if the {\SC} in Line~\ref{lin:replace-inv-rsp} succeeds, it replaces this value.

    By the condition in Line~\ref{lin:head-state-check}, the $\LL(head)$ in Line~\refl{lin:ll-head1} that precedes a successful {\SC} to $announce[j]$ in Line~\ref{lin:replace-inv-rsp}, returns in a $B$ mode. 
    Since $op^*$ is cleared in mode $B_i$, we showed that if the $\LL(head)$ returns in mode $B_i$, then the successful {\SC} to $announce[j]$ also happens in mode $B_i$. By the induction hypothesis, if the  $\LL(head)$ returns in mode $B_j$, $j<i$, the {\SC} to $announce[j]$ happens in mode $B_j$. 
    If the $\LL(head)$ returns in mode $B_j$, $j>i$,  the {\SC} to $announce[j]$ cannot happen in a previous mode, and thus, cannot happen in mode $A_i$.
    So, a  successful {\SC} to $announce[j]$ cannot happen in mode $A_i$ and no operation is cleared in mode $A_i$.
\end{proof}

\quiescentannouncebot*

\begin{proof}
    At initialization, the value of $announce[i]$ is $\bot$, and the last step before a state-changing operation by $p_i$ returns is to write $\bot$ to $announce[i]$ in Line~\ref{lin:clear-my-announce}.
    
    The only place where a process different from $p_i$ can change the value of $announce[i]$, is in Line~\ref{lin:replace-inv-rsp}. Since this happens in an ${\SC}$ to $announce[i]$ and the preceding ${\LL}(announce[i])$ in Line~\refl{lin:ll-announce} returns a value from $O$, this write cannot overwrite the value $\bot$.
    Thus, if $p_i$ has no pending state-changing operation, the value $\bot$, there from initialization or written by the last state-changing operation by $p_i$, cannot be overwritten.
\end{proof}

\subsection{Lock-Free Perfect-HI R-LLSC Object from Atomic {\CAS}}
\label{app:R-LLSC}

We translate each value $x\in V\times \braces{0,1}^n$, stored in the $\CAS$ object $X$, to a state of the R-LLSC object, 
$\seqstate{x} = \parens*{x.val, \set{p_i\,|\,x.context[i] = 1}}$.
The proof of Theorem~\ref{thm:RLLSCF-from-CAS} follows immediatly from the next lemma:

\begin{restatable}{lemma}{rllscfseqspecification}
\label{lem:rllscf-seq-specification}
    \sloppy $h_{llsc}(\alpha)$ is a linearization of $\alpha$  
    and $\seqstate{h_{llsc}(\alpha)} = \seqstate{\mem{\alpha}}$.
\end{restatable}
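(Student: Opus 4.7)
The plan is to verify the three standard conditions of a linearization---well-definedness of the linearization points within operation intervals, completeness, and respect for real-time order---and then establish the sequential-specification condition together with the equality $\seqstate{h_{llsc}(\alpha)} = \seqstate{\mem{\alpha}}$ by a single induction on the sequence of linearization points appearing in $\alpha$.

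First I would check by direct inspection of the code of each operation that the step identified as $lin(op)$ always lies strictly between the invocation and response of $op$. For $\LL,\VL,\Ld$ and $\St$ this is immediate since each of these names a single line inside the operation; for $\SC$ and $\RL$ it is the observation that whenever the operation returns, one of the two defining events (a successful ${\CAS}$ in Line~\ref{lin:sc-cas} or~\ref{lin:rl-cas}, or a $\textsc{Read}(X)$ returning a value with $context[\pi(op)] = 0$) must have been executed. The same inspection shows that every completed operation has a defined linearization point, giving completeness. Real-time order is then immediate: if $op_1$ completes before $op_2$ is invoked, then $lin(op_1)$ precedes $lin(op_2)$ in $\alpha$ and therefore in $h_{llsc}(\alpha)$.

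The heart of the argument is an induction on the linearization order, maintaining the invariant that immediately after the $k$-th linearization point of $\alpha$, the current value of the atomic ${\CAS}$ object $X$ satisfies $\seqstate{X}$ equals the abstract state obtained by applying the first $k$ operations of $h_{llsc}(\alpha)$ from the initial state. The driving observations are: (a) the value of $X$ can change only via the ${\CAS}$ in Lines~\ref{lin:ll-cas},~\ref{lin:sc-cas},~\ref{lin:rl-cas} or the Write in Line~\ref{lin:rllscf-write}, which are precisely the state-changing linearization points; (b) when such a ${\CAS}$ succeeds the local $cur$ equals the state of $X$ at that step, so the transition $cur\to new$ coincides exactly with the sequential transition prescribed by the R-LLSC specification (${\LL}$ sets $context[\pi(op)]=1$, ${\SC}$ overwrites $val$ and clears $context$, ${\RL}$ sets $context[\pi(op)]=0$, ${\St}$ writes $v$ with empty $context$); and (c) read-only linearization points ($\Ld$, $\VL$, and the failing branches of $\SC$ and $\RL$) do not modify $X$, observe its actual state at their step, and by the induction hypothesis return exactly the response prescribed by the sequential specification.

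The delicate case, which I expect to be the main obstacle, is the failure branch for $\SC$ and $\RL$ where the linearization point is a $\textsc{Read}(X)$ returning $v$ with $v.context[\pi(op)] = 0$. Here I must verify that the sequential specification applied to the abstract state at that moment produces the response actually returned: for $\SC$ the specification returns \textit{false} precisely when $\pi(op)\notin context$, which is witnessed by $v$ itself since by the invariant $\seqstate{v}$ is the current abstract state; for $\RL$ the specification always returns \textit{true} and the removal of $\pi(op)$ is vacuous when $\pi(op)$ is already absent, so no state change occurs and the induction hypothesis is preserved. Taking $k$ equal to the total number of linearization points in $\alpha$ yields $\seqstate{h_{llsc}(\alpha)} = \seqstate{\mem{\alpha}}$, which together with completeness, real-time order, and the per-step matching established in the induction gives that $h_{llsc}(\alpha)$ is a linearization of $\alpha$, completing the proof.
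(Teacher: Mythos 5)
Your proposal is correct and follows essentially the same route as the paper's proof: an induction on the linearization points of $\alpha$, maintaining that the value of $X$ tracks the abstract R-LLSC state, with a per-operation case analysis that includes the failure branches of $\SC$ and $\RL$ linearized at a $\textsc{Read}$ returning a value with the caller's context bit cleared. Your explicit treatment of completeness and real-time order is slightly more detailed than the paper's, but the substance is identical.
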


\begin{proof}
If an operation $op$ in $H(\alpha)$ has linearization point,
then $lin(op)$ is between $op$'s invocation and response, since it maps $op$ to a step it performs. 
This implies that $h_{llsc}(\alpha)$ contains all completed operations in $H(\alpha)$.
It is remains to show that $h_{llsc}(\alpha)$ is in the sequential specification of the R-LLSC object.
The proof is by induction on the linearization points in $\alpha$.
    
{\bf Base case:} $h_{llsc}(\alpha)$ maps the empty execution $\alpha = C_0$ to an empty history. 
Trivially, $h_{llsc}(\alpha)$ is a linearization of $\alpha$, 
and in the algorithm initialization we have that $\seqstate{h_{llsc}(C_0)} = \seqstate{\mem{C_0}}$.

{\bf Induction step:} assume $\alpha$ includes exactly $i+1$ linearization points, 
$i\geq 0$, and the induction hypothesis holds for the prefix $\alpha'$ of $\alpha$ 
that contains exactly the first $i$ linearization points.
    Let $op$ be the operation mapped by $lin(op)$ to the $i+1$ linearization point. 
    Note that at each value change of $X$ a linearization point is defined.
    Hence, the value of $X$ does not change between two consecutive linearization points.
    Consider the different cases according to the invoked operation $op$:
    \begin{description}
        \item[${\LL}(\mathcal{O})$:]
        If $op$ performs a successful ${\CAS}(x, cur, new)$ in Line~\ref{lin:ll-cas}, then $\mem{\alpha'} = cur$. 
        Since $op$ returns $cur.val$ and $\seqstate{h_{llsc}(\alpha')} = \seqstate{cur}$, $h_{llsc}(\alpha)$ is in the sequential specification. 
        The only possible difference between $cur$ and the newly written value $new$ is that $new.context[\pi(op)] = 1$. Following the sequential specification, $\seqstate{h_{llsc}(\alpha)} = \seqstate{new} = \seqstate{\mem{\alpha}}$.
     
        \item[${\VL}(\mathcal{O})$:]
        Let $cur$ be the value the $\textsc{Read}(X)$ by $op$ returns in Line~\ref{lin:vl-read}, then $\mem{\alpha'} = cur$. Since $op$ returns $cur.context[\pi(op)]$ and $\seqstate{h_{llsc}(\alpha')} = \seqstate{cur}$, $h_{llsc}(\alpha)$ is in the sequential specification. 
        Since the object's state and memory did not change, $\seqstate{h_{llsc}(\alpha)} = \seqstate{h_{llsc}(\alpha')} = \seqstate{\mem{\alpha'}} = \seqstate{\mem{\alpha}}$.

        \item[${\RL}(\mathcal{O})$:]
        Since $op$ always returns \textit{true}, $h_{llsc}(\alpha)$ is in the sequential specification.
        If $op$ performs a successful ${\CAS}(x, cur, new)$ in Line~~\ref{lin:rl-cas}, then $\mem{\alpha'} = cur$. 
        The only possible difference between $cur$ and the newly written value $new$ is that $new.context[\pi(op)] = 0$. Following the sequential specification, $\seqstate{h_{llsc}(\alpha)} = \seqstate{new} = \seqstate{\mem{\alpha}}$.
        
        If a $\textsc{Read}(X)$ by $op$ returns $cur$, either in Line~\ref{lin:rl-read1} or Line~\ref{lin:rl-read2}, such that $cur.context[\pi(op)] = 0$. Then $\mem{\alpha'} = cur$ and $\pi(op) \notin \seqstate{cur}.context$.
        Hence, the object's state does not change and so does the memory, and 
        $\seqstate{h_{llsc}(\alpha)} = \seqstate{h_{llsc}(\alpha')} = \seqstate{\mem{\alpha'}} = \seqstate{\mem{\alpha}}$.

        \item[${\SC}(\mathcal{O}, v)$:]
        If $op$ performs a successful ${\CAS}(x, cur, new)$ in Line~\ref{lin:sc-cas}, then $\mem{\alpha'} = cur$. 
        In this case, $op$ returns \textit{true}, and by the condition in Line~\ref{lin:sc-valid-cond}, 
        $cur.context[\pi(op)] = 1$.
        Hence, $h_{llsc}(\alpha)$ is in the sequential specification.
        The newly written value contains the value $v$ with an empty $context$ field and
        by the sequential specification, 
        $\seqstate{h_{llsc}(\alpha)} = \seqstate{new} = \seqstate{\mem{\alpha}}$.
        
        If a $\textsc{Read}(X)$ by $op$ returns $cur$, either in Line~\ref{lin:sc-read1} or in Line~\ref{lin:sc-read2}, 
        such that $cur.context[\pi(op)] = 0$. Then $\mem{\alpha'} = cur$ and $\pi(op) \notin \seqstate{cur}.context$.
        Since $op$ returns \textit{false}, $h_{llsc}(\alpha)$ is in the sequential specification.
        In addition, the object's state does not change and so does the memory, and 
        $\seqstate{h_{llsc}(\alpha)} = \seqstate{h_{llsc}(\alpha')} = \seqstate{\mem{\alpha'}} = \seqstate{\mem{\alpha}}$
        
        \item[${\Ld}(\mathcal{O})$:]
        Let $cur$ be the value the $\textsc{Read}(X)$ by $op$ returns in Line~\ref{lin:rllscf-read}, then $\mem{\alpha'} = cur$. 
        Since $op$ returns $cur.val$ and $\seqstate{h_{llsc}(\alpha')} = \seqstate{cur}$, $h_{llsc}(\alpha)$ is in the sequential specification. 
        Since the object's state and memory did not change, $\seqstate{h_{llsc}(\alpha)} = \seqstate{h_{llsc}(\alpha')} = \seqstate{\mem{\alpha'}} = \seqstate{\mem{\alpha}}$.
        
        \item[${\St}(\mathcal{O}, v)$:]
        Since $op$ always returns $true$, $h_{llsc}(\alpha)$ is in the sequential specification.
        Regardless of $\seqstate{h_{llsc}(\alpha')}$ and $\mem{\alpha'}$, 
        the \textsc{Write} in Line~\ref{lin:rllscf-write} writes the value $v$ with an empty $context$ to $X$. 
        Hence, by the sequential specification, $\seqstate{h_{llsc}(\alpha)} = \seqstate{\mem{\alpha}}$.
    \end{description}
\end{proof}

\scrlfinish*

\begin{proof}
    An {\RL} or {\SC} operation can either return after a successful {\CAS} operation in Line~\ref{lin:rl-cas} or Line~\ref{lin:sc-cas}, respectively, or after a $\textsc{Read}(X)$ that returns $cur$ such that $cur.context[\pi(op)] = 0$. Assume there is a linearization point of a successful context-resetting operation $op_{cr}$ after $op$ starts and before $op$ returns. Let $v$ be the value of $X$ right after the linearization point of $op_{cr}$.
    By Lemma~\ref{lem:rllscf-seq-specification}, for every $1\leq i \leq n$, $v.context[i] = 0$. The value of $context[\pi(op)]$ in $X$ can be changed to 1 only in an {\LL} operation by process $\pi(op)$, which can be invoked only after $op$ returns. 
    Hence, in a finite number of steps by $op$ after $lin(op_{cr})$, either $op$ performs a successful {\CAS} or a \textsc{Read} that identifies that $context[\pi(op)] = 0$ in $X$. In both cases, $op$ must return.
\end{proof}

\subsection{Wait-Free State-Quiescent HI Universal Implementation from Atomic CAS}
\label{app:universal-impl-wait-free}


\begin{lemma}
\label{lem:uc-individual-progress}
    An ${\SC}(X)$ or ${\RL}(X)$, for any $X\in \braces{head, announce[1],\dots, announce[n]}$, performed by operation $op$, returns after a finite number of steps by $op$. 
\end{lemma}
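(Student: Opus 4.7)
The plan is to argue by contradiction on two cases according to whether $X = \mathit{head}$ or $X = \mathit{announce}[j]$ for some $j$.  Suppose $\pi(op)$ takes infinitely many steps in $op$ without $op$ returning.  Because $\SC$ and $\RL$ on $\mathit{head}$ and on every $\mathit{announce}[j]$ are only invoked from within $\textsc{Apply}$, $\pi(op)$ is in the midst of a state-changing high-level operation and all of these infinitely many steps are taken inside that operation.  The central lever is Lemma~\ref{lem:sc-rl-finish}: in either case it suffices to exhibit a context-resetting ($\SC$ or $\St$) operation on $X$ that is invoked after $op$ and returns \textit{true} before $op$ returns, since this forces $op$ to return in finitely many further steps of $\pi(op)$ and contradicts the assumption.

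For $X = \mathit{head}$ this is essentially immediate.  Lemma~\ref{lem:uc-global-progress} applies to the infinite steps taken by $\pi(op)$ inside a state-changing operation and guarantees infinitely many mode transitions, each caused by a successful $\SC(\mathit{head},\wildcard)$ in Line~\ref{lin:sc-headB}, Line~\ref{lin:sc-headA1}, or Line~\ref{lin:sc-headA2}.  Every such $\SC$ is context-resetting on $\mathit{head}$; picking the first one invoked after $op$ and applying Lemma~\ref{lem:sc-rl-finish} yields the desired contradiction.

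For $X = \mathit{announce}[j]$ the argument is more subtle because Lemma~\ref{lem:uc-global-progress} does not directly produce context-resetting operations on $\mathit{announce}[j]$.  I plan to isolate where $\St(\mathit{announce}[j])$ and $\SC(\mathit{announce}[j])$ can occur in the code: $\St(\mathit{announce}[j])$ is invoked only by $p_j$ itself, at Line~\ref{lin:announce-inv} and Line~\ref{lin:clear-my-announce}, while $\SC(\mathit{announce}[j])$ is invoked only at Line~\ref{lin:replace-inv-rsp} by a helper acting on $p_j$'s announcement after a successful $\VL(\mathit{head})$ in Line~\ref{lin:vl-head}.  Combining Lemma~\ref{lem:uc-global-progress} with the priority-based helping mechanism, which drives every process's $priority$ to $j$ after at most $n$ mode transitions, I would show that infinitely many helpers eventually reach Line~\ref{lin:ll-announce} and subsequently attempt $\SC(\mathit{announce}[j])$ at Line~\ref{lin:replace-inv-rsp}; coupled with Lemma~\ref{lem:rllscf-lock-free} applied to $op$ itself, which supplies infinitely many returning context-changing operations on $\mathit{announce}[j]$, I would conclude that at least one context-resetting CAS on $\mathit{announce}[j]$ must succeed after $op$'s invocation.

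The main obstacle I anticipate is ruling out a livelock in the $\mathit{announce}[j]$ case in which every returning context-changing operation on $\mathit{announce}[j]$ is an $\LL$ or $\RL$ that only toggles the bits of other processes, never clearing $\pi(op)$'s bit nor succeeding as an $\SC$/$\St$.  Closing this gap requires carefully tracing the control flow after each $\LL(\mathit{announce}[j])$ at Line~\ref{lin:ll-announce}, observing that the helper then reaches either Line~\ref{lin:replace-inv-rsp} (an $\SC$ attempt) or Line~\ref{lin:rl-announce} (an $\RL$ call), and using the mode-transition structure together with Lemma~\ref{lem:return-helper} to ensure that $\SC(\mathit{announce}[j])$ attempts happen, and succeed, often enough to clear $\pi(op)$'s bit, at which point Lemma~\ref{lem:sc-rl-finish} finishes the proof.
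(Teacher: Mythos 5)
Your first case ($X=\mathit{head}$) is correct and is exactly the paper's argument: Lemma~\ref{lem:uc-global-progress} yields infinitely many mode transitions, each a successful context-resetting $\SC(head,\wildcard)$, so eventually one is invoked after $op$'s pending $\SC$/$\RL$ and returns \textit{true} before it, and Lemma~\ref{lem:sc-rl-finish} closes the case.

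The $X=\mathit{announce}[j]$ case has a genuine gap, and it is precisely the obstacle you flag at the end. Your plan is to \emph{exhibit} a context-resetting operation on $\mathit{announce}[j]$ that succeeds after $op$'s call, but in general no such operation need ever exist: consider the $\RL(\mathit{announce}[j])$ in Line~\ref{lin:rl-announce} (or Line~\ref{lin:rl-announce-r}) executed when $a=\bot$ and $p_j$ never invokes another operation afterwards. Then there is no further $\St(\mathit{announce}[j],\wildcard)$ (Lines~\ref{lin:announce-inv} and~\ref{lin:clear-my-announce} are executed only by $p_j$) and no $\SC(\mathit{announce}[j],\wildcard)$ in Line~\ref{lin:replace-inv-rsp} (a helper performs it only after reading a value in $O$ from $\mathit{announce}[j]$), so Lemma~\ref{lem:sc-rl-finish} is simply inapplicable, and your proposed resolution---that $\SC(\mathit{announce}[j])$ attempts ``happen and succeed often enough''---cannot be carried out. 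Your intermediate inference is also a non sequitur: the infinitely many returning context-changing operations guaranteed by Lemma~\ref{lem:rllscf-lock-free} could a priori all be $\LL$s and $\RL$s, so they do not force a context-resetting CAS to succeed. The paper argues in the opposite direction: assuming $op$'s $\SC$/$\RL$ never returns, the \emph{contrapositive} of Lemma~\ref{lem:sc-rl-finish} shows that no context-resetting operation on $\mathit{announce}[j]$ invoked afterwards returns, hence $p_j$ starts no new operation; since $\LL(\mathit{announce}[j])$ (Line~\refl{lin:ll-announce}) and the associated $\RL$s are performed only after an $\LL(head)$ returning in a mode $B_i$ whose transition linearized an operation of $p_j$, and Lemma~\ref{lem:uc-global-progress} guarantees the algorithm leaves $B_i$ and never again enters such a mode, eventually \emph{no} context-changing operation on $\mathit{announce}[j]$ returns---contradicting Lemma~\ref{lem:rllscf-lock-free}. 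You need this ``interference eventually ceases'' argument rather than a ``helper eventually succeeds'' argument to close the case.
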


\begin{proof}
    Assume ${\SC}(head, \wildcard)$ in Line~\ref{lin:sc-headA1}, Line~\ref{lin:sc-headB} or Line~\ref{lin:sc-headA2}, or ${\RL}(head)$ in Line~\ref{lin:rl-head} by $op$ never returns, despite $op$ taking an infinite number of steps.
    Since $op$ is state changing, 
    Lemma~\ref{lem:uc-global-progress} implies that there is an infinite number of mode transitions since $op$ starts the {\SC} or {\RL}. 
    Each mode transition is due to a successful ${\SC}(head, \wildcard)$. 
    Eventually, any successful {\SC} starts after the {\SC} or {\RL} by $op$ starts. 
    By Lemma~\ref{lem:sc-rl-finish}, the {\SC} or {\RL} by $op$ must return in a finite number of steps by $op$.
    
    Assume an ${\SC}(announce[j], \wildcard)$ in Line~\ref{lin:replace-inv-rsp} or ${\RL}(announce[j])$ in Line~\ref{lin:rl-announce}, $1\leq j \leq n$, by operation $op$ never returns, despite $op$ taking an infinite number of steps. By Lemma~\ref{lem:sc-rl-finish}, no context-resetting operation on $announce[j]$ that starts after $op$ starts the {\SC} or {\RL}, returns. This implies that process $p_j$ does not perform a ${\St}(announce[j],\wildcard)$ in Line~\ref{lin:announce-inv} that returns in a new operation after the {\SC} or {\RL} by $op$ starts.
    Thus, the only operations that can prevent $op$ to return are 
    ${\LL}(announce[j])$ in Line~\refl{lin:ll-announce} or ${\RL}(announce[j], \wildcard)$ in Line~\ref{lin:rl-announce} 
    that complete.
    
    By the code, operation $op'$ performs one of these lines if $op'$ performs a successful ${\LL}(head)$ in Line~\ref{lin:ll-head1} in state $B_i$, $i\geq 1$, and an operation by $p_j$ was linearized by the transition to state $B_i$.
    By Lemma~\ref{lem:uc-global-progress}, since $op$ is state changing,
    a finite number of steps after the {\SC} or {\RL} start, 
    the algorithm transitions from mode $B_i$. 
    Since no new operation starts by process $p_j$, 
    there is no future transition to mode $B$ that linearizes an operation by $p_j$. Hence, eventually, in a finite number of steps by $op$, no new context-changing operations are performed on $announce[j]$, and by Lemma~\ref{lem:rllscf-lock-free}, the {\SC} or {\RL} by $op$ returns.
\end{proof}

Once an operation is cleared, the wait condition parallel to the {\LL}'s in the loop in Lines~\ref{lin:start-loop}--\ref{lin:end-loop} no longer holds. Hence, along with the previous lemma, we have the next claim.

\ucclearfinitereturn*

\begin{proof}
     By Invariant~\ref{lem:cleared-inv}, $announce[\pi(op)]$ is cleared with a response from $R$ and only $op$ can write $\bot$ to $announce[\pi(op)]$ after exiting the loop in Lines~\ref{lin:start-loop}--\ref{lin:end-loop}.
     Thus, the waiting condition in  Line~\ref{lin:start-loop}, Line~\ref{lin:wait-til1} and Line~\ref{lin:wait-til2} does not hold after $op$ is cleared.
     By Lemma~\ref{lem:uc-individual-progress}, $op$ reaches Line~\ref{lin:start-loop}, Line~\ref{lin:wait-til1} or Line~\ref{lin:wait-til2} in a finite number of steps by $op$, and exits the loop.
     Until $op$ returns, no new operation by $\pi(op)$ starts, and by Lemma~\ref{lem:uc-global-progress}, in a finite number of steps by $op$ the algorithm mode transitions and $\pi(op)$ is not written in $head$ until $op$ returns. Thus, the waiting condition in Line~\ref{lin:wait-til3} no longer holds after a finite number of steps by $op$, and by Lemma~\ref{lem:uc-individual-progress}, $op$ returns.
\end{proof}

\end{document}